\DeclareMathAlphabet{\mathpzc}{OT1}{pzc}{m}{it}
\newtheorem{propo}{Proposition}[section]
\newtheorem{lemma}[propo]{Lemma}
\newtheorem{definition}[propo]{Definition}
\newtheorem{coro}[propo]{Corollary}
\newtheorem{thm}[propo]{Theorem}
\newtheorem{remark}[propo]{Remark}
\newtheorem{claim}[propo]{Claim}
\def\tx{{\widetilde{x}}}
\def\tr{{\widetilde{r}}}
\def\ty{{\widetilde{y}}}
\def\Embed{{\sf H}}
\def\Restr{{\sf H}'}
\def\cA{{\cal A}}
\def\cS{{\cal S}}
\def\hx{{\widehat{x}}}
\def\naturals{{\mathbb N}}
\def\integers{{\mathbb Z}}
\def\reals{{\mathbb R}}
\def\ve{{\varepsilon}}
\def\eps{{\epsilon}}
\def\prob{{\mathbb P}}
\def\E{{\mathbb E}}
\def\MSE{{\sf MSE}}
\def\MSEexp{\overline{\sf MSE}}
\def\MSEAMP{{\sf MSE}_{\rm AMP}}
\def\MSEAMPexp{\overline{\sf MSE}_{\rm AMP}}
\def\ons{{\sf b}}
\def\uRenyi{\overline{d}}
\def\lRenyi{\underline{d}}
\def\Renyi{d}
\def\MMSE{D}
\def\uMMSE{\overline{D}}
\def\lMMSE{\underline{D}}
\def\MMSEd{D}
\def\Ens{{\cal M}}
\def\gr{{\sf g}}
\def\mmse{{\sf mmse}}
\def\Info{{\sf I}}
\def\normal{{\sf N}}
\def\SEmap{{\sf T}}
\def\SEmapa{{\sf T}'}
\def\SEmapb{{\sf T}''}
\def\Map{{\sf F}}
\def\Mapa{{\sf F}'}
\def\Mapb{{\sf F}''}
\def\CMapa{{\mathcal F}'}
\def\CMapb{{\mathcal F}''}
\def\realsc{\overline{\mathbb R}_+}
\def\tQ{\widetilde{Q}}
\def\Shape{{\cal W}}
\def\tW{{\widetilde{W}}}
\def\Energy{{\sf E}}
\def\tEnergy{{\tilde{\sf E}}}
\def\Lr{{L_r}}
\def\Lc{{L_c}}
\def\L0{{L_0}}
\def\de{{\rm d}}
\def\<{\langle}
\def\>{\rangle}
\def\rob{{\rm rob}}
\def\Rows{{\sf R}}
\def\Cols{{\sf C}}
\def\limsup{\rm{lim \, sup}}
\def\liminf{\rm{lim \, inf}}
\def \px{p_{X}}
\def\tX{{\widetilde{X}}}
\def \ptx{p_{\widetilde{X}}}
\def \teta{{\tilde{\eta}}}
\def\Dks{D_{\rm KS}}
\definecolor{White}{rgb}{1,1,1}
\def\white{\color{White}}
\begin{document}

\title{Information-Theoretically Optimal Compressed Sensing via\\
Spatial Coupling and Approximate Message Passing}

\author{David L. Donoho${}^{\dagger}$, Adel Javanmard${}^{*}$ and Andrea~Montanari 
            \footnote{Department of Electrical Engineering, Stanford University}
            \footnote{Department of Statistics, Stanford
              University\newline
This work was presented in part at the 2012 IEEE International
Symposium on Information Theory}}

\maketitle
%
%
\begin{abstract}
We study the compressed sensing reconstruction problem for a broad
class of random, band-diagonal sensing matrices. This construction is
inspired by the idea of spatial coupling in coding theory. As demonstrated
heuristically and numerically by Krzakala et al. \cite{KrzakalaEtAl}, message passing
algorithms can effectively solve the reconstruction problem for
spatially coupled measurements with undersampling rates close to 
the fraction of non-zero coordinates. 

We use an approximate message passing (AMP) algorithm  and analyze it 
through the state evolution method. We give a rigorous proof that this
approach is successful as soon as the undersampling rate $\delta$ exceeds the 
(upper) R\'enyi information dimension of the signal, $\uRenyi(p_X)$. More
precisely, for a sequence of signals of diverging dimension $n$ whose empirical distribution
converges to $p_X$, reconstruction is with high probability successful
from $\uRenyi(p_X)\, n+o(n)$ measurements taken according to a band
diagonal matrix.

For sparse signals, i.e., sequences of dimension $n$ and $k(n)$
non-zero entries, this implies reconstruction from $k(n)+o(n)$
measurements. For `discrete' signals, i.e., signals whose coordinates
take a fixed finite set of values, this implies reconstruction from
$o(n)$ measurements. The result is robust with respect to noise, does
not apply uniquely to random signals, but requires the knowledge of
the empirical distribution of the signal $p_X$.
\end{abstract}

%
%
\section{Introduction and main results}
\label{sec:Introduction}

\subsection{Background and contributions}

Assume that $m$ linear measurements are taken of an unknown $n$-dimensional
signal $x\in\reals^n$, according to the model
\begin{eqnarray}
y = Ax\, .\label{eq:LinearSensing}
\end{eqnarray}
 The reconstruction problem requires to
reconstruct $x$ from the measured vector $y\in\reals^m$, and the
measurement matrix $A\in\reals^{m\times n}$.

It is an elementary fact of linear algebra that the reconstruction problem will not
have a unique solution unless $m\ge n$. This observation is however
challenged within compressed sensing. A large corpus of research shows
that, under the assumption that $x$ is sparse, a dramatically smaller
number of measurements is sufficient
\cite{Donoho1,CandesFourier,Do}. Namely,
if only $k$ entries of $x$ are non-vanishing, then roughly $m\gtrsim
2k\log(n/k)$ measurements are sufficient for $A$ random, and
reconstruction can be solved efficiently by convex programming. Deterministic
sensing matrices achieve similar performances, provided they satisfy
a suitable restricted isometry condition \cite{CandesTao}.
On top of this, reconstruction is robust with respect to the addition of noise
\cite{CandesStable,DMM-NSPT-11}, i.e., under the model
\begin{eqnarray}
y = Ax+w\, ,\label{eq:NoisyModel}
\end{eqnarray}
with, say, $w\in\reals^{m}$ a random vector with i.i.d. components
$w_i\sim\normal(0,\sigma^2)$ (unless stated otherwise, $\sigma=0$ is a
valid choice). In this context, the notions of `robustness' or `stability'
refers to  the existence of universal constants $C$ such that the per-coordinate mean
square error in reconstructing $x$ from noisy observation $y$ is upper
bounded by $C\,\sigma^2$.

From an information-theoretic point of view it remains however unclear
why we cannot achieve the same goal with far fewer
than $2\,k\log(n/k)$ measurements. Indeed, we can interpret
Eq.~(\ref{eq:LinearSensing}) as describing an analog data compression
process, with $y$ a compressed version of $x$.  From this point of
view, we can encode all the information about $x$
in a single real number $y\in\reals$ (i.e., use $m=1$), because the cardinality of
$\reals$ is the same as the one of $\reals^n$. Motivated by this
puzzling remark, Wu and Verd\'u \cite{WuVerdu} introduced a Shannon-theoretic
analogue of compressed sensing, whereby the vector $x$ has
i.i.d. components $x_i\sim p_X$. Crucially, the distribution $p_X$ is
available to, and may be used by the reconstruction algorithm. Under the mild assumptions that
sensing is linear (as per Eq.~(\ref{eq:LinearSensing})), and that the
reconstruction mapping is Lipschitz continuous, they proved that 
compression is asymptotically lossless if and only if
\begin{eqnarray}
m\ge n\, \uRenyi(p_X)+o(n)\, .
\end{eqnarray}
Here $\uRenyi(p_X)$ is the (upper) R\'enyi information dimension of the
distribution $p_X$. We refer to Section \ref{sec:Formal} for a precise
definition of this quantity. Suffices to say that, if $p_X$ is
$\ve$-sparse (i.e., if it puts mass at most $\ve$ on nonzeros) then
$\uRenyi(p_X)\le \ve$. Also, if $p_X$ is the convex combination of a
discrete part (sum of Dirac's delta) and an absolutely continuous part
(with a density), then $\uRenyi(p_X)$ is equal to the weight of the
absolutely continuous part.

This result is quite striking. For instance, it implies that, for
random $k$-sparse vectors, $m\ge k+o(n)$ measurements are
sufficient. Also, if the entries of $x$ are random and take values
in, say, $\{-10,-9,\dots,-9,+10\}$, then a sublinear number of 
measurements $m=o(n)$, is sufficient!
At the same time, the result of Wu and Verd\'u presents two important limitations.
First, it does  not provide robustness guarantees\footnote{While this
  paper was about to be posted, we became aware of a paper by Wu and
  Verd\'u \cite{WuVerduNoisy} proving a robustness guarantee for $\delta
  > \uMMSE(p_X)$ for the case of probability distributions that do not
  contain singular continuous component.
The reconstruction method is again not practical.} of the type
described above. 
Second and most importantly, it does not provide any computationally practical
algorithm for reconstructing $x$ from measurements $y$.

In an independent line of work, Krzakala et al. \cite{KrzakalaEtAl} developed an
approach that leverages on the idea of \emph{spatial coupling}. This
idea was introduced for the compressed sensing literature by Kudekar
and Pfister \cite{KudekarPfister} (see
\cite{KudekarBEC} and Section \ref{sec:Related} for a discussion
of earlier work on this topic).
Spatially coupled matrices are, roughly speaking, random sensing
matrices with a band-diagonal structure. 
The analogy is, this time, with channel coding.\footnote{Unlike
  \cite{KrzakalaEtAl}, we follow here the terminology developed within coding
  theory.} In this context, spatial coupling, in
conjunction with message-passing decoding,  allows to achieve Shannon capacity
on memoryless communication channels. It is therefore natural to ask
whether  an approach based on spatial coupling can enable to 
sense random vectors $x$ at an undersampling rate $m/n$ close to the 
R\'enyi information dimension of the coordinates of $x$,
$\uRenyi(p_X)$. 
Indeed, the authors of \cite{KrzakalaEtAl} evaluate such a scheme numerically on a few classes of random vectors and demonstrate that it indeed achieves rates close to
the fraction of non-zero entries. They also support this claim by
insightful statistical physics arguments.

In this paper, we fill the gap between the above works, and
present  the following contributions:
\begin{description}
\item[Construction.] We describe a construction for spatially coupled
  sensing matrices $A$ that is somewhat broader than the one of
  \cite{KrzakalaEtAl} and give precise prescriptions for the
  asymptotic values of various parameters. We also use a somewhat different reconstruction
  algorithm from the one in \cite{KrzakalaEtAl}, by building on the
  approximate message passing (AMP) approach of
  \cite{DMM09,DMM_ITW_I}.
  AMP algorithms have the advantage of smaller memory complexity
  with respect to standard message passing, and of smaller
  computational complexity whenever fast multiplication procedures are
  available for $A$.
\item[Rigorous proof of convergence.] Our main contribution is a
  rigorous proof that the above approach indeed achieves the
  information-theoretic limits set out by Wu and Verd\'u
  \cite{WuVerdu}. Indeed, we prove that, for 
  sequences of spatially coupled sensing matrices
  $\{A(n)\}_{n\in\naturals}$, $A(n)\in\reals^{m(n)\times n}$ with asymptotic undersampling rate
 $\delta=\lim_{n\to\infty} m(n)/n$, AMP reconstruction is with high
 probability successful in recovering the signal $x$, provided
 $\delta>\uRenyi(p_X)$.
\item[Robustness to noise.] We prove that the present approach is
  robust\footnote{This robustness bound holds for all $\delta>\uMMSE(p_X)$, where
$\uMMSE(p_X)$ is the upper MMSE dimension of $p_X$. (see
Definition~\ref{def:MMSE-dim}). It is worth noting that
$\uMMSE(p_X)=\uRenyi(p_X)$  for  a broad class of distributions $p_X$
including distributions without \emph{singular continuous component}.}
to noise in the following sense. For any signal distribution
  $p_X$ and undersampling rate $\delta$, there exists a constant $C$
  such that the output $\hx(y)$ of the reconstruction algorithm
  achieves a mean square error per coordinate 
$n^{-1}\E\{\|\hx(y)-x\|_2^2\}\le C\, \sigma^2$. This result holds
under the noisy measurement model (\ref{eq:NoisyModel}) for a broad
class of noise models for $w$, including i.i.d. noise coordinates
$w_i$ with $\E\{w_i^2\}=\sigma^2<\infty$.
\item[Non-random signals.] Our proof does not apply uniquely to random
  signals $x$ with i.i.d. components, but indeed to more general
  sequences of signals $\{x(n)\}_{n\in\naturals}$, $x(n)\in\reals^n$
  indexed by their dimension $n$. The conditions required are: $(1)$ that the 
empirical distribution of the coordinates of $x(n)$ converges (weakly)
to $p_X$; and $(2)$ that $\|x(n)\|^2_2$ converges to the second moment of
the asymptotic law $p_X$.

There is a fundamental reason why this more general framework turns out
to be equivalent to the random signal model. This can be traced back to the
fact that, within our construction, the columns of the matrix $A$
are probabilistically exchangeable. Hence any
vector $x(n)$ is equivalent to the one whose coordinates have been
randomly permuted. The latter is in turn very close to the
i.i.d. model. By the same token, the rows of $A$ are exchangeable and
hence
the noise vector $w$ does not need to be random either.
\end{description}
Interestingly, the present framework changes the notion of `structure'
that is relevant for reconstructing the signal $x$.
Indeed, the focus is shifted from the
\emph{sparsity} of $x$ to the \emph{information dimension} $\uRenyi(p_X)$.
In other words, the signal structure that facilitates recovery from a small number of linear measurements
 is the low-dimensional structure in an information theoretic sense, quantified
 by the information dimension of the signal.

In the rest of this section we state formally our results, and discuss
their implications and limitations, as well as relations with earlier
work.  Section \ref{sec:GeneralAlgo} provides a precise description
of the matrix construction and reconstruction algorithm. Section \ref{sec:Lemmas}
reduces the proof of our main results to two key lemmas. One of these
lemmas is a (quite straightforward) generalization of the state
evolution technique of \cite{DMM09,BM-MPCS-2011}. The second lemma
characterizes the behavior of the state evolution recursion, and is
proved in Section \ref{sec:AnalysisLemma}. The proof of a number of
intermediate technical steps is deferred to the appendices.

\subsection{Formal statement of the results}
\label{sec:Formal}

We consider the noisy model (\ref{eq:NoisyModel}).
An instance of the problem is therefore completely specified by the
triple $(x,w,A)$. We will be interested in the asymptotic
properties of sequence of instances indexed by the problem dimensions
$\cS = \{(x(n),w(n),A(n))\}_{n\in\naturals}$. We recall a definition
from \cite{BayatiMontanariLASSO}. (More precisely,
\cite{BayatiMontanariLASSO} introduces the $B=1$ case of this definition.)
\begin{definition}\label{def:Converging}
The sequence of instances $\cS = \{x(n), w(n), A(n)\}_{n\in\naturals}$
indexed by $n$ is said to be a $B$-\emph{converging sequence}
if
$x(n)\in\reals^{n}$, $w(n)\in\reals^m$, $A(n)\in\reals^{m\times n}$
with $m=m(n)$ is  such that $m/n\to\delta\in(0,\infty)$,
and in addition the following conditions hold \footnote{If $(\mu_k)_{k\in \naturals}$
is a sequence of measures and $\mu$ is another measure, all defined on $\reals$,
the weak convergence of $\mu_k$ to $\mu$ along with the convergence of their
second moments to the second moment of $\mu$ is equivalent to convergence in 
$2$-Wasserstein distance~\cite{villani2008optimal}. Therefore, conditions $(a)$-$(b)$
are equivalent to the following. The empirical distributions of the signal $x(n)$ and the empirical 
distributions of noise $w(n)$ converge
in $2$-Wasserstein distance.} :
\begin{itemize}
\item[$(a)$] The empirical distribution of the entries of $x(n)$
converges weakly to a probability measure $p_{X}$ on $\reals$
with bounded second moment. Further
$n^{-1}\sum_{i=1}^nx_{i}(n)^2\to \E\{X^{2}\}$, where the expectation
is taken with respect to $p_X$.
\item[$(b)$] The empirical distribution of the entries of $w(n)$
converges weakly to a probability measure $p_{W}$ on $\reals$
with bounded second moment. Further
$m^{-1}\sum_{i=1}^mw_{i}(n)^2\to \E\{W^{2}\}\equiv\sigma^2$,
where the expectation is taken with respect to $p_W$.
\item[$(c)$]If $\{e_i\}_{1\le i\le n}$, $e_i\in\reals^n$ denotes the canonical
basis, then $\underset{n\to \infty}{\limsup}\max_{i\in [n]}\|A(n)e_i\|_2\le B$,\\
$\underset{n\to \infty}{\liminf} \min_{i\in [n]}\|A(n)e_i\|_2\ge 1/B$.
\end{itemize}
We further say that $\{(x(n),w(n))\}_{n\ge 0}$ is a \emph{converging
  sequence of instances}, if they satisfy conditions $(a)$ and $(b)$.
We say that $\{A(n)\}_{n\ge0}$ is a $B$-\emph{converging sequence} of
sensing matrices if they satisfy condition $(c)$ above, and we call it a \emph{converging sequence}
if it is $B$-converging for some $B$. Similarly, we say $\cS$ is a \emph{converging sequence} if it is $B$-converging for some $B$.

Finally, if the sequence $\{(x(n),w(n),A(n))\}_{n\ge 0}$ is random,
the above conditions are required to hold almost surely.
\end{definition}
Notice that standard normalizations of the sensing matrix correspond
to $\|A(n)e_i\|_2^2= 1$ (and hence $B=1$) or to $\|A(n)e_i\|_2^2=
m(n)/n$. The former corresponds to normalized columns and the latter
corresponds to normalized rows. Since
throughout we assume $m(n)/n\to\delta\in(0,\infty)$, these conventions
only differ by a rescaling of the noise variance.  In order to
simplify the proofs, we allow ourselves somewhat more freedom by
taking $B$ a fixed constant.

Given a sensing matrix $A$, and a vector of measurements $y$, 
a reconstruction algorithm produces an estimate $\hx(A;y)\in\reals^n$ 
of $x$. In this paper we assume that the empirical distribution $p_{X}$, and
the noise level $\sigma^2$ are known to the estimator, and hence the
mapping $\hx:(A,y)\mapsto \hx(A;y)$ implicitly depends on $p_X$ and
$\sigma^2$. Since however $p_X,\sigma^2$ are fixed throughout, we
avoid the cumbersome notation $\hx(A,y,p_X,\sigma^2)$.

Given a converging sequence of instances
$\cS=\{x(n),w(n),A(n)\}_{n\in\naturals}$, 
and an estimator  $\hx$, we define the asymptotic
per-coordinate reconstruction mean square error as
\begin{eqnarray}
\MSE(\cS;\hx)=
\underset{n\to\infty}{\limsup}\,  \frac{1}{n}\, \big\|\hx\big(A(n);y(n)\big)-x(n)\big\|^2\, .
\end{eqnarray}
Notice that the quantity  on the right hand side
depends on the matrix $A(n)$, which will be random, and on the signal and noise
vectors  $x(n)$, $w(n)$ which can themselves be random. 
Our results hold almost surely with respect to these random
variables. In some applications it is more customary to take the
expectation with respect to the noise and signal distribution, i.e., to
consider the quantity
\begin{eqnarray}
\MSEexp(\cS;\hx)=
\underset{n\to\infty}{\limsup}\,  \frac{1}{n}\, \E\big\|\hx\big(A(n);y(n)\big)-x(n)\big\|^2\, .
\end{eqnarray}
It turns out that the almost sure bounds imply, in the present setting, bounds on the expected mean square error $\MSEexp$, as well.

In this paper we study a specific low-complexity estimator, based on
the AMP algorithm first proposed in \cite{DMM09}. AMP is an iterative
algorithm derived from the theory of belief propagation in graphical
models \cite{MontanariChapter}. At each iteration $t$, it keeps track
of an estimate $x^t\in\reals^n$ of the unknown signal $x$. This is used to
compute residuals $(y-Ax^t)\in\reals^m$. These  correspond to the part of
observations that is not explained by the current estimate
$x^t$. The residuals are then processed through a matched filter
operator (roughly speaking, this amounts to multiplying the residuals
by the transpose of $A$) and then applying a non-linear denoiser, to produce the new
estimate $x^{t+1}$.

Formally, we start with an initial guess $x^1_i =\E\{X\}$ for all $i\in[n]$ and proceed by 
\begin{eqnarray}
x^{t+1} & = & \eta_t(x^t+(Q^t\odot A)^*r^t)\, ,\label{eq:AMP1}\\
r^t & = & y-Ax^t+\ons^t\odot r^{t-1}\, .\label{eq:AMP2}
\end{eqnarray}
The second equation corresponds to the computation of new residuals
from the current estimate. The memory term (also known as `Onsager
term' in statistical physics) plays a crucial role as emphasized in \cite{DMM09,BM-MPCS-2011,BM-Universality,JM-StateEvolution}. The first equation
describes matched filter, with multiplication by $(Q_t\odot A)^*$,
followed by application of the denoiser $\eta_t$. Throughout  $\odot$ indicates Hadamard
(entrywise) product and $X^*$ denotes the transpose of matrix
$X$. 

For each $t$, the denoiser $\eta_t:\reals^n\to \reals^n$ is a differentiable non-linear function that depends
on the input distribution $p_{X}$.
Further, $\eta_t$ is separable\footnote{We refer to \cite{DonohoJohnstoneMontanari}
  for a study of non-separable denoisers in AMP algorithms.}, namely,  for a vector $v\in \reals^n$,
we have $\eta_t(v) =(\eta_{1,t}(v_1),\dots,\eta_{n,t}(v_n))$.
The matrix $Q^t\in \reals^{m\times n}$ and the vector
$\ons^t\in\reals^m$ can be efficiently computed from the
current state $x^t$ of the algorithm,
Further $Q^t$ does not depend on the problem instance and hence can
be precomputed. Both $Q^t$ and  $\ons^t$ are block-constants, i.e., they can be 
partitioned into blocks such that within each block all the entries have the same value. This
property makes their evaluation, storage and manipulation particularly 
convenient. 

We refer to the next section for explicit definitions of
these quantities. A crucial element is the specific choice of
$\eta_{i,t}$. The general guiding principle is that the argument
$y^t=x^t+(Q^t\odot A)^*r^t$  in Eq.~(\ref{eq:AMP1}) should be interpreted
as a noisy version of the unknown signal $x$, i.e., $y^t= x +{\sf
  noise}$. 
The denoiser $\eta_t$
must therefore be chosen as to minimize the mean square error at
iteration $(t+1)$.
The papers \cite{DMM09,DonohoJohnstoneMontanari} take a
minimax point of view, and hence study denoisers that achieve the
smallest mean square error over the worst case signal $x$ in a certain
class. For instance, coordinate-wise soft
thresholding is nearly minimax optimal over the class of sparse signals~\cite{DonohoJohnstoneMontanari}.
Here we instead assume that the prior $p_X$ is known, and hence the
choice of $\eta_{i,t}$ is uniquely dictated by the objective of minimizing the mean square error at
iteration $t+1$. In other words $\eta_{i,t}$ takes the form of a Bayes optimal estimator
for the prior $p_X$. In order to stress this point, we will
occasionally refer to
this as the Bayes optimal AMP algorithm.
As shown in Appendix \ref{app:Lipschitz}, $x^t$ is (almost
surely) a local Lipschitz continuous function of the observations $y$.

Finally notice that \cite{DMM_ITW_I,MontanariChapter} also derived AMP starting from a
Bayesian graphical models point of view, with the signal $x$ modeled as random
with i.i.d. entries. The algorithm in Eqs.~(\ref{eq:AMP1}),
(\ref{eq:AMP2}) differs from the one in \cite{DMM_ITW_I}
in that the matched filter operation requires scaling
$A$ by the matrix $Q^t$. This is related to the fact that we will use
a matrix $A$ with independent but not identically distributed entries
and,  as a consequence, the accuracy of each entry $x^t_i$ depends on
the index $i$ as well as on $t$.

We denote by $\MSEAMP(\cS;\sigma^2)$ the mean square error
achieved by the Bayes optimal AMP algorithm, where we made explicit
the dependence on $\sigma^2$. Since the AMP estimate depends on the
iteration number $t$, the definition of $\MSEAMP(\cS;\sigma^2)$
requires some care. The basic point is that we need to iterate the
algorithm only for a constant number of iterations, as $n$ gets
large. Formally, we let
\begin{eqnarray}
\MSEAMP(\cS;\sigma^2) \equiv
\lim_{t\to\infty}\underset{n\to\infty}{\limsup}\, \frac{1}{n}\big\|x^t\big(A(n);y(n)\big)-x(n)\big\|^2
\,. 
\end{eqnarray}
As discussed above, limits will be shown to exist almost surely, when
the instances $(x(n),w(n),A(n))$ are random, and almost sure upper
bounds on $\MSEAMP(\cS;\sigma^2)$ will be proved. (Indeed
$\MSEAMP(\cS;\sigma^2)$ turns out to be deterministic.)
On the other hand, one might be interested in the expected error
\begin{eqnarray}
\MSEAMPexp(\cS;\sigma^2) \equiv
\lim_{t\to\infty}\underset{n\to\infty}{\limsup} \, \frac{1}{n}\E\big\{\big\|x^t\big(A(n);y(n)\big)-x(n)\big\|^2\big\}
\, .
\end{eqnarray}

We will tie the success of our compressed sensing scheme to the
fundamental information-theoretic limit established in
\cite{WuVerdu}. The latter is expressed in terms of the R\'enyi
information dimension of the probability measure $p_X$. 
\begin{definition}
 Let $p_X$ be a probability measure over $\reals$, and $X\sim p_X$. 
 The \emph{upper} and \emph{lower information dimension} of $p_X$ are defined as
\begin{align}
\uRenyi(p_X) &= \underset{\ell\to\infty}{\limsup} \frac{H([X]_{\ell})}{\log
  \ell}\, .\\
 \lRenyi(p_X) &= \underset{\ell\to\infty}{\liminf} \frac{H([X]_{\ell})}{\log
  \ell}\, .
\end{align}
Here $H(\,\cdot\,)$ denotes Shannon entropy and, for $x\in \reals$, $[x]_{\ell}\equiv \lfloor \ell
x\rfloor/\ell$, and $\lfloor x\rfloor \equiv \max \{k\in \integers\;
:\;\;\; k\le x\}$. If the $\limsup$ and $\liminf$ coincide, then we
let $\Renyi(p_X) = \uRenyi(p_X) = \lRenyi(p_X)$.
\end{definition}
Whenever the limit of $H([X]_{\ell})/\log\ell$ exists and is finite\footnote{This condition can be replaced by $H(\lfloor X\rfloor) < \infty$. A sufficient condition is that $\E[\log(1+|X|)] < \infty$, which is certainly satisfied if $X$ has a finite variance~\cite{WuVerduMMSE}. }, the R\'enyi
information dimension can also be characterized as follows.  
Write the binary expansion of $X$, $X=D_0.D_1D_2D_3\dots$ with $D_i\in
\{0,1\}$ for $i\ge 1$. Then $\uRenyi(p_X)$ is the entropy rate of the
stochastic process $\{D_1,D_2,D_3,\dots\}$.
It is also convenient to recall the following result from
\cite{Renyi,WuVerdu}.
\begin{propo}[\cite{Renyi,WuVerdu}]\label{propo:Renyi}
Let $p_X$ be a probability measure over $\reals$, and $X\sim p_X$.  
Assume $H(\lfloor X\rfloor)$ to be finite. If $p_X = (1-\ve) \nu_{\rm
  d}+\ve\widetilde{\nu}$ with $\nu_{\rm d}$ a discrete distribution
(i.e., with countable support), then $\uRenyi(p_X)\le \ve$. 
Further, if $\widetilde{\nu}$ has a density with respect to Lebesgue
measure,  then $\Renyi(p_X) = \uRenyi(p_X) = \lRenyi(p_X)=\ve$.
In particular, if $\prob\{X\neq 0\}\le \ve$ then $\uRenyi(p_X)\le \ve$. 
\end{propo}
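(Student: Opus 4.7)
The plan is to decompose $X$ via a mixing random variable and apply the chain rule for entropy. Introduce $B \sim \mathrm{Bernoulli}(\ve)$ independent of $(Y,Z)$ with $Y\sim\nu_{\rm d}$ and $Z\sim\widetilde{\nu}$, and note that $X=(1-B)Y+BZ$ has law $p_X$. Since $[X]_\ell$ is a function of $(B,Y,Z)$,
\begin{equation*}
H([X]_\ell) \le H([X]_\ell, B) = H(B) + (1-\ve)\, H([Y]_\ell) + \ve\, H([Z]_\ell).
\end{equation*}
Dividing by $\log\ell$ and taking $\limsup$ reduces the first assertion to two facts: (i) $\uRenyi(\nu_{\rm d})=0$ for any discrete measure with $H(\lfloor Y\rfloor)<\infty$, and (ii) $\uRenyi(\widetilde{\nu})\le 1$ whenever $H(\lfloor Z\rfloor)<\infty$.

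Fact (ii) is short: conditioning on $\lfloor Z\rfloor$ leaves $[Z]_\ell$ taking at most $\ell$ values in each integer interval, so $H([Z]_\ell)\le H(\lfloor Z\rfloor)+\log\ell$. For fact (i), I would first observe that both $H(\lfloor Y\rfloor)$ and $H(\lfloor Z\rfloor)$ are finite under the paper's hypothesis, via
\begin{equation*}
H(\lfloor X\rfloor)+\log 2 \ge H(\lfloor X\rfloor, B) = H(B) + (1-\ve)\, H(\lfloor Y\rfloor) + \ve\, H(\lfloor Z\rfloor).
\end{equation*}
Then, conditioning on $\lfloor Y\rfloor=k$ reduces the question to a discrete distribution on the bounded interval $[k,k+1)$; a truncation argument (pick $N$ so the tail has mass $<\eps$, bound the head entropy by $\log N$, and bound the tail contribution crudely by $\eps\log\ell$) yields $H([Y]_\ell\mid \lfloor Y\rfloor=k)/\log\ell\to 0$, and dominated convergence (using the pointwise bound $H([Y]_\ell\mid\lfloor Y\rfloor=k)/\log\ell\le 1$ and summability of $\prob\{\lfloor Y\rfloor=k\}$) lifts this to $H([Y]_\ell\mid\lfloor Y\rfloor)/\log\ell\to 0$. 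Combining (i) and (ii) yields $\uRenyi(p_X)\le \ve$.

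For the second assertion, when $\widetilde{\nu}$ has a density $f$, the standard computation $H([Z]_\ell)=\log\ell+h(Z)+o(1)$, where $h(Z)=-\int f\log f$, gives $\Renyi(\widetilde{\nu})=1$ (with the usual truncation if $h(Z)$ is not finite, to secure $\liminf H([Z]_\ell)/\log\ell\ge 1$). The matching lower bound on $\Renyi(p_X)$ comes from the opposite direction of the chain rule,
\begin{equation*}
H([X]_\ell)\ge H([X]_\ell\mid B) = (1-\ve)\, H([Y]_\ell) + \ve\, H([Z]_\ell)\ge \ve\, H([Z]_\ell),
\end{equation*}
so $\lRenyi(p_X)\ge\ve$, and $\Renyi(p_X)=\ve$. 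The closing \emph{in particular} is immediate: $\prob\{X\neq 0\}\le\ve$ yields the decomposition $p_X=(1-\ve)\delta_0+\ve\mu$ with $\delta_0$ a point mass (hence discrete), so the first assertion applies.

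The main obstacle is fact (i): showing that discrete measures have zero R\'enyi dimension under only the hypothesis $H(\lfloor Y\rfloor)<\infty$ rather than $H(Y)<\infty$. The naive monotonicity bound $H([Y]_\ell)\le H(Y)$ fails when the discrete part has infinite entropy, so the truncation plus dominated-convergence argument outlined above is what does the real work; everything else is bookkeeping with the chain rule.
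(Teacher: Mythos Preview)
The paper does not give its own proof of this proposition: it is stated as a result quoted from R\'enyi's original paper and from Wu--Verd\'u, and used as a black box throughout. So there is no ``paper proof'' to compare against.

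Your argument is correct and is essentially the classical one going back to R\'enyi. The mixing/chain-rule decomposition, the bound $H([Z]_\ell)\le H(\lfloor Z\rfloor)+\log\ell$, and the truncation argument for the discrete part are all standard and sound; the observation that $H(\lfloor Y\rfloor)$ and $H(\lfloor Z\rfloor)$ inherit finiteness from $H(\lfloor X\rfloor)$ via $H(\lfloor X\rfloor,B)\le H(\lfloor X\rfloor)+\log 2$ is the right way to transfer the hypothesis to the components. The lower bound via $H([X]_\ell)\ge H([X]_\ell\mid B)$ is also correct, and the ``in particular'' is handled properly: writing $p_X=(1-\ve)\delta_0+\ve\widetilde\nu$ with $\widetilde\nu=(p_X-(1-\ve)\delta_0)/\ve$ is a legitimate probability measure precisely because $\prob\{X=0\}\ge 1-\ve$.

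One small point worth tightening: in the absolutely continuous case, the identity $H([Z]_\ell)=\log\ell+h(Z)+o(1)$ is only literally true when the differential entropy $h(Z)$ is finite. You correctly flag this and appeal to truncation, but it is worth noting that the result $\Renyi(\widetilde\nu)=1$ for absolutely continuous $\widetilde\nu$ with $H(\lfloor Z\rfloor)<\infty$ is exactly what R\'enyi proved (and what Wu--Verd\'u cite), so in a write-up you can simply invoke that reference rather than redo the truncation. Everything else is solid.
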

In order to present our result concerning the robust reconstruction, we need the definition
of \emph{MMSE dimension} of the probability measure $p_X$.
 
Given the signal distribution $p_X$, we let $\mmse(s)$ denote the 
minimum mean square error in estimating $X\sim p_X$ from a noisy
observation in gaussian noise, at signal-to-noise ratio $s$.
Formally
\begin{eqnarray}
\mmse(s)
\equiv\inf_{\eta:\reals\to\reals}\E\big\{\big[X-\eta(\sqrt{s}\,
X+Z)\big]^2\big\}\, ,
\end{eqnarray}
where $Z\sim\normal(0,1)$.
Since the minimum mean square error estimator is just the conditional
expectation, this is given by
\begin{eqnarray}
\mmse(s) = \E\big\{\big[X-\E[X|Y]\big]^2\big\}\, ,\;\;\;\;\;\; Y =
\sqrt{s}\, X+Z\, .
\end{eqnarray}
Notice that $\mmse(s)$ is naturally well defined for $s=\infty$, with
$\mmse(\infty) = 0$. We will therefore interpret it as a function
$\mmse:\realsc\to\realsc$ where $\realsc\equiv [0,\infty]$
is the completed non-negative real line.

We recall the inequality
\begin{eqnarray}
0 \le \mmse(s) \le \frac{1}{s},
\end{eqnarray}
obtained by the estimator $\eta(y) = y/\sqrt{s}$. A
finer characterization of the scaling of $\mmse(s)$ is provided by the
following definition.
\begin{definition}[\cite{WuVerduMMSE}]\label{def:MMSE-dim}
The \emph{upper} and~\emph{lower MMSE dimension} of the probability
measure $p_X$ over $\reals$ are defined as
\begin{align}
\uMMSE(p_X) = \underset{s \to \infty}{\limsup}\; s\cdot \mmse(s)\,, \label{eqn:ummse_def}\\
\lMMSE(p_X) = \underset{s \to \infty}{\liminf}\; s \cdot \mmse(s)\,. \label{eqn:lmmse_def}
\end{align} 
 If the $\limsup$ and $\liminf$ coincide, then we
let $\MMSE(p_X) = \uMMSE(p_X) = \lMMSE(p_X)$.
\end{definition}

It is also convenient to recall the following result from~\cite{WuVerduMMSE}.
\begin{propo}[\cite{WuVerduMMSE}]\label{propo:MMSEdim}
If $H(\lfloor X \rfloor) < \infty$, then
\begin{eqnarray}
\lMMSE(p_X) \le \lRenyi(p_X) \le \uRenyi(p_X) \le \uMMSE(p_X).
\end{eqnarray}
Hence, if $\MMSEd(p_X)$ exists, then $d(p_X)$ exists and $\MMSEd(p_X) = d(p_X)$. 
In particular, this is the case if $p_X = (1-\ve) \nu_{\rm
  d}+\ve\widetilde{\nu}$ with $\nu_{\rm d}$ a discrete distribution
(i.e., with countable support), and $\widetilde{\nu}$ has a density with respect to Lebesgue
measure.
\end{propo}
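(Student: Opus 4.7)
My plan is to deduce both nontrivial inequalities from tight two-sided comparisons between the quantized entropy $H([X]_\ell)$ and the Gaussian-channel mutual information $I(X;Y_s)$ at $s=\ell^2$, where $Y_s=\sqrt{s}\,X+Z$ with $Z\sim\normal(0,1)$, and then feed them through the I-MMSE identity $I(X;Y_s)=\frac{1}{2}\int_0^s\mmse(t)\,\de t$ to transfer the $s\to\infty$ behavior of $\mmse$ into the $\ell\to\infty$ behavior of $H([X]_\ell)$. The middle inequality $\lRenyi(p_X)\le\uRenyi(p_X)$ is immediate from the definitions. The hypothesis $H(\lfloor X\rfloor)<\infty$ yields $H([X]_\ell)\le H(\lfloor X\rfloor)+\log\ell$, so every entropy that appears is finite and may be manipulated without ambiguity.

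For $\uRenyi(p_X)\le\uMMSE(p_X)$ I would start from the decomposition
\[
H([X]_\ell)=I([X]_\ell;Y_s)+H([X]_\ell\mid Y_s)\le I(X;Y_s)+H([X]_\ell\mid Y_s),
\]
using data processing on the first term. Since $\ell\,[X]_\ell=\lfloor \ell X\rfloor$ is integer-valued, the bound $H(W)\le h(W+U)\le\frac{1}{2}\log(2\pi e(\operatorname{Var}(W)+1/12))$ for $U$ uniform on $[0,1)$ gives, together with $|X-[X]_\ell|\le 1/\ell$ and Jensen,
\[
H([X]_\ell\mid Y_s)\le\tfrac{1}{2}\log\bigl(2\pi e\bigl(2\,\ell^2\,\mmse(s)+2+\tfrac{1}{12}\bigr)\bigr).
\]
Fix $D>\uMMSE(p_X)$ and set $s=\ell^2$. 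For large $\ell$, $\mmse(\ell^2)\le D/\ell^2$, so the display is $O(1)$; I-MMSE combined with $\mmse(t)\le\min(\operatorname{Var}(X),D/t)$ gives $I(X;Y_{\ell^2})\le D\log\ell+O(1)$. Adding the two, $H([X]_\ell)\le D\log\ell+O(1)$, whence $\uRenyi(p_X)\le D$; letting $D\downarrow\uMMSE(p_X)$ closes the argument.

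For $\lMMSE(p_X)\le\lRenyi(p_X)$ I use $I(X;Y_s)=h(Y_s)-\frac{1}{2}\log(2\pi e)$ together with
\[
h(Y_s)=I([X]_\ell;Y_s)+h(Y_s\mid[X]_\ell)\le H([X]_\ell)+h(Y_s\mid [X]_\ell).
\]
Conditional on $[X]_\ell$, the residual $X-[X]_\ell\in[0,1/\ell)$, so $\operatorname{Var}(Y_s\mid[X]_\ell)\le s/(4\ell^2)+1$, and the Gaussian maximum-entropy bound yields $h(Y_s\mid[X]_\ell)\le\frac{1}{2}\log(2\pi e(s/(4\ell^2)+1))$. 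At $s=\ell^2$ this collapses to $I(X;Y_{\ell^2})\le H([X]_\ell)+O(1)$. For any $c<\lMMSE(p_X)$ one has $\mmse(s)\ge c/s$ for all large $s$, so by I-MMSE $I(X;Y_{\ell^2})\ge c\log\ell+O(1)$ and therefore $H([X]_\ell)\ge c\log\ell+O(1)$. Letting $c\uparrow\lMMSE(p_X)$ yields the claim.

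Given the resulting chain $\lMMSE(p_X)\le\lRenyi(p_X)\le\uRenyi(p_X)\le\uMMSE(p_X)$, whenever $\MMSEd(p_X)$ exists, so does $\Renyi(p_X)$, and the two coincide. For $p_X=(1-\ve)\nu_{\rm d}+\ve\widetilde{\nu}$ with $\widetilde{\nu}$ absolutely continuous, Proposition~\ref{propo:Renyi} already delivers $\Renyi(p_X)=\ve$, and the chain reduces the remaining claim $\MMSEd(p_X)=\ve$ to verifying $s\cdot\mmse(s)\to\ve$ by a direct high-SNR calculation: conditioning on the mixture component, the posterior of a sample from the discrete part concentrates on the correct atom with MMSE decaying faster than any inverse polynomial in $s$, while the absolutely continuous part contributes $s\cdot\mmse_{\widetilde{\nu}}(s)\to 1$ by the classical Gaussian-channel asymptotic for distributions with a density. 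The main obstacle is the tight calibration of the two comparison bounds so that the $s=\ell^2$ choice leaves only $O(1)$ slack between $H([X]_\ell)$ and $I(X;Y_{\ell^2})$; in particular, bounding $\operatorname{Var}([X]_\ell\mid Y_s)$ by $\mmse(s)$ (up to $O(1/\ell^2)$) rather than by the unconditional variance is what makes the chain sharp enough to match the two extremal R\'enyi/MMSE constants.
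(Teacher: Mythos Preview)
The paper does not give its own proof of this proposition: it is quoted verbatim as a result of Wu and Verd\'u \cite{WuVerduMMSE}, with no argument supplied. So there is nothing in the paper to compare against.

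Your proof sketch is correct and is, in essence, the original Wu--Verd\'u argument. The key mechanism---calibrating $s=\ell^2$ so that both $H([X]_\ell\mid Y_s)$ and $h(Y_s\mid [X]_\ell)-\tfrac{1}{2}\log(2\pi e)$ are $O(1)$, then reading off the asymptotics of $H([X]_\ell)$ from those of $I(X;Y_{\ell^2})$ via the I--MMSE relation---is exactly the idea of their Theorem 8. The one place where your write-up is a bit loose is the final clause on discrete--continuous mixtures: ``conditioning on the mixture component'' is not literally available to the estimator, so the statement $s\cdot\mmse(s)\to\ve$ needs a genuine sandwich. A clean way is to lower-bound $\mmse(s)$ by the genie-aided MMSE where the component label is revealed (giving $\ve\cdot\mmse_{\widetilde\nu}(s)\to\ve/s$), and to upper-bound it by the MSE of a suboptimal estimator that first classifies the component using a likelihood-ratio test and then applies the component-wise Bayes estimator; the misclassification probability decays exponentially in $s$ for atoms of $\nu_{\rm d}$, so the discrete part contributes $o(1/s)$. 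Alternatively, you can bypass the direct calculation entirely: Proposition~\ref{propo:Renyi} gives $\Renyi(p_X)=\ve$, and your chain already forces $\lMMSE(p_X)\le\ve\le\uMMSE(p_X)$, so it suffices to show $\uMMSE(p_X)\le\ve$; this follows from the pointwise bound $\mmse(s)\le(1-\ve)\,\mmse_{\nu_{\rm d}}(s)+\ve\,\mmse_{\widetilde\nu}(s)$ obtained by using the mixture-aware suboptimal estimator, together with $s\cdot\mmse_{\nu_{\rm d}}(s)\to 0$ and $s\cdot\mmse_{\widetilde\nu}(s)\to 1$.
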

 
We are now in position to state our main results. The first one
states that for any undersampling rate above Renyi information
dimension $\delta> \uRenyi(p_X)$, we have
$\MSEAMP(\cS;\sigma^2)\to 0$ as $\sigma^2\to 0$ with, in particular, $\MSEAMP(\cS;\sigma^2=0)=0$.
\begin{thm}\label{thm:MainTheorem}
Let $p_X$ be a probability measure on the real line and assume
\begin{eqnarray}
\delta> \uRenyi(p_X).\label{eq:MainTheorem}
\end{eqnarray}
Then there exists a random converging sequence of sensing matrices 
$\{A(n)\}_{n\ge 0}$, $A(n)\in\reals^{m\times n}$, $m(n)/n\to\delta$
(with distribution depending only on $\delta$), for which the
following holds.
For any $\ve > 0$, there exists $\sigma_0=\sigma_0(\ve,\delta,p_X)$ such that for any converging sequence of instances  $\{(x(n),w(n))\}_{n\ge0}$
with parameters $(p_X,\sigma^2,\delta)$ and $\sigma \in [0,\sigma_0]$, we have, almost surely
\begin{eqnarray}
\MSEAMP(\cS;\sigma^2)\le \ve\, . \label{eq:construct}
\end{eqnarray}
Further, under the same assumptions, we have 
$\MSEAMPexp(\cS;\sigma^2) \le \ve$.
\end{thm}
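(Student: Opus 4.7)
The plan is to combine the two reductions laid out in Section \ref{sec:Lemmas} with an explicit construction of the sensing ensemble. I would partition $[n]$ into $\Lc$ equal column groups and $[m]$ into $\Lr$ row groups, and let the entries of $A(n)$ that fall in the $(a,b)$-th row--column block be independent centered Gaussians with variance $W_{a,b}/n$, where the shape $W\in\reals^{\Lr\times \Lc}$ is band-diagonal with an enhanced ``seed'' of large variance in a few boundary rows. For a properly normalized $W$ the column norms satisfy condition $(c)$ of Definition \ref{def:Converging} almost surely, so $\{A(n)\}$ is a converging sequence. The parameters $(\Lr,\Lc,W)$ remain free and will be tuned at the end as a function of $\ve$ and the gap $\delta-\uRenyi(p_X)$.

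Next I would invoke the state evolution lemma, the natural block-wise extension of the rigorous result of \cite{BM-MPCS-2011}. It asserts that, almost surely along the converging sequence, the block-wise empirical MSE of the AMP iterate $x^t$ concentrates on a deterministic profile $\psi^t=(\psi^t_1,\dots,\psi^t_{\Lc})$ obeying $\psi^{t+1}=\SEmap(\psi^t;\sigma^2)$, where $\SEmap$ is built from the Bayes-optimal single-letter function $\mmse(\cdot)$ evaluated at a block-dependent effective SNR determined by $W$ and $\psi^t$. With $\psi^0_b=\E\{X^2\}$ for every $b$, the coordinate-wise monotonicity of $\SEmap$ (a consequence of using the Bayes denoiser) forces $\psi^t$ to decrease to a fixed point $\psi^\infty(\sigma^2)$.

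The analytic core of the argument is the second lemma: for $\delta>\uRenyi(p_X)$ and any $\ve>0$ one can pick $\Lr,\Lc,W$ and a seed profile so that $\Lc^{-1}\sum_b \psi_b^\infty(\sigma^2)\le \ve$ whenever $\sigma\le \sigma_0$. The approach I would follow is the potential-function (free-energy) method familiar from threshold saturation in coding theory: introduce a Lyapunov functional whose critical profiles are exactly the fixed points of $\SEmap$ and which, at a spatially uniform profile $\psi$, reduces to a single-letter potential $\up(\psi;\delta,\sigma^2)$ built from $\delta$ and the mutual information function $s\mapsto \int_0^s \mmse(s')\,\de s'$. The band structure of $W$ together with the boundary seed forces $\psi^\infty$ to lie close, in every block, to the global minimizer of this single-letter potential; and the definition of $\uRenyi(p_X)$ as the limit $H([X]_\ell)/\log\ell$, which controls $\int_0^s \mmse\,\de s'$ at high SNR through the I-MMSE relation, together with Proposition \ref{propo:Renyi}, implies that the minimizer tends to $0$ as $\sigma\to 0$ precisely when $\delta>\uRenyi(p_X)$. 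I expect this step to be the main obstacle: one must simultaneously control $\SEmap$ across all $\Lc$ blocks, quantify continuity of the single-letter minimizer in $\sigma^2$, and tune $(\Lr,\Lc,W,\text{seed})$ so that the spatial wave triggered by the seed sweeps the entire profile before it can stall at a bad stationary point.

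Finally, given $\ve$, I would first fix $(\Lr,\Lc,W,\sigma_0)$ from the analytic lemma, then choose $t$ large enough that $\Lc^{-1}\sum_b \psi^t_b(\sigma^2)\le \ve/2$ by monotone convergence of $\psi^t$ to $\psi^\infty$, and finally send $n\to\infty$ so that state evolution replaces the empirical per-block MSE of $x^t$ by $\psi^t$ almost surely; after averaging over blocks this yields $\MSEAMP(\cS;\sigma^2)\le \ve$. The corresponding bound $\MSEAMPexp(\cS;\sigma^2)\le \ve$ follows from the almost sure bound by uniform integrability, using the local Lipschitz property of $x^t$ established in Appendix \ref{app:Lipschitz} together with the almost sure control on the operator norm of $A(n)$ to bound the second moment of $\|x^t-x\|_2^2/n$.
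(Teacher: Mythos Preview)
Your proposal is correct and follows essentially the same architecture as the paper: construct a spatially coupled ensemble $\Ens(W,M,N)$ with a seed, invoke the state evolution lemma (Lemma~\ref{lemma:SE}) to reduce the per-block MSE to the deterministic recursion~\eqref{eq:ExplicitSE}, analyze that recursion via a free-energy/potential functional tied to $\uRenyi(p_X)$ through the I--MMSE relation (Lemma~\ref{lem:phi_convergence}), and then assemble by choosing parameters, letting $t\to\infty$, and finally $n\to\infty$. One small point of divergence: for $\MSEAMPexp$ the paper argues via reverse Fatou using the almost sure bound, whereas you go through uniform integrability via the local Lipschitz estimate of Appendix~\ref{app:Lipschitz}; both are viable and neither changes the substance of the proof.
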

The second theorem characterizes the rate at which the mean square
error goes to $0$. In particular, we show that $\MSEAMP(\cS;\sigma^2)
= O(\sigma^2)$ provided $\delta> \uMMSE(p_X)$.
\begin{thm}\label{thm:MainTheoremR}
Let $p_X$ be a probability measure on the real line and assume
\begin{eqnarray}
\delta> \uMMSE(p_X).
\end{eqnarray}
Then there exists a random converging sequence of sensing matrices 
$\{A(n)\}_{n\ge 0}$, $A(n)\in\reals^{m\times n}$, $m(n)/n\to\delta$
(with distribution depending only on $\delta$) and a finite stability
constant $C = C(p_X,\delta)$, such that the
following is true.
For any converging sequence of instances  $\{(x(n),w(n))\}_{n\ge0}$
with parameters $(p_X,\sigma^2,\delta)$, we have, almost surely
\begin{eqnarray}
\MSEAMP(\cS;\sigma^2) \le C\, \sigma^2\, . \label{eq:Robustness}
\end{eqnarray}
Further, under the same assumptions, we have 
$\MSEAMPexp(\cS;\sigma^2) \le C\, \sigma^2$.

 Finally, the sensitivity to small noise is bounded as
\begin{eqnarray}
\lim_{\sigma \to 0} \frac{1}{\sigma^2}\, \MSEAMP(\cS;\sigma^2) \le \frac{4\delta - 2\uMMSE(p_X)}{\delta - \uMMSE(p_X)}\,.
\end{eqnarray}
\end{thm}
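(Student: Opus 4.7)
\medskip
\noindent\textbf{Proof proposal.} The plan is to reduce the statement to a quantitative fixed point analysis of the deterministic state evolution recursion attached to the Bayes-optimal AMP on spatially coupled matrices, and then to use the definition of $\uMMSE(p_X)$ to convert the convergence into a bound of order $\sigma^2$.

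First I would invoke the state evolution reduction announced as one of the two key lemmas in Section~\ref{sec:Lemmas}: for the band-diagonal ensemble $\{A(n)\}$ and any fixed $t$, the per-block mean square error of $x^t$ is, almost surely and in the $n\to\infty$ limit, equal to $\mmse(\phi^t_j)$ on column block $j$, where $(\bphi^t,\bpsi^t)$ evolves according to
\[
\psi^{t+1}_i=\sigma^2+\delta^{-1}\sum_{j}W_{ij}\,\mmse(\phi^t_j),\qquad \frac{1}{\phi^{t+1}_j}=\sum_{i}\frac{W_{ij}}{\psi^{t+1}_i},
\]
driven by the shape matrix $\Shape=(W_{ij})$ of the coupled ensemble. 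This is the noisy analogue of the recursion already handled in the proof of Theorem~\ref{thm:MainTheorem}, so it is enough to exhibit a fixed point $\bphi^\star$ with $\max_{j}\mmse(\phi^\star_j)\le C\sigma^2$ and to show that the iterate is driven to it.

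Second, I would exploit the defining property of the upper MMSE dimension: for every $\gamma\in(\uMMSE(p_X),\delta)$ there exists $s_0=s_0(\gamma)$ such that $\mmse(s)\le\gamma/s$ for $s\ge s_0$. Combined with the universal bound $\mmse(s)\le\min(1/s,\E\{X^2\})$, a monotone comparison with the scalar recursion $\phi\mapsto 1/(\sigma^2+\gamma/(\delta\phi))$ yields $\phi^\star_j\ge (\delta-\gamma)/(\delta\sigma^2)$ uniformly in $j$ and hence $\mmse(\phi^\star_j)\le \gamma\delta\sigma^2/(\delta-\gamma)$, which is exactly~\eqref{eq:Robustness}. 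That the iterate actually reaches this fixed point rather than being trapped at a spurious one with $\phi^t_j$ of order one in the band's interior is the purpose of the over-measured seed in the matrix construction: as in the $\sigma=0$ analysis of Theorem~\ref{thm:MainTheorem}, the high-SNR regime $\phi^t_j\ge s_0$ is first established on the boundary blocks and then propagates inward as $\Lc\to\infty$. The explicit constant $(4\delta-2\uMMSE(p_X))/(\delta-\uMMSE(p_X))$ is obtained by linearising the recursion around $\phi=\infty$ more carefully, letting $\gamma\downarrow\uMMSE(p_X)$ and tracking the residual $\psi^{t+1}_i/\sigma^2$ along the trajectory. Finally, the $\MSEAMPexp$ bound follows from the almost-sure bound together with a uniform integrability argument that uses the local Lipschitz continuity of $x^t$ proved in Appendix~\ref{app:Lipschitz} and the second moment hypotheses of Definition~\ref{def:Converging}.

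The main obstacle is the propagation step. Because $s\mapsto s\cdot\mmse(s)$ is not monotone in general, the coupled recursion is not a pointwise contraction, and one must rule out stable fixed points at which $\phi^\star_j$ remains of order $1$ on the interior blocks uniformly in $\sigma$. This is the threshold-saturation phenomenon of Kudekar--Pfister~\cite{KudekarPfister}; I would handle it, as in the $\sigma=0$ case, by constructing a Lyapunov potential on the state vector whose minimisers in the limit $\Lc\to\infty$, $\L0/\Lc\to0$, $\Lr\to\infty$ reduce to a single-letter variational problem of Wu--Verd\'u type~\cite{WuVerdu,WuVerduNoisy}. The hypothesis $\delta>\uMMSE(p_X)$ then forces the unique global minimiser to lie in the high-SNR regime, and a quantitative version of the blow-up $\phi^\star_j\sigma^2\gtrsim \delta-\uMMSE(p_X)$ as $\sigma\to0$ is precisely what feeds into the stated sensitivity constant.
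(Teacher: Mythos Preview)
Your high-level plan---reduce to state evolution, then control the coupled fixed point via a potential/free-energy functional whose stationary points are the state-evolution fixed points---is exactly the paper's strategy (Lemma~\ref{lemma:SE} together with Lemma~\ref{lem:phi_convergence}(b), proved in Section~\ref{sec:AnalysisLemma} via Lemma~\ref{lem:main_continuum2}). Two remarks on the write-up: your recursion has the roles of $\phi$ and $\psi$ exchanged relative to Definition~\ref{def:SESequence} (there $\phi_a$ is a row-group noise variance and $\psi_i=\mmse(\sum_b W_{b,i}\phi_b^{-1})$ is a column MSE), and for the expectation statement the paper does not need uniform integrability---once the almost-sure $\limsup$ is bounded the sequence is eventually bounded and reverse Fatou suffices.

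There is one genuine gap. The seed in the spatially coupled construction sits on the \emph{left} and the wave propagates rightward; the potential argument (Lemma~\ref{lem:main_continuum2}) only controls $\phi_a$ for $a\le L-\rho^{-1}-1$, while the last $O(\rho^{-1})$ blocks stay of order $\Phi_M$, not $\sigma^2$. Your proposal does not say how those right-boundary blocks are handled, and the scalar comparison you propose only kicks in after $\phi_j^\star\ge s_0$ has already been established on a block. The paper deals with this not analytically but by \emph{modifying the sensing matrix for this theorem}: it appends a small identity block (the matrix $\tilde A$ in Section~\ref{sec:Choices}) so that the last $2\rho^{-1}N$ signal coordinates are measured directly and contribute $\sigma^2$ to the MSE by construction, while the remaining blocks are controlled by Lemma~\ref{lem:phi_convergence}(b). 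Without this device (or an equivalent one), the right boundary blocks would prevent the global MSE from being $O(\sigma^2)$ and your argument would not close.
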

The performance guarantees in Theorems \ref{thm:MainTheorem} and
\ref{thm:MainTheoremR} are achieved with special constructions of the
sensing matrices $A(n)$. These are matrices with independent Gaussian
entries with unequal variances (heteroscedastic entries), with a band
diagonal structure. The motivation for this construction, and
connection with coding theory is further discussed in Section
\ref{sec:Intuition},
while formal definitions are given in Section \ref{sec:MatrixEnsemble} and
\ref{sec:Choices}.

Notice that, by Proposition \ref{propo:MMSEdim},
$\uMMSE(p_X)\ge \uRenyi(p_X)$, and $\uMMSE(p_X)= \uRenyi(p_X)$
for a broad class of probability measures $p_X$, including all
measures that do not have a singular continuous component
(i.e., decomposes into a pure point mass component and an absolutely continuous component).

The noiseless model (\ref{eq:LinearSensing}) is covered as a special
case of  Theorem \ref{thm:MainTheorem} by taking $\sigma^2 \downarrow 0$. For the reader's
convenience, 
we state the result explicitly as a corollary.
\begin{coro}\label{coro:Noiseless}
Let $p_X$ be a probability measure on the real line. Then, for any
$\delta>\uRenyi(p_X)$ there exists a random converging sequence of sensing matrices 
$\{A(n)\}_{n\ge 0}$, $A(n)\in\reals^{m\times n}$, $m(n)/n\to\delta$
(with distribution depending only on $\delta$)
such that, for any sequence of vectors $\{x(n)\}_{n\ge 0}$ whose
empirical distribution converges to $p_X$, the Bayes optimal AMP asymptotically almost surely recovers $x(n)$ from
$m(n)$ measurements $y = A(n)x(n)\in\reals^{m(n)}$.
(By `asymptotically almost surely' we mean $\MSEAMP(\cS;0) =0$ almost surely, and $\MSEAMPexp(\cS;0) =0$.)
\end{coro}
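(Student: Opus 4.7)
The plan is to derive Corollary \ref{coro:Noiseless} directly from Theorem \ref{thm:MainTheorem} by specializing to the noiseless regime $\sigma^2 = 0$ and then promoting the family of $\ve$-bounds into an exact zero MSE via a countable intersection argument. Nothing new needs to be proved about AMP or about the matrix construction — the corollary is the noiseless face of the main theorem.

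First I would fix $\delta > \uRenyi(p_X)$ and invoke Theorem \ref{thm:MainTheorem} to obtain a single random converging sequence of sensing matrices $\{A(n)\}_{n\ge 0}$ whose distribution depends only on $\delta$. Given an arbitrary deterministic sequence $\{x(n)\}_{n\ge 0}$ with empirical distribution converging to $p_X$ (together with the attendant second-moment convergence required by Definition \ref{def:Converging}(a)), I would pair it with the zero noise vector $w(n)\equiv 0$. Then $\{(x(n),w(n))\}_{n\ge 0}$ is a converging sequence of instances with parameters $(p_X,\sigma^2=0,\delta)$, since condition (b) of Definition \ref{def:Converging} is satisfied trivially with $p_W = \delta_0$ and $\sigma^2 = 0$.

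Next, for each $\ve > 0$ Theorem \ref{thm:MainTheorem} supplies a threshold $\sigma_0(\ve,\delta,p_X) > 0$, and in particular $\sigma = 0 \in [0,\sigma_0(\ve,\delta,p_X)]$. Applying the theorem therefore yields
\begin{eqnarray}
\MSEAMP(\cS;0) \le \ve \quad \text{almost surely}, \qquad \MSEAMPexp(\cS;0) \le \ve,
\end{eqnarray}
for every $\ve > 0$, with the same underlying random sensing-matrix sequence $\{A(n)\}_{n \ge 0}$ (because its distribution depends only on $\delta$, not on $\ve$). To upgrade the almost-sure bound into an equality, I would set $\ve_k = 1/k$ and let $E_k = \{\MSEAMP(\cS;0) \le 1/k\}$; each $E_k$ has probability one, so $\prob(\bigcap_{k\ge 1} E_k) = 1$, and on this intersection $\MSEAMP(\cS;0) \le 1/k$ for every $k$, giving $\MSEAMP(\cS;0) = 0$ almost surely. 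The expected-MSE statement $\MSEAMPexp(\cS;0) = 0$ follows immediately from $\MSEAMPexp(\cS;0) \le \ve_k$ for all $k$.

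There is no substantive obstacle here — the only point that merits a line of care is that the threshold $\sigma_0$ in Theorem \ref{thm:MainTheorem} does depend on the target accuracy $\ve$, so one cannot directly read off ``$\MSEAMP = 0$'' from a single application. What makes the countable intersection legitimate, and thus makes the corollary a genuine consequence rather than only a limiting statement, is precisely the fact that the distribution of $\{A(n)\}_{n\ge 0}$ is fixed (depends only on $\delta$) while $\ve$ is shrunk, so all of the events $E_k$ live on a common probability space.
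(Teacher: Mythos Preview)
Your proof is correct and follows essentially the same approach as the paper, which simply notes that the noiseless case is covered by Theorem \ref{thm:MainTheorem} upon taking $\sigma^2\downarrow 0$; you have merely spelled out the countable-intersection step needed to pass from ``$\le\ve$ for every $\ve$'' to ``$=0$'' on a single probability space, which is exactly what the paper leaves implicit.
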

Note that it would be interesting to prove a stronger guarantee in the
noiseless case, namely $\lim_{t\to\infty}x^t(A(n);y(n)) = x(n)$ with
probability converging to $1$ as $n\to\infty$. The present paper does
not lead to a proof of this statement.
%
%
\subsection{Discussion}
\label{sec:Discussion}

Theorem \ref{thm:MainTheorem} and Corollary \ref{coro:Noiseless} are,
in many ways, puzzling. It is instructive to spell out in detail a few
specific examples, and discuss interesting features.

\vspace{0.2cm}

{\bf Example 1 (Bernoulli-Gaussian signal).} Consider a Bernoulli-Gaussian distribution
\begin{eqnarray}
p_X = (1-\ve) \, \delta_0 + \ve \, \gamma_{\mu,\sigma}
\end{eqnarray}
where $\gamma_{\mu,\sigma}(\de x) =
(2\pi\sigma^2)^{-1/2}\exp\{-(x-\mu)^2/(2\sigma^2)\}\de x$ is the Gaussian
measure with  mean $\mu$ and variance $\sigma^2$. This model
has been studied numerically in a number of papers, including
\cite{BaronBayesian,KrzakalaEtAl}.  By Proposition \ref{propo:Renyi},
we have $\uRenyi(p_X) = \ve$, and by Proposition \ref{propo:MMSEdim},
$\uMMSE(p_X) = \lMMSE(p_X) = \ve$ as well.

Construct random signals $x(n)\in\reals^n$ by sampling
i.i.d. coordinates $x(n)_i\sim p_X$. Glivenko-Cantelli's theorem
implies that the empirical distribution of the coordinates of $x(n)$
converges almost surely to $p_X$, hence we can apply  Corollary
\ref{coro:Noiseless} to recover $x(n)$ from $m(n)=n\ve+o(n)$
spatially coupled measurements $y(n)\in\reals^{m(n)}$. Notice that the number of
non-zero entries in $x(n)$ is, almost surely, $k(n)=n\ve+o(n)$.
Hence, we can restate the implication of  Corollary
\ref{coro:Noiseless}  as follows.
A sequence of vectors $x(n)$ with Bernoulli-Gaussian distribution and
$k(n)$ nonzero entries can almost surely recovered by $m(n) =
k(n)+o(n)$ spatially coupled measurements.

\vspace{0.2cm}

{\bf Example 2 (Mixture signal with a point mass).}  The above remarks generalize immediately to arbitrary
mixture distributions of the form
\begin{eqnarray}
p_X = (1-\ve) \, \delta_0 + \ve \, q\, ,\label{eq:Mixture}
\end{eqnarray}
where $q$ is a measure that is absolutely continuous with respect to
Lebesgue measure, i.e., $q(\de x) = f(x)\, \de x$ for some measurable
function $f$. Then, by Proposition \ref{propo:Renyi},
we have $\uRenyi(p_X) = \ve$, and by Proposition \ref{propo:MMSEdim},
$\uMMSE(p_X) = \lMMSE(p_X) = \ve$ as well. Arguing as above we have the following.
\begin{coro}
Let $\{x(n)\}_{n\ge 0}$ be a sequence of vectors with
i.i.d. components $x(n)_i\sim p_X$ where $p_X$ is a mixture
distribution as per Eq.~(\ref{eq:Mixture}). Denote by $k(n)$ the
number of nonzero entries in $x(n)$. Then,  almost
surely as $n\to\infty$, Bayes optimal AMP recovers the signal $x(n)$ from $m(n)  =
k(n)+o(n)$ spatially coupled measurements. 
\end{coro}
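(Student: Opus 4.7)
The plan is to recognize this corollary as an immediate instantiation of Corollary \ref{coro:Noiseless} combined with the law of large numbers, exactly paralleling the reasoning sketched for Example 1. First, I would apply Proposition \ref{propo:Renyi} to the mixture $p_X = (1-\ve)\delta_0 + \ve q$: since $q$ is absolutely continuous with respect to Lebesgue measure, the proposition yields $\uRenyi(p_X) = \ve$ exactly, with the required $H(\lfloor X\rfloor) < \infty$ following from the footnote sufficient condition $\E[\log(1+|X|)]<\infty$ together with the fact that $p_X$ has bounded second moment.

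Second, I would verify that $\{x(n)\}_{n\ge 0}$ is a converging sequence of instances in the sense of Definition \ref{def:Converging}. Since the coordinates of $x(n)$ are i.i.d.\ draws from $p_X$, the Glivenko-Cantelli theorem gives almost sure weak convergence of the empirical distribution of $x(n)$ to $p_X$, and the strong law of large numbers applied to $x_i(n)^2$ gives $n^{-1}\|x(n)\|_2^2 \to \E\{X^2\}$ almost surely; condition $(b)$ of Definition \ref{def:Converging} is vacuous in the noiseless regime. Then, for any fixed $\delta > \ve$, Corollary \ref{coro:Noiseless} supplies a random converging sequence of band-diagonal sensing matrices $\{A(n)\}$ with $m(n)/n \to \delta$ for which Bayes optimal AMP asymptotically almost surely recovers $x(n)$ from $y(n) = A(n)x(n)$. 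Applying the strong law of large numbers once more to the Bernoulli indicators $\mathbf{1}\{x_i(n)\neq 0\}$ yields $k(n)/n \to \ve$ almost surely, so that $m(n) - k(n) = (\delta-\ve)n + o(n)$; choosing $\delta$ arbitrarily close to $\ve$ gives the shorthand $m(n) = k(n) + o(n)$.

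The argument is essentially bookkeeping, and no real obstacle arises beyond invoking the three ingredients already available in the paper (Proposition \ref{propo:Renyi}, Corollary \ref{coro:Noiseless}, and the strong law of large numbers). The one subtle point worth flagging is that Corollary \ref{coro:Noiseless} delivers a distinct sensing-matrix sequence for each $\delta > \ve$, so the conclusion must be interpreted as in Example 1: for every $\eta > 0$ there is a construction achieving recovery from at most $(1+\eta)k(n) + o(n)$ measurements, rather than as a single universal sensing-matrix sequence attaining $m(n) - k(n) = o(n)$ strictly.
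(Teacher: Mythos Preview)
Your proposal is correct and follows essentially the same approach as the paper: the paper's argument (given in the text of Example~2, which says ``Arguing as above'' and refers back to Example~1) applies Proposition~\ref{propo:Renyi} to get $\uRenyi(p_X)=\ve$, invokes Glivenko--Cantelli to verify the converging-sequence hypothesis, applies Corollary~\ref{coro:Noiseless}, and notes that $k(n)=n\ve+o(n)$ almost surely. You are in fact slightly more careful than the paper in explicitly checking the second-moment condition of Definition~\ref{def:Converging} and in flagging the interpretation of $m(n)=k(n)+o(n)$ as requiring a separate construction for each $\delta>\ve$.
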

Under the regularity hypotheses of \cite{WuVerdu}, no scheme can do
substantially better, i.e., reconstruct $x(n)$ from
$m(n)$ measurements if $\underset{n\to\infty}{\limsup}\, m(n)/k(n)<1$.

One way to think about this result is the following. If an oracle gave
us the support of $x(n)$, we would still need $m(n)\ge k(n)-o(n)$
measurements to reconstruct the signal. Indeed, the entries in the
support have distribution $q$, and $\uRenyi(q)=1$.   Corollary
\ref{coro:Noiseless} implies that the measurements overhead for
estimating the support of $x(n)$ is  sublinear, $o(n)$, even when the
support is of order $n$.

It is sometimes informally argued that compressed sensing requires at
least $\Theta(k\log (n/k))$ for `information-theoretic reasons',
namely that specifying the support requires about $nH(k/n)\approx
k\log(n/k)$ bits. This argument is of course incomplete because it
assumes that each measurement $y_i$ is described by a bounded number
of bits.
Since it is folklore to say that sparse signal recovery requires $m\ge
C\, k\log (n/k)$ measurement, it is instructive to survey the results
of this type and explain why they do not apply to the present setting.
This elucidates further the implications of our results.

Specifically, \cite{Wainwright2009,Saligrama} prove
information-theoretic lower bounds on the required number of
measurements, under specific constructions for
the random sensing matrix $A$. Further, these papers focus on the
specific problem of exact support recovery. 
The paper \cite{WainwrightEllP} proves minimax bounds for
reconstructing vectors belonging to $\ell_p$-balls. 
Notice that these bounds are usually proved by exhibiting a least
favorable prior, which is close to a signal with i.i.d. coordinates.
However, as the
noise variance tends to zero, these bounds depend on the sensing matrix in a way that is
difficult to quantify. In particular, they provide no explicit lower
bound on the number of measurements required for exact recovery in the
noiseless limit. Similar bounds were obtained for arbitrary 
measurement matrices in \cite{CandesDavenport}. Again, these lower
bounds vanish as noise tends to zero as soon as $m(n)\ge k(n)$.

A different line of work derives lower bounds from Gelfand' width
arguments \cite{Donoho1,KashinTemlyakov}.
These lower bounds are only proved to be a necessary
condition for a stronger reconstruction guarantee. Namely, these
works require the vector of measurements $y= Ax$ to enable recovery
for \emph{all} $k$-sparse vectors $x\in\reals^n$. This corresponds to the
`strong' phase transition of \cite{DoTa05,Do}, and is also referred to
as the `for all' guarantee in the computer science literature
\cite{Indyk}.

The lower bound that comes closest to the present setting is the `randomized' lower
bound \cite{IndykLower}. In this work the authors consider a fixed
signal $x$ and a random sensing matrix as in our setting. In other
words they do not assume a standard minimax setting.
However they require an $\ell_1-\ell_1$ error guarantee which is a stronger
stability condition than what is achieved in Theorem
\ref{thm:MainTheoremR}, allowing for a more powerful noise process.
Indeed the same paper also proves that recovery is possible from $m(n)
= O(k(n))$ measurements under stronger conditions.

\vspace{0.2cm}

{\bf Example 3 (Discrete signal).} Let $K$ be a fixed integer,
$a_1,\dots,a_K\in\reals$, and $(p_1,p_2,\dots,p_K)$ be a collection of
non-negative numbers that add up to one. Consider the probability
distribution that puts mass $p_i$ on each $a_i$
\begin{eqnarray}
p_X = \sum_{i=1}^Kp_i\,\delta_{a_i}\, ,\label{eq:Discrete}
\end{eqnarray}
and let $x(n)$ be a signal with i.i.d. coordinates $x(n)_i\sim
p_X$. By Proposition \ref{propo:Renyi}, we have $\uRenyi(p_X)=0$.
As above, the empirical distribution of the coordinates of the vectors
$x(n)$ converges to $p_X$. By applying Corollary \ref{coro:Noiseless}
we obtain the following
\begin{coro}
Let $\{x(n)\}_{n\ge 0}$ be a sequence of vectors with
i.i.d. components $x(n)_i\sim p_X$ where $p_X$ is a discrete
distribution as per Eq.~(\ref{eq:Discrete}). Then, almost
surely as $n\to\infty$, Bayes optimal AMP recovers the signal $x(n)$
from $m(n)  =o(n)$ 
spatially coupled measurements. 
\end{coro}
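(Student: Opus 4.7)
I would derive this corollary from Corollary~\ref{coro:Noiseless} by a diagonalization over a sequence of undersampling rates tending to zero.

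The first step is to verify the hypotheses of that corollary for every $\delta>0$. Since $p_X=\sum_{i=1}^{K}p_i\,\delta_{a_i}$ is purely atomic with finite support, Proposition~\ref{propo:Renyi} applied with the decomposition $p_X=1\cdot p_X+0\cdot\widetilde\nu$ (i.e.\ with $\ve=0$) yields $\uRenyi(p_X)=0$, while the finite support forces $\E\{X^2\}<\infty$. The Glivenko-Cantelli theorem together with the strong law of large numbers for $\{x(n)_i\}$ and $\{x(n)_i^2\}$ then shows that the i.i.d.\ signal sequence $\{x(n)\}$ satisfies condition~$(a)$ of Definition~\ref{def:Converging} almost surely with limiting law $p_X$. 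Conditioning on this event of full probability, I may treat $\{x(n)\}$ as a fixed converging sequence to which Corollary~\ref{coro:Noiseless} applies at any $\delta>0$.

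Next, for each integer $k\ge 1$ I invoke Corollary~\ref{coro:Noiseless} at $\delta=1/k$, obtaining a random sensing sequence $\{A_k(n)\}_{n\ge 0}$ with $m_k(n)/n\to 1/k$ and $\MSEAMPexp(\cS_k;0)=0$, where $\cS_k$ denotes the instance sequence built from $A_k$. Unrolling the double limit $\lim_t\limsup_n\E\{n^{-1}\|x_k^t-x(n)\|_2^2\}=0$ (here $x_k^t$ is the AMP estimate at time $t$ using $A_k(n)$) yields deterministic integers $T_k$ and thresholds $N_k\uparrow\infty$ such that, for every $n\ge N_k$, both $m_k(n)/n\le 2/k$ and $\E\{n^{-1}\|x_k^{T_k}-x(n)\|_2^2\}\le 4^{-k}$ hold. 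Markov's inequality then gives $\prob(n^{-1}\|x_k^{T_k}-x(n)\|_2^2>2^{-k})\le 2^{-k}$.

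Finally I diagonalize: after replacing $N_k$ by $\max(N_k,4^k)$ if needed, set $k(n)=\lfloor\log_4 n\rfloor$ and define the final matrix sequence by $A(n)=A_{k(n)}(n)$, run for $T_{k(n)}$ AMP iterations. Then $m(n)/n\le 2/k(n)\to 0$, i.e.\ $m(n)=o(n)$; Borel-Cantelli applied to the summable failure probabilities forces $n^{-1}\|x^{T_{k(n)}}-x(n)\|_2^2\to 0$ almost surely; and dominated convergence using the $4^{-k(n)}$ expectation bounds yields $\MSEAMPexp(\cS;0)=0$. The main obstacle, though technically minor, is the passage from the double-limit statement $\MSEAMPexp(\cS_k;0)=0$ of Corollary~\ref{coro:Noiseless} to the uniform-in-$n$ finite-sample expectation bound used above; splicing distinct matrix constructions across disjoint ranges of $n$ does not require independence in $k$ and is otherwise routine.
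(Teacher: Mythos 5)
Your overall strategy — note that $\uRenyi(p_X)=0$ for discrete $p_X$, then let the target undersampling rate tend to zero and diagonalize over $\delta\downarrow 0$ — is the right way to make rigorous what the paper states informally. The paper gives no separate proof here: after observing $\uRenyi(p_X)=0$ and the convergence of the empirical distribution, it simply writes ``By applying Corollary~\ref{coro:Noiseless} we obtain the following'' and does not spell out the passage from ``for every $\delta>0$, $m(n)/n\to\delta$ works'' to ``$m(n)=o(n)$ works.'' So you are filling a real gap. Your setup (verifying condition~$(a)$ of Definition~\ref{def:Converging} via Glivenko--Cantelli and the SLLN, invoking Proposition~\ref{propo:Renyi} with $\ve=0$) is fine, and splicing the matrix constructions across disjoint ranges of $n$ is legitimate since Corollary~\ref{coro:Noiseless} makes no joint claim across $\delta$.

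However, the final Borel--Cantelli step does not go through as written, and this is a genuine gap. First, a bookkeeping issue: replacing $N_k$ by $\max(N_k,4^k)$ moves $N_k$ in the wrong direction. Taking $k(n)=\lfloor\log_4 n\rfloor$ then guarantees $n\ge 4^{k(n)}$, but not $n\ge N_{k(n)}$, since $N_k$ supplied by the $\limsup$ definition may grow far faster than $4^k$. You should instead set $k(n)=\max\{k:N_k\le n\}$ (after passing to a strictly increasing subsequence of $N_k$). Second, and more seriously: even with the corrected $k(n)$, Markov gives a failure probability of at most $2^{-k(n)}$ at each $n$, and for $n\in[N_k,N_{k+1})$ this is constant, so $\sum_n\prob(\mathrm{fail}\text{ at }n)\ge\sum_k(N_{k+1}-N_k)2^{-k}$, which diverges whenever $N_k$ grows super-geometrically --- and Corollary~\ref{coro:Noiseless} imposes no rate on $N_k$. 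Even with your original $k(n)=\lfloor\log_4 n\rfloor$, the bound is $2^{-k(n)}\asymp n^{-1/2}$, already non-summable. Thus the Borel--Cantelli argument establishes neither the claimed almost-sure convergence nor convergence in probability with a summable rate. The expectation conclusion $\MSEAMPexp(\cS;0)=0$ does follow from the diagonal (since $\E\{n^{-1}\|x^{T(n)}-x(n)\|^2\}\le 4^{-k(n)}\to 0$), but the almost-sure conclusion needs a different device, e.g.\ exploiting that by Lemma~\ref{lemma:SE} the block-wise MSE at each fixed $t$ and fixed ensemble converges almost surely to a \emph{deterministic} state-evolution value, which the state-evolution analysis (Lemma~\ref{lem:phi_convergence}) already bounds below $\varepsilon_k$ uniformly over the randomness; Markov-plus-Borel--Cantelli is the wrong tool because the corollary gives no quantitative control on how fast the $\limsup$ stabilizes in $n$.
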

It is important to further discuss the last statement because the reader might
be misled into too optimistic a conclusion. Consider any 
signal $x\in\reals^n$. For practical purposes, this will be represented
with finite precision, say as a vector of $\ell$-bit numbers. Hence, in
practice, the distribution $p_X$ is always discrete, with
$K=2^{\ell}$ a fixed number dictated by the precision requirements. 
A sublinear
number of measurements $m(n)=o(n)$ will then be sufficient to achieve
this precision.

On the other hand,  Theorem
\ref{thm:MainTheorem} and Corollary \ref{coro:Noiseless} are
asymptotic statements, and the convergence rate is not claimed to be
uniform in $p_X$. In particular, the values of $n$ at which it
becomes accurate will likely increase with $K$.

\vspace{0.2cm}

{\bf Example 4 (A discrete-continuous mixture).} Consider the
probability distribution 
\begin{eqnarray}
p_X = \ve_+\,\delta_{+1}+\ve_-\delta_{-1}+\ve\, q\, ,
\end{eqnarray}
where $\ve_++\ve_-+\ve = 1$ and the probability measure
$q$ has a density with respect to Lebesgue measure. Again, 
let $x(n)$ be a vector with i.i.d. components $x(n)_i\sim p_X$. 
We can apply  Corollary \ref{coro:Noiseless} to conclude that 
$m(n) = n\ve+o(n)$ spatially coupled measurements are sufficient. 
This should be contrasted with the case of sensing matrices
with i.i.d. entries studied in \cite{DoTa10b} under convex
reconstruction methods (namely solving the feasibility problem $y=Ax$
under the constraint $\|x\|_{\infty}\le 1$). In this case $m(n) = n(1+\ve)/2+o(n)$
measurements are necessary.

\vspace{0.2cm}

In the next section we describe the basic intuition
behind the surprising phenomenon in Theorems \ref{thm:MainTheorem}
and~\ref{thm:MainTheoremR}, and why spatially coupled sensing matrices are so useful.
We conclude by stressing once more the limitations of these results: 
\begin{itemize}
\item The Bayes optimal AMP algorithm requires knowledge of the 
signal distribution $p_X$. Notice however that only a good
approximation of $p_X$ (call it $p_{\tX}$, and denote by $\tX$ the
corresponding random variable) is sufficient. Assume indeed that
$p_{X}$ and $p_{\tX}$ can be coupled in such a way that
$\E\{(X-\tX)^2\}\le \tilde{\sigma}^2$. Then
\begin{eqnarray}
x = \tx +u
\end{eqnarray}
where $\|u\|_2^2\lesssim n \tilde{\sigma}^2$. This is roughly equivalent to
adding to the noise vector $z$ further `noise' $\widetilde{z}$ with
variance $\tilde{\sigma}^2/\delta$. By this argument the
guarantee in Theorem \ref{thm:MainTheoremR} degrades gracefully as
$p_{\tX}$ gets different from $p_{X}$.
Another argument that leads to the same conclusion consists in
studying the evolution of the algorithm (\ref{eq:AMP1}),
(\ref{eq:AMP2}) when $\eta_t$ is matched to the incorrect prior, see
Appendix \ref{app:Prior}. 

Finally, it was demonstrated numerically in \cite{SchniterEM,KrzakalaEtAl}
that, in some cases, a good `proxy' for $p_X$ can be learned through an 
Expectation-Maximization-style iteration.  A rigorous study of this
approach goes beyond the scope of present paper.
\item In particular, the present approach does not provide uniform
  guarantees over the class of, say, sparse signals characterized by
  $p_X(\{0\})\ge 1-\ve$. In particular, both the phase transition
  location, cf. Eq.~(\ref{eq:MainTheorem}), and the robustness
  constant, cf. Eq.~(\ref{eq:Robustness}), depend on the distribution
  $p_X$. This should be contrasted with the minimax approach of
  \cite{DMM09,DMM-NSPT-11,DonohoJohnstoneMontanari} which provides
  uniform guarantees that are uniform over sparse signals.
\item As mentioned above, the guarantees in Theorems~\ref{thm:MainTheorem} and~\ref{thm:MainTheoremR} are only
  asymptotic. It would be important to develop  analogous
  non-asymptotic results.
\item The stability bound (\ref{eq:Robustness}) is non-uniform, in that
the proportionality constant $C$ depends on the signal distribution. 
It would be important to establish analogous bounds that are  uniform
over suitable classes of distributions. (We do not expect Eq.~(\ref{eq:Robustness})
to hold uniformly over \emph{all} distributions.)
\end{itemize}

%
%
\subsection{How does spatial coupling work?}
\label{sec:Intuition}

Spatial coupling was developed in coding theory to construct capacity
achieving LDPC codes \cite{Felstrom,Costello,KudekarBMS,HassaniEtAl,kudekar2012spatially}. 
The standard construction starts from the parity check matrix of an LDPC
code that is sparse but unstructured apart from the degree sequence. A
spatially coupled ensemble is then obtained by enforcing a
band-diagonal structure, while keeping the degree sequence unchanged. 
Usually this is done by graph liftings, but the underlying principle
is more general \cite{HassaniEtAl}.

Following the above intuition, spatially coupled sensing matrices $A$ are, roughly speaking, random
band-diagonal matrices. The construction given below (as the one of
\cite{KrzakalaEtAl}) uses matrices with independent zero-mean Gaussian
entries, with non-identical variances (heteroscedastic
entries). However, the simulations of \cite{JavanmardMon12} suggest
that a much broader set of matrices display similar performances. As
discussed in Section \ref{sec:MatrixEnsemble}, the construction is
analogous to graph liftings. We start by a matrix of variances $W =
(W_{r,c})$ and obtain the sensing matrix $A$ by replacing each entry
$W_{r,c}$ by a block with i.i.d. Gaussian entries with variance
proportional to 
$W_{r,c}$. 

It is convenient to think of the graph structure
that they induce on the reconstruction problem. Associate one node
(a \emph{variable node} in the language of factor graphs) to each
coordinate $i$ in the unknown signal $x$. Order these nodes on the
real line $\reals$,
putting the $i$-th node  at   location $i\in\reals$. Analogously, associate a node
(a \emph{factor node})
to each coordinate $a$ in the measurement vector $y$, and place the
node $a$ at position $a/\delta$ on the same line. Connect this node to
all the variable nodes $i$ such that $A_{ai}\neq 0$. If $A$ is band
diagonal, only  nodes that are placed close enough will be connected by an edge.
See Figure~\ref{fig:Spatial_graph} for an illustration.

In a spatially coupled matrix, additional measurements are associated
to the first few coordinates of $x$, say coordinates
$x_1,\dots,x_{n_0}$ with $n_0$ much smaller than $n$. 
This has a negligible impact on the overall undersampling ratio as $n/n_0\to\infty$.
Although the overall undersampling remains $\delta<1$, the coordinates
$x_1,\dots,x_{n_0}$ are oversampled.
This ensures that these first 
coordinates are recovered correctly (up to a mean square error of
order $\sigma^2$). As the algorithm is iterated, the
contribution of these first few coordinates is correctly subtracted from all
the measurements, and hence we can effectively eliminate those nodes
from the graph. In the resulting graph, the first few variables are
effectively oversampled and hence the algorithm will reconstruct
their values, up to a mean square error of order $\sigma^2$. As the process is iterated, 
variables are progressively reconstructed, proceeding from left to
right along the node layout.

While the above explains the basic dynamics of AMP reconstruction algorithms
under spatial coupling, a careful consideration reveals that this
picture leaves open several challenging questions. In particular,
why does the overall undersampling factor $\delta$ have to exceed $\uRenyi(p_X)$
for reconstruction to be successful?
Our proof is based on a potential function argument. We will prove
that there exists a potential function for the AMP algorithm, such
that, when  $\delta>\uRenyi(p_X)$, this function has its global
minimum close to exact
 reconstruction. Further, we will prove that, unless this minimum is
 essentially achieved, AMP can always decrease the function. 
This technique is different from the one followed in \cite{KudekarBEC}
for the LDPC codes over the binary erasure channel, and we think it is
of independent interest.

\begin{figure}[!t]
\centering
\includegraphics*[viewport = 60 240 740 480, width = 5in]{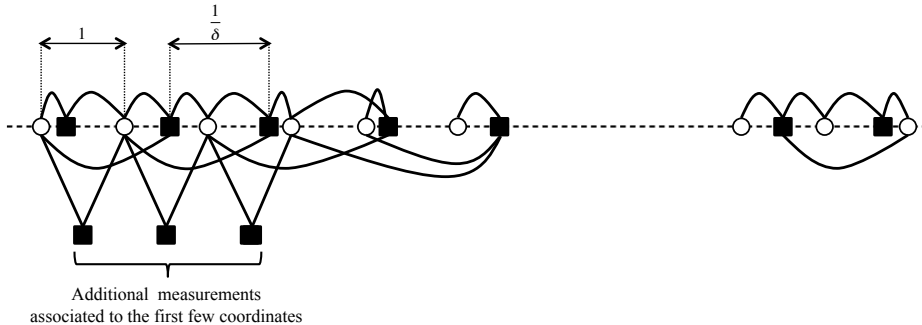}
\caption{\small {Graph structure of a spatially coupled matrix. Variable nodes are shown as circle and check nodes are represented by square.}}  \label{fig:Spatial_graph}
\end{figure}
%
%
%
\vspace{.5cm}
\subsection{Further related work}
\label{sec:Related}

The most closely related earlier work was already discussed above.

More broadly, message passing algorithms for compressed sensing where 
the object of a number of studies studies, starting with
\cite{BaronBayesian}.
As mentioned, we will focus on approximate message passing (AMP) as
introduced in \cite{DMM09,DMM_ITW_I}. As shown in \cite{DonohoJohnstoneMontanari}
these algorithms can be used in conjunction with a rich class of
denoisers $\eta(\,\cdot\,)$. A subset of these denoisers arise as
posterior mean associated to a prior $p_X$. Several interesting
examples were studied by Schniter and collaborators 
\cite{SchniterTurbo,SchniterOFDM,SchniterTree}, and by Rangan and
collaborators  \cite{RanganGAMP,RanganQuantized}.

Spatial coupling has been the object of growing interest within coding
theory over the last few years. The first instance of spatially
coupled code ensembles were the convolutional LDPC codes of Felstr\"om
and Zigangirov \cite{Felstrom}. While the excellent performances of
such codes had been known for quite some time \cite{Costello}, the
fundamental reason was not elucidated until recently \cite{KudekarBEC}
(see also \cite{Lentmaier}). In particular \cite{KudekarBEC}
proved, for communication over the binary erasure channel (BEC), 
that the thresholds of spatially coupled ensembles under message
passing decoding coincide with the thresholds of the base LDPC code
under MAP decoding. In particular, this implies that spatially coupled
ensembles achieve capacity over the BEC. The analogous statement for
general memoryless symmetric channels was first elucidated in
\cite{KudekarBMS} and
finally proved in \cite{kudekar2012spatially}. The paper
\cite{HassaniEtAl}
discusses similar ideas in a number of graphical models.

The first application of spatial coupling ideas to compressed sensing
is due to Kudekar and Pfister \cite{KudekarPfister}. 
They consider a class of sparse spatially coupled sensing matrices, very
similar to parity check matrices for spatially coupled LDPC codes. 
On the other hand, their proposed message passing algorithms do not make use of
the signal distribution $p_X$, and do not fully exploit the
potential of spatially coupled matrices. The message passing algorithm
used here belongs to the general class introduced in \cite{DMM09}. The
specific use of the minimum-mean square error denoiser was suggested
in \cite{DMM_ITW_I}. The same choice is made in \cite{KrzakalaEtAl},
which also considers Gaussian matrices with heteroscedastic entries
although the variance structure is somewhat less general. 

Finally, let us mention that robust sparse recovery of $k$-sparse
vectors from $m = O(k\log\log (n/k))$ measurement is possible,
using suitable `adaptive' sensing schemes \cite{IndykAdaptive}.
%
%
\section{Matrix and algorithm construction}

In this section, we define an ensemble of random matrices, and the
corresponding choices of $Q^t$, $\ons^t$, $\eta_t$ that achieve the
reconstruction guarantees in Theorems \ref{thm:MainTheorem} and \ref{thm:MainTheoremR}.
We proceed by first introducing a general ensemble of random
matrices. Correspondingly, we define a deterministic recursion named
state evolution, that plays a crucial role in the algorithm analysis. 
In Section \ref{sec:GeneralAlgo}, we define the algorithm parameters and
construct specific choices of $Q^t$, $\ons^t$, $\eta_t$. The last section also contains a restatement of
Theorems \ref{thm:MainTheorem} and \ref{thm:MainTheoremR}, in which this construction is made explicit.

\subsection{General matrix ensemble}
\label{sec:MatrixEnsemble}

The sensing matrix $A$ will be constructed randomly, from an 
ensemble denoted by $\Ens(W,M,N)$.  
The ensemble depends on two
integers $M,N\in\naturals$, and on a matrix with non-negative
entries $W\in \reals_+^{\Rows\times \Cols}$,
whose rows and columns are indexed by the finite sets $\Rows$, $\Cols$
(respectively `rows' and `columns').
The band-diagonal structure that is characteristic of spatial coupling
is imposed by a suitable choice of the matrix $W$. In this section we
define the ensemble for a general choice of $W$. In
Section \ref{sec:Choices} we discuss a class of choices for $W$
that corresponds to spatial coupling, and that yields Theorems
\ref{thm:MainTheorem} and \ref{thm:MainTheoremR}.

In a nutshell, the sensing matrix $A$ is obtained from $W$ through a
suitable `lifting' procedure. Each entry $W_{r,c}$ is replaces my an
$M\times N$ block with i.i.d. entries
$A_{ij}\sim\normal(0,W_{r,c}/M)$. Rows and columns of $A$ are then
re-ordered uniformly at random to ensure exchangeability.
For the reader
familiar with the application of spatial coupling to coding theory, it
might be useful to notice the differences and analogies with
graph liftings. In that case, the `lifted' matrix is obtained by replacing each
edge in the base graph with a random permutation matrix.

Passing to the formal definition, we will assume that the matrix $W$ is \emph{roughly row-stochastic}, i.e., 
\begin{eqnarray}\label{eqn:almost_row_stochastic}
\frac{1}{2}\le\sum_{c\in\Cols}W_{r,c} \le 2\, ,\;\;\;\;\;\;\; \mbox{for all
}r\in\Rows\, .
\end{eqnarray}
(This is a convenient simplification for ensuring correct
normalization of $A$.)
We will let $|\Rows|\equiv L_r$ and $|\Cols| = L_c$ denote the 
matrix dimensions.
The ensemble parameters are related to the
sensing matrix dimensions by $n= NL_c$ and $m=ML_r$.

In order to describe a random matrix $A\sim\Ens(W,M,N)$ from this
ensemble, partition the columns and row indices in, respectively,
$L_c$ and $L_r$ groups of equal
size. Explicitly
\begin{align*}
[n] &= \cup_{s\in\Cols}C(s)\, ,\;\;\;\;  |C(s)|= N\, ,\\
[m] &= \cup_{r\in\Rows}R(r)\, ,\;\;\;\;  |R(r)|=M\, .
\end{align*}
Here and below we use $[k]$ to denote the set of first $k$ integers
$[k]\equiv\{1,2,\dots,k\}$. Further, if $i\in R(r)$ or $j\in C(s)$ we
will write, respectively,   $r= \gr(i)$ or $s= \gr(j)$. In other
words $\gr(\,\cdot\, )$ is the operator determining the group index of
a given row or column.

With this notation we have the following concise definition of the
ensemble.
\begin{definition}
A random sensing matrix  $A$ is distributed according to the ensemble
$\Ens(W,M,N)$ (and we write $A\sim \Ens(W,M,N)$) if the partition of
rows and columns  ($[m]= \cup_{r\in\Rows}R(r)$ and $[n] =
\cup_{s\in\Cols}C(s)$) are uniformly random, and given this partitioning, the entries
$\{A_{ij}, \;\; i\in [m], j\in [n]\}$ are independent  Gaussian random
variables with \footnote{As in many papers on compressed sensing, the matrix here has independent zero-mean
Gaussian entries; however, unlike standard practice, here the entries are of widely different variances.}
\begin{eqnarray}\label{eqn:A_W}
A_{ij}\sim \normal\Big(0,\frac{1}{M}\, W_{\gr(i),\gr(j)}\Big)\, .
\end{eqnarray}
\end{definition}
We refer to Fig.~\ref{fig:Amatrix} for an illustration. Note that the
randomness of the partitioning of row and column indices  is only used
in the proof of Lemma \ref{lemma:SE}
(cf. \cite{JM-StateEvolution}), and hence this and other illustrations
assume that the partitions are contiguous. 

For proving Theorem \ref{thm:MainTheorem} and Theorem
\ref{thm:MainTheoremR} we will consider suitable sequences of
 ensembles  $\Ens(W,M,N)$ with undersampling ratio converging to
 $\delta$. While a complete description is given below, let us stress
that we take the limit $M,N\to\infty$ (with $M=N\delta$)
 \emph{before} the limit $L_r,L_c \to \infty$ . Hence, the resulting matrix $A$ is 
essentially dense: the fraction of non-zero entries per row vanishes
only \emph{after} the number of groups goes to $\infty$. 
\begin{figure}[!t]
\centering
\includegraphics*[viewport = 60 60 700 520, width = 5in]{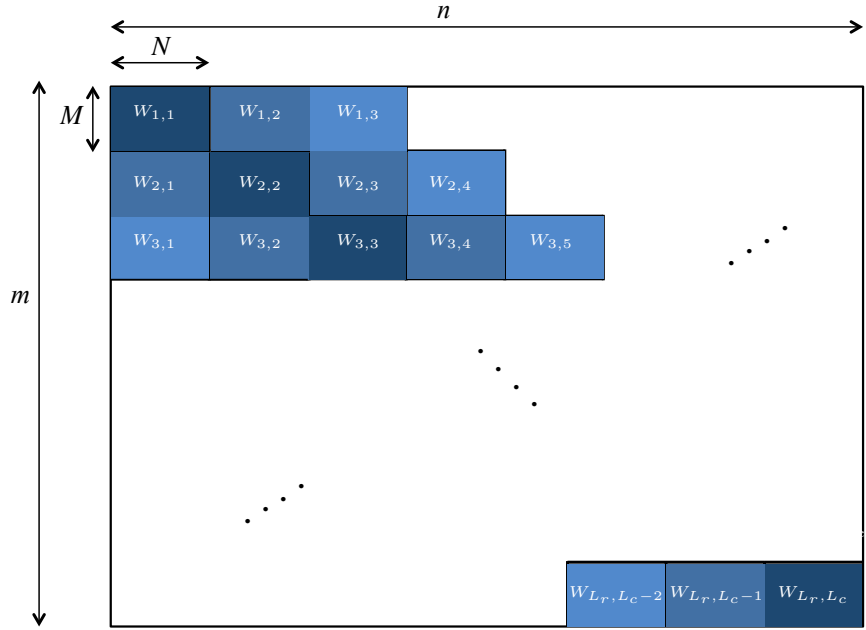}
\put(-292,217){\tiny{\white $W_{1,1}$}}
\put(-252,217){\tiny{\white $W_{1,2}$}}
\put(-215,217){\tiny{\white $W_{1,3}$}}
\put(-292,190){\tiny{\white $W_{2,1}$}}
\put(-252,190){\tiny{\white $W_{2,2}$}}
\put(-215,190){\tiny{\white $W_{2,3}$}}
\put(-180,190){\tiny{\white $W_{2,4}$}}
\put(-292,168){\tiny{\white $W_{3,1}$}}
\put(-252,168){\tiny{\white $W_{3,2}$}}
\put(-215,168){\tiny{\white $W_{3,3}$}}
\put(-180,168){\tiny{\white $W_{3,4}$}}
\put(-142,168){\tiny{\white $W_{3,5}$}}
\put(-145,60){\tiny{\white $W_{\Lr-1,\Lc-3}$}}
\put(-131,60){\tiny{\white $W_{\Lr-1,\Lc-2}$}}
\put(-91,60){\tiny{\white $W_{\Lr-1,\Lc-1}$}}
\put(-50,60){\tiny{\white $W_{\Lr-1,\Lc}$}}
\put(-127,35){\tiny{\white $W_{\Lr,\Lc-2}$}}
\put(-89,35){\tiny{\white $W_{\Lr,\Lc-1}$}}
\put(-48,35){\tiny{\white $W_{\Lr,\Lc}$}}
\caption{\small {Construction of the spatially coupled measurement matrix $A$ as described in Section~\ref{sec:MatrixEnsemble}. The matrix is divided into blocks with size $M$ by $N$. (Number of blocks in each row and each column are respectively $\Lc$ and $\Lr$, hence $m= M \Lr$, $n = N\Lc$). The matrix elements $A_{ij}$ are chosen as $\normal(0,\frac{1}{M}W_{\gr(i),\gr(j)})$. In this figure, $W_{i,j}$ depends only on $|i-j|$ and thus blocks on each diagonal have the same variance.}}  \label{fig:Amatrix}
\end{figure}

%

\subsection{State evolution}
\label{sec:StateEvolution}

State evolution allows an exact asymptotic  analysis of AMP algorithms
in the limit of a large number of dimensions. As indicated by the
name, it bears close resemblance to the  density evolution method in
iterative coding theory \cite{RiU08}. Somewhat surprisingly, this
analysis approach is asymptotically exact despite the underlying factor 
graph being far from locally tree-like.

State evolution was first developed  in \cite{DMM09} on the
basis of heuristic arguments, and substantial numerical evidence. 
Subsequently, it was proved to hold for Gaussian sensing matrices with
i.i.d. entries, and a broad class of iterative algorithm in \cite{BM-MPCS-2011}.
These proofs were further generalized in \cite{RanganGAMP}, to cover
`generalized' AMP algorithms.

In the present case, state evolution takes the following form.
\footnote{In previous work, the state variable concerned a single scalar, representing 
the mean-squared error in the current reconstruction, averaged across all coordinates. 
In this paper, the dimensionality of the state variable is much larger, because it contains 
$\psi$, an individualized MSE for each coordinate of the reconstruction and also 
$\phi$, a noise variance for the residuals $r^t$ for each measurement coordinate. }

\begin{definition}\label{def:SEmapa}
Given $W\in \reals_+^{\Rows\times\Cols}$ roughly row-stochastic, and
$\delta>0$, the
corresponding \emph{state evolution maps}
$\SEmapa_W:\reals_+^{\Rows}\to\reals_+^{\Cols}$,
$\SEmapb_W:\reals_+^{\Cols}\to\reals_+^{\Rows}$, 
 are defined as follows. 
For $\phi=(\phi_a)_{a\in\Rows}\in\reals_+^{\Rows}$,
$\psi=(\psi_i)_{i\in\Cols}\in\reals_+^{\Cols}$, we let:
\begin{eqnarray}
\SEmapa_{W}(\phi)_{i} &= &\mmse\Big(\sum_{b\in\Rows}W_{b,i}\phi_b^{-1}\Big)\, ,\label{eq:Mapa}\\
\SEmapb_{W}(\psi)_{a} &=& \sigma^2+\frac{1}{\delta}\sum_{i\in\Cols }
W_{a,i}\, \psi_i\, . \label{eq:Mapb}
\end{eqnarray}
We finally define $\SEmap_W=\SEmapa_W \circ \SEmapb_W$.
\end{definition}
In the following, we shall omit the subscripts from $\SEmap_{W}$
whenever clear from the context.

\begin{definition}\label{def:SESequence}
Given  $W\in \reals_+^{\Lr\times \Lc}$ roughly row-stochastic, the
corresponding \emph{state evolution sequence} is the sequence of vectors
$\{\phi(t),\psi(t)\}_{t\ge 0}$,
$\phi(t)=(\phi_{a}(t))_{a\in\Rows}\in\reals_+^{\Rows}$,
$\psi(t)=(\psi_{i}(t))_{i\in\Cols}\in\reals_+^{\Cols}$,
 defined
recursively by $\phi(t) = \SEmapb_{W}(\psi(t))$, $\psi(t+1) = \SEmapa_{W}(\phi(t))$, 
with initial condition
\begin{eqnarray} \label{eqn:initial_cond}
\psi_i(0) = \infty  \mbox{ for all } i\in \Cols\, .
\end{eqnarray}
 Hence, for all $t\ge 0$, 
\begin{eqnarray}
\begin{split}\label{eq:ExplicitSE}
\phi_a(t) &=& \sigma^2+\frac{1}{\delta}\sum_{i\in\Cols }
W_{a,i}\, \psi_i(t)\, ,\\
\psi_{i}(t+1) &= &\mmse\Big(\sum_{b\in\Rows}W_{b,i}\phi_b(t)^{-1}\Big)\,. 
\end{split}
\end{eqnarray}
\end{definition}

The quantities $\psi_i(t)$, $\phi_a(t)$ correspond to the asymptotic
MSE achieved by the AMP algorithm. More precisely, $\psi_i(t)$
corresponds to the asymptotic mean square error $\E\{(x_j^t-x_j)^2\}$ for $j\in
C(i)$, as $N\to\infty$. Analogously, $\phi_a(t)$ is the noise variance
in residuals $r^t_j$ corresponding to rows $j\in R(a)$.  This correspondence is stated formally in Lemma \ref{lemma:SE} below.
The state evolution (\ref{eq:ExplicitSE}) describes the evolution of these
quantities. In particular, the linear operation in Eq.~(\ref{eq:AMP2})
    corresponds to a sum of noise variances as per
    Eq.~(\ref{eq:Mapb}) and the application of denoisers $\eta_t$
    corresponds to a noise reduction as per Eq.~(\ref{eq:Mapa}).

As we will see, the definition of denoiser function $\eta_t$ involves the state vector $\phi(t)$. (Notice that the state vectors $\{\phi(t),\psi(t)\}_{t \ge 0}$ can be precomputed). Hence, $\eta_t$ is `tuned' according to the predicted reconstruction error at iteration $t$.   
\bigskip

%
%
\subsection{General algorithm definition}
\label{sec:GeneralAlgo}

In order to fully define the AMP algorithm (\ref{eq:AMP1}),
(\ref{eq:AMP2}),
we need to provide constructions for the matrix $Q^t$,
the nonlinearities $\eta_t$, and  the vector 
$\ons^t$. In doing this, we exploit the fact that the state evolution
sequence $\{\phi(t)\}_{t\ge 0}$ can be precomputed.

We define the matrix $Q^t$ by
\begin{eqnarray}\label{eq:Q_def}
Q_{ij}^t\equiv\frac{\phi_{\gr(i)}(t)^{-1}}{\sum_{k=1}^{\Lr}W_{k,\gr(j)}\phi_{k}(t)^{-1}}\,.
\end{eqnarray}
Notice that $Q^t$ is block-constant: for any $r,s\in [L]$, the block
$Q^{t}_{R(r),C(s)}$ has all its entries equal.

As mentioned in Section \ref{sec:Introduction}, the function
$\eta_t:\reals^n\to\reals^n$ is chosen to be separable, i.e., for
$v\in\reals^N$:
\begin{eqnarray}\label{eq:eta_def1}
\eta_t(v) =
\big(\eta_{t,1}(v_1),\eta_{t,2}(v_2),\;\dots\;,\eta_{t,N}(v_N)\big)\, .
\end{eqnarray}
We take $\eta_{t,i}$ to be a conditional expectation estimator
for $X\sim p_X$ in gaussian noise:
\begin{eqnarray}\label{eq:eta_def2}
\eta_{t,i}(v_i) = \E\big\{X\,\big|\, X+\, s_{\gr(i)}(t)^{-1/2}Z = v_i\,\big\}\,
,\;\;\;\; s_r(t) \equiv \sum_{u \in \Rows} W_{u,r}\phi_u(t)^{-1}\, .
\end{eqnarray}
Notice that the function  $\eta_{t,i}(\,\cdot\,)$ depends on $i$ only
through the group index $\gr(i)$, and in fact only parametrically
through $s_{\gr(i)}(t)$. It is also interesting to notice that the
denoiser $\eta_{t,i}(\,\cdot\,)$ does not have any tuning parameter to
be optimized over. This was instead the case for the soft-thresholding
AMP algorithm studied in \cite{DMM09} for which the threshold level
had to be adjusted in a non-trivial manner to the sparsity level. This
difference is due to the fact that the prior $p_X$ is assumed to be
known and hence the optimal denoiser is uniquely determined to be the
posterior expectation as per Eq.~(\ref{eq:eta_def2}).

Finally, in order to define the vector $\ons^t_i$, let us introduce
the quantity 
\begin{eqnarray}
\<\eta'_{t}\>_u = \frac{1}{N}\sum_{i\in
  C(u)}\eta'_{t,i}\big(x^t_i+((Q^t\odot A)^*r^t)_i\big)\, .
\end{eqnarray}
The vector $\ons^t$ is then defined by
\begin{eqnarray}
\ons^t_i \equiv
\frac{1}{\delta}\sum_{u \in \Cols} W_{\gr(i),u}\tQ^{t-1}_{\gr(i),u}\,
\<\eta'_{t-1}\>_u\, ,\label{eq:OnsagerDef}
\end{eqnarray}
where we  defined $Q^t_{i,j} = \tQ^t_{r,u}$ for $i\in R(r)$, $j\in
C(u)$. Again $\ons^t_i$ is block-constant: the vector $\ons^t_{C(u)}$
has all its entries equal.

This completes our definition of the AMP algorithm. Let us conclude with a few computational remarks:
\begin{enumerate}
\item The quantities $\tilde{Q}^t$, $\phi(t)$ can be precomputed efficiently
  iteration by iteration, because they are, respectively, 
$\Lr\times \Lc$ and $\Lr$-dimensional, and, as discussed further
below, $\Lr,\Lc$ are much smaller than $m,n$. The most complex part of
this computation is implementing  the iteration (\ref{eq:ExplicitSE}),
which has complexity $O((\Lr+\Lc)^3)$, plus the complexity of
evaluating the $\mmse$ function, which is a one-dimensional integral.
\item The vector $\ons^t$ is also block-constant, so can be efficiently
  computed using Eq.~(\ref{eq:OnsagerDef}).
\item Instead of computing $\phi(t)$ analytically by iteration (\ref{eq:ExplicitSE}),
  $\phi(t)$ can also be estimated from data $x^t, r^t$.
In particular, by generalizing the methods introduced in
\cite{DMM09,MontanariChapter}, we get  the estimator
\begin{eqnarray}
\widehat{\phi}_a(t) = \frac{1}{M}\, \|r^t_{R(a)}\|^2_2\, ,
\end{eqnarray}
where $r^t_{R(a)} = (r^t_j)_{j\in R(a)}$ is the restriction of $r^t$
to the indices in $R(a)$.  An alternative more robust estimator (more resilient to outliers),
would be 
\begin{eqnarray}
\widehat{\phi}_a(t)^{1/2} = \frac{1}{\Phi^{-1}(3/4)}\, |r^t_{R(a)}|_{(M/2)}\, ,
\end{eqnarray}
where $\Phi(z)$ is the Gaussian distribution function, and, for $v\in\reals^K$,
$|v|_{(\ell)}$ is the $\ell$-th largest entry in the vector
$(|v_1|,|v_2|,\dots,|v_K|)$.
(See, e.g., \cite{HuberBook} for background in robust estimation.)
The idea underlying both of the above estimators is that the components
of $r^t_{R(a)}$ are asymptotically i.i.d. with mean zero and variance $\phi_a(t)$.
\end{enumerate}
%
%
\subsection{Choices of parameters, and spatial coupling}
\label{sec:Choices}

In order to prove our main Theorem \ref{thm:MainTheorem},
we use a sensing matrix from the ensemble $\Ens(W,M,N)$ for a suitable
choice of the matrix $W\in \reals^{\Rows\times \Cols}$.
Our construction depends on parameters $\rho\in\reals_+$,
$L,L_0\in\naturals$, and on the `shape function' $\Shape$.
As explained below, $\rho$ will be taken to be small, and hence
we will treat $1/\rho$ as an integer to avoid rounding (which
introduces in any case a negligible error).

Here and below $\cong$ denotes identity between two sets up to a relabeling.
\begin{definition}
A \emph{shape function} is a function $\Shape:\reals\to\reals_+$ 
 continuously differentiable, with support in $[-1,1]$  and such that $\int_\reals
\Shape(u)\,\de u = 1$, and $\Shape(-u) =\Shape(u)$.
\end{definition}

We let  $\Cols \cong \{-2\rho^{-1},\dots,0,1,\dots, L-1\}$, so that 
$\Lc = L+2\rho^{-1} $. Also let $\Cols_0 = \{0,1,\dotsc, L-1\}$. 

The rows are partitioned as follows:
\begin{eqnarray*}
\Rows = \Rows_0 \cup \Big\{\cup_{i=-2\rho^{-1}}^{-1} \Rows_{i}\Big\}\,,
\end{eqnarray*} 
where $\Rows_0\cong\{-\rho^{-1},\dots,0,1,\dots ,L-1+\rho^{-1}\}$, and
$\Rows_i = \{i\L0, \dotsc, (i+1) \L0-1\}$, for $i= -2\rho^{-1}, \dotsc, -1$. Hence, $|\Rows_i|=L_0$, and $\Lr = \Lc+2\rho^{-1}\L0$.

Finally, we take $N$ so that $n= N\Lc$,  
and let $M = N\delta$ so that $m = M \Lr = N(\Lc+2\rho^{-1} \L0)\delta$.
Notice that $m/n = \delta(\Lc+ 2\rho^{-1} \L0)/\Lc$. Since we will take $\Lc$ much larger
than $L_0/\rho$, we in fact have $m/n$ arbitrarily close to $\delta$.

Given these inputs, we construct the corresponding matrix
$W = W(L,L_0,\Shape,\rho)$ as follows.
\begin{enumerate}
\item For $i\in\{-2\rho^{-1},\dots,-1\}$,
and each $a\in\Rows_i$, we let $W_{a,i}=1$. Further, 
  $W_{a,j} = 0$ for all $j\in\Cols\setminus\{i\}$.
\item For  all $a\in\Rows_0\cong\{-\rho^{-1},\dots,0,\dots,
  L-1+\rho^{-1}\}$, we let
\begin{eqnarray}
W_{a,i} = 
\rho\, \Shape\big(\rho\,(a-i)\big)\, & i \in \{-2\rho^{-1},\dots,L-1\}.
\end{eqnarray}
%
%
%
%
\end{enumerate}
\begin{figure}[!t]
\centering
\begin{picture}(80,250)
\put(-170,-15){\includegraphics[width = 5.7in]{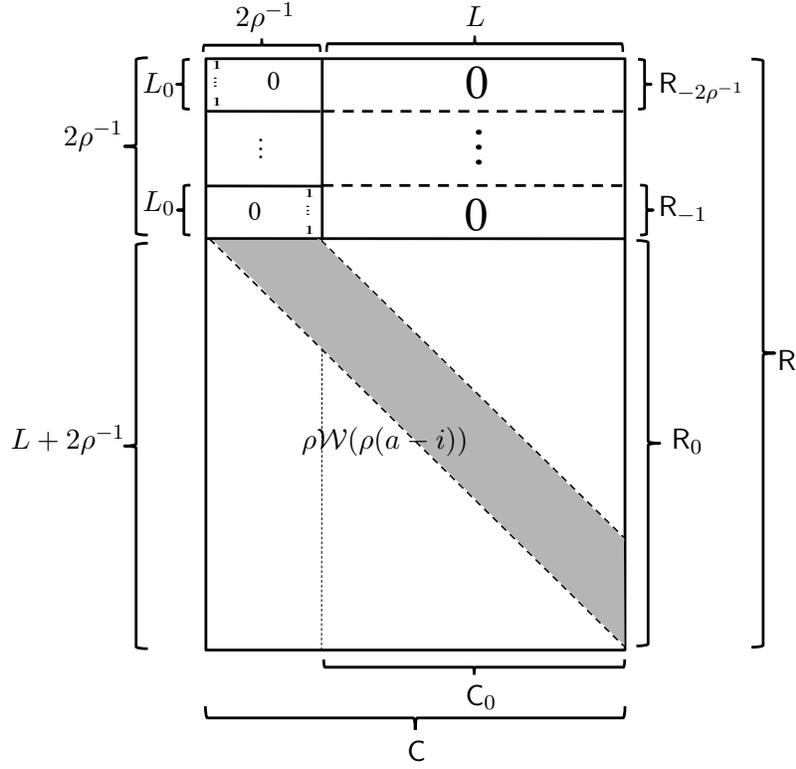}}
\put(-110, 110){$L+2\rho^{-1}$}
\put(-90, 225){$2\rho^{-1}$}
\put(-60, 245){$L_0$}
\put(-60, 200){$L_0$}
\put(-25, 270){$2\rho^{-1}$}
\put(62, 270){$L$}
\put(135, 245){$\Rows_{-2\rho^{-1}}$}
\put(135, 198){$\Rows_{-1}$}
\put(140, 110){$\Rows_0$}
\put(62, 13){$\Cols_0$}
\put(40, -8){$\Cols$}
\put(180, 140){$\Rows$}
\put(0, 110){$\rho \Shape(\rho(a-i))$}
\end{picture}
\vspace{0.5cm}
\caption{\small Matrix W. The shaded region indicates the non zero entries in the lower part of the matrix. As shown (the lower part of ) the matrix $W$ is band diagonal.}  \label{fig:Wmatrix}
\end{figure}
The role of the rows in $\Big\{\cup_{i = -2\rho^{-1}}^{-1} \Rows_i \Big\}$ and the corresponding 
rows in $A$ are to oversample the first few (namely the first $2\rho^{-1} N$) coordinates of the signal
as explained in Section~\ref{sec:Intuition}. Furthermore, the restriction of $W$ to the rows in $\Rows_0$ 
is band diagonal as $\Shape$ is supported on $[-1,1]$.
See Fig.~\ref{fig:Wmatrix} for an illustration of the matrix $W$. 

In the following we occasionally use the shorthand $W_{a-i}\equiv \rho\, \Shape\big(\rho\,(a-i)\big)$.
Note that $W$ is roughly row-stochastic. Also, the restriction of $W$ to the rows in $\Rows_0$ is roughly column-stochastic. This follows from the fact that the function $\Shape(\cdot)$ has continuous (and thus bounded) derivative on the compact interval $[-1,1]$, and $\int_{\reals} \Shape(u) \de u = 1$. Therefore, using the standard convergence of Riemann sums to Riemann integrals and the fact that $\rho$ is small, we get the result. 

We are now in position to restate Theorem \ref{thm:MainTheorem}
in  a more explicit form. 
\begin{thm}\label{thm:MainTheoremNew}
Let $p_X$ be a probability measure on the real line with
$\delta>\uRenyi(p_X)$, and let $\Shape:\reals\to\reals_+$ be  a shape function.
For any $\ve>0$, there exist $L_0, L, \rho$, $t_0$, $\sigma_0^2= \sigma_0(\ve, \delta, p_X)^2$
such that $L_0/(L \rho)\le \ve$, and further the following holds true
for $W = W(L,L_0,\Shape,\rho)$.

For $N\ge 0$, and $A(n)\sim \Ens(W,M,N)$ with
$M=N\delta$, and for all $\sigma^2\le\sigma_0^2$,
$t\ge t_0$, we almost surely have
\begin{eqnarray}
\underset{N\to\infty}{\limsup}\frac{1}{n}\big\|x^t\big(A(n);y(n)\big)-x(n)\big\|^2\le
\ve\, .
\end{eqnarray}
Further, under the same assumptions, we have
\begin{eqnarray}
\underset{N\to\infty}{\limsup}\frac{1}{n}\E\big\{\big\|x^t\big(A(n);y(n)\big)-x(n)\big\|^2\big\}\le
\ve\, .
\end{eqnarray}
\end{thm}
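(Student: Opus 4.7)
My plan is to split the argument into two modules, matching the outline announced in Section~\ref{sec:Introduction}. The first module is a state evolution lemma pinning down the asymptotic per-coordinate MSE of the AMP iterates: for each fixed $t$, and for a converging sequence of instances with $A(n)\sim\Ens(W,M,N)$,
\begin{eqnarray*}
\lim_{N\to\infty} \frac{1}{n}\big\|x^t(A(n);y(n))-x(n)\big\|_2^2 \;=\; \frac{1}{\Lc}\sum_{i\in\Cols}\psi_i(t)\quad \text{a.s.},
\end{eqnarray*}
with $\{\psi(t)\}_{t\ge 0}$ the state evolution sequence of Definition~\ref{def:SESequence}. This reduces the theorem to a deterministic claim: for $W=W(L,L_0,\Shape,\rho)$, by suitable choice of $(L,\L0,\rho,t_0,\sigma_0^2)$ one can force $\frac{1}{\Lc}\sum_i\psi_i(t)\le\ve$ for all $t\ge t_0$ whenever $\sigma^2\le\sigma_0^2$.

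To establish the state evolution correspondence I would adapt the conditioning argument of~\cite{BM-MPCS-2011} to the block-heteroscedastic setting. The matrix $Q^t$ from~(\ref{eq:Q_def}) is designed so that the `effective observation' $x^t+(Q^t\odot A)^{*}r^t-x$ restricted to block $C(s)$ is asymptotically an independent Gaussian perturbation of variance $1/s_s(t)$, and the Onsager correction~(\ref{eq:OnsagerDef}) is the unique choice cancelling the correlations introduced by reuse of $A$ across iterations, which is what closes the inductive conditioning step. The uniformly random partition of rows and columns into blocks $R(r),C(s)$ supplies the exchangeability needed for a block-wise Bolthausen-style conditioning argument, and the order of limits (first $M,N\to\infty$, then $\Lr,\Lc\to\infty$) is chosen so that within each block the problem is effectively an i.i.d. Gaussian slice of diverging size.

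The main obstacle is the analysis of the state evolution recursion itself. My plan is to exhibit a Lyapunov/potential function $\cE:\reals_+^{\Cols}\to\reals$ that is non-increasing along the trajectory $\psi(t+1)=\SEmap_W(\psi(t))$ and whose stationary points can be classified. Exploiting the Guo--Shamai--Verd\'u I-MMSE identity, $\cE$ can be written as a sum of scalar mutual-information terms $I(X;\sqrt{s_i}X+Z)$ plus a quadratic `coupling' term built from $W$, so that its stationarity condition is precisely $\psi_i=\mmse(s_i(\psi))$. The spatial coupling design then does two things. First, the oversampled rows $\cup_{k=1}^{2\rho^{-1}}\Rows_{-k}$ pin the leftmost blocks to MSE $O(\sigma^2)$ by an elementary overdetermined least-squares estimate. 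Second, the band-diagonal slowly varying shape $\rho\Shape(\rho(a-i))$ forces any would-be stationary profile $\psi$ exhibiting a `wavefront' from small to large values to be unstable: sliding the front one step to the right strictly decreases $\cE$, so only the uniformly small-MSE profile can be a minimizer. The assumption $\delta>\uRenyi(p_X)$ enters at this step via the I-MMSE/information-dimension identity, which shows that in the continuum limit $\rho\to 0$, $L\to\infty$ the basin of attraction on the right of any front is the low-MSE basin; this is the analogue here of the Maxwell construction in the spatial-coupling-for-coding literature.

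Finally I assemble the pieces in the natural order. Pick $\rho$ small enough and $L$ large enough that the Riemann-sum approximation underlying the potential-function argument is accurate to within $\ve/4$; choose $\L0$ with $\L0/(L\rho)\le\ve$ while large enough to drive the leftmost blocks' MSE below $\ve/4$; choose $\sigma_0^2$ so that the noisy state evolution is a $\ve/4$-perturbation of its noiseless counterpart; pick $t_0=O(L/\rho)$, the number of iterations the low-MSE front needs to sweep across all $L$ blocks. The state evolution lemma then delivers the almost-sure bound. The expected-value statement $\MSEAMPexp(\cS;\sigma^2)\le\ve$ follows from the almost-sure one together with the local Lipschitz continuity of $x^t$ in $y$ proven in Appendix~\ref{app:Lipschitz}, which provides the uniform integrability needed to exchange $\E$ and $\limsup$ by dominated convergence.
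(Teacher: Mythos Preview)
Your two-module decomposition (state evolution correspondence, then analysis of the deterministic recursion via a free-energy/potential function and a front-sliding perturbation) is exactly the architecture of the paper's proof, and your identification of the I-MMSE identity as the source of the $\uRenyi(p_X)$ threshold is correct.

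There is one substantive discrepancy worth flagging. You propose to show that the potential $\cE$ is \emph{non-increasing along the state evolution trajectory} $\psi(t+1)=\SEmap_W(\psi(t))$, i.e.\ that $\cE$ is a Lyapunov function, and then classify its stationary points. The paper does not establish any such Lyapunov property, and it is not clear that one holds for the free energy it constructs. Instead, convergence of the state evolution sequence is obtained directly and cheaply from the \emph{monotonicity of the map} $\SEmap_W$: since $\psi(0)=\infty$ dominates $\psi(1)$ and $\SEmap_W$ is monotone (as $\mmse$ is decreasing and $W$ has nonnegative entries), the sequence $\{\psi(t)\}$ is coordinatewise decreasing and hence converges to some fixed point. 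The free energy is then used \emph{only} to characterize fixed points: one shows that fixed points are stationary points of $\cE$, and that any stationary profile with a nontrivial wavefront can be perturbed (your ``sliding the front'' move) to strictly decrease $\cE$ to first order, contradicting stationarity. So the logic is ``monotonicity $\Rightarrow$ convergence to a fixed point; free energy $\Rightarrow$ bad fixed points don't exist,'' rather than ``free energy decreases along the flow.'' If you try to execute your plan as stated, you may get stuck trying to prove the Lyapunov property; the paper's route sidesteps this entirely.

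Two minor points. For the expectation statement, the paper simply invokes reverse Fatou after observing that the per-coordinate error is (eventually) bounded; your uniform-integrability-via-Lipschitz route is more elaborate than needed. And the paper inserts two technical intermediaries you gloss over: a \emph{modified} state evolution (handling the oversampled boundary blocks by a fixed upper bound, Lemma~\ref{lemma:UB_SE}) and a \emph{continuum} state evolution (the $\rho\to 0$ Riemann-sum limit you allude to), with explicit domination lemmas linking each to the next. These are routine but necessary to make the front-sliding argument rigorous on a finite lattice.
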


In order to obtain a stronger form of robustness, as per Theorem
\ref{thm:MainTheoremR}, 
we slightly modify the sensing scheme.  
We construct the sensing matrix $\tilde{A}$ from $A$ by appending $2 \rho^{-1}L_0$ rows in the bottom.
\begin{eqnarray}
\tilde{A} = \begin{pmatrix}
\quad \quad A\\
\hline
\multicolumn{1}{c|} 0 & I
\end{pmatrix},
\end{eqnarray} 
where $I$ is the identity matrix of dimensions $2\rho^{-1}L_0$. Note that this corresponds to increasing the number of measurements; however, the asymptotic undersampling rate remains $\delta$, provided that $L_0/(L\rho) \to 0$, as $n \to \infty$.

The reconstruction scheme is modified as follows. Let $x_1$ be the
vector obtained by restricting $x$ to entries in $\cup_{i} C(i)$, where $i\in
\{-2\rho^{-1},\cdots,L-2\rho^{-1} -1\}$. Also, let $x_2$ be the vector
obtained by restricting $x$ to entries in $\cup_{i} C(i)$, where $i \in
\{L-2\rho^{-1}, \cdots, L-1\}$. Therefore, $x = (x_1,x_2)^T$. Analogously,
let $y = (y_1,y_2)^T$ where $y_1$ is given by the restriction  of $y$ 
to  $\cup_{i\in\Rows} R(i)$ and $y_2$ corresponds to the additional
$2\rho^{-1}L_0$ rows.
Define $w_1$ and $w_2$ from the noise vector $w$, analogously. Hence,
\begin{eqnarray}
\begin{pmatrix}
y_1\\y_2
\end{pmatrix} = 
\begin{pmatrix}
\quad \quad A\\
\hline
\multicolumn{1}{c|} 0 & I
\end{pmatrix}
\begin{pmatrix}
x_1\\x_2
\end{pmatrix} +
\begin{pmatrix}
w_1\\w_2
\end{pmatrix} .
\end{eqnarray}

Note that the sampling rate for vector $x_2$ is one, i.e., $y_2$ and $x_2$ are of the same length and are related to each other through the identity matrix $I$. Hence, we have a fairly good approximation of these entries. We use the AMP algorithm as described in the previous section to obtain an estimation of $x_1$. Formally, let $x^t$ be the estimation at iteration $t$ obtained by applying the AMP algorithm to the problem $y_1 = Ax + w_1$. The modified estimation is then $\tilde{x}^t = (x_1^t, y_2)^T$. 

As we will see later, this modification in the sensing matrix and
algorithm, while not necessary, simplifies some technical steps in the proof.
\begin{thm}\label{thm:MainTheoremNewR}
Let $p_X$ be a probability measure on the real line with
$\delta>\uMMSE(p_X) $, and let $\Shape:\reals\to\reals_+$ be  a shape function.
There exist $L_0, L, \rho$, $t_0$ and a finite stability constant $C = C(p_X,\delta)$,
such that $L_0/(L \rho) < \ve$, for any given $\ve > 0$, and the following holds true for the modified reconstruction scheme.

For $t\ge t_0$, we almost surely have,
\begin{eqnarray}
\underset{N\to\infty}{\limsup}\frac{1}{n} \big\|\tilde{x}^t\big(\tilde{A}(n);y(n)\big)-x(n)\big\|^2\le
C\sigma^2.
\end{eqnarray}
Further, under the same assumptions, we have
\begin{eqnarray}
\underset{N\to\infty}{\limsup}\frac{1}{n} \E\big\{\big\|\tilde{x}^t\big(\tilde{A}(n);y(n)\big)-x(n)\big\|^2 \big\}\le
C\sigma^2.
\end{eqnarray}
Finally, in the asymptotic case where $\ell = L\rho \to \infty$, $\rho \to 0$, $\L0 \to \infty$, we have
 \[
 \lim_{\sigma \to 0} \frac{1}{\sigma^2} \Big\{ \lim_{t\to \infty} \underset{N\to\infty}{\limsup}\frac{1}{n} \big\|\tilde{x}^t\big(\tilde{A}(n);y(n)\big)-x(n)\big\|^2 \Big\}
  \le \frac{4\delta- 2\uMMSE(p_X)}{\delta - \uMMSE(p_X)} \,.
 \]
\end{thm}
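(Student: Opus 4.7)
The plan is to prove Theorem \ref{thm:MainTheoremNewR} by the same three-step program sketched in Section \ref{sec:Lemmas} for Theorem \ref{thm:MainTheorem}, but carried one step further to extract the quantitative stability constant. The first step is the state-evolution reduction: the generalization of the result of \cite{BM-MPCS-2011,JM-StateEvolution} (the first key lemma referenced in Section \ref{sec:Lemmas}) says that, almost surely as $N\to\infty$, the block-averaged squared error $(1/N)\|x^t_{C(i)}-x_{C(i)}\|_2^2$ converges to the deterministic scalar $\psi_i(t)$ of the recursion (\ref{eq:ExplicitSE}). The modification that appends an identity block pins the last $2\rho^{-1}\L0$ groups of coordinates to the noisy observations $y_2$, contributing $\|w_2\|_2^2$ to $\|\tilde x^t -x\|^2_2$, which asymptotically equals $\sigma^2\cdot 2\rho^{-1}\L0/(L+2\rho^{-1}) = o(\sigma^2)$ by condition (b) of Definition \ref{def:Converging} together with the parameter regime $L_0/(L\rho)\to 0$. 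It therefore suffices to show that for the state-evolution sequence associated with $y_1 = Ax+w_1$, one has $\psi_i(t)\le C\sigma^2$ uniformly in $i$ and $t\ge t_0$, with the appropriate asymptotic behaviour of the constant $C$.

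The second step is to convert the hypothesis $\delta>\uMMSE(p_X)$ into a usable bound on the recursion. By Definition \ref{def:MMSE-dim}, for every $\eta>0$ there exists $s_\star$ with $s\cdot\mmse(s)\le\uMMSE(p_X)+\eta$ for all $s\ge s_\star$; on the complement, $\mmse(s)\le\E\{X^2\}$. Substituting into (\ref{eq:ExplicitSE}) gives the linearized upper bound $\psi_i(t+1)\le (\uMMSE(p_X)+\eta)/s_i(t)$, where $s_i(t)=\sum_{b\in\Rows}W_{b,i}\phi_b(t)^{-1}$ and $\phi_a(t)=\sigma^2+\delta^{-1}\sum_i W_{a,i}\psi_i(t)$. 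This is a coupled monotone map whose effective contraction factor is $\uMMSE(p_X)/\delta<1$, provided we can rule out fixed points with large $\psi_i$. The third step, following the outline of Section \ref{sec:Intuition}, is to construct a potential functional $\cE(\phi)$ on $\reals_+^{\Rows}$ that (i) is non-increasing along SE and strictly decreases off its fixed points, (ii) attains its global minimum at a profile with $\phi_a = O(\sigma^2)$ uniformly in the interior, and (iii) is forced, by the leftmost seed rows $\cup_{i=-2\rho^{-1}}^{-1}\Rows_i$ and the right-side pinning induced by the identity block, into the basin of that global minimum. These three properties combined with the bound from step two yield the almost-sure statement; the expected-value version then follows from the almost-sure local Lipschitz continuity of $x^t$ in $w$ (Appendix \ref{app:Lipschitz}) combined with uniform integrability.

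The hard part is step three, and in particular the sharp constant. In the scaling limit $\rho\to 0$, $\L0\to\infty$, $\ell = L\rho \to\infty$, the discrete index $i\in\Cols$ becomes a continuous position $u = \rho i\in[-2,\ell]$, the kernel $\rho\,\Shape(\rho(a-i))$ becomes convolution against $\Shape$, and the fixed-point condition linearized around $\sigma=0$ becomes an ODE on $[-2,\ell]$ with a Dirichlet-type boundary at $u=-2$ (seed) and a second boundary condition at $u=\ell$ (pin). Solving this ODE under the linearized recursion from step two and reading off the sensitivity of $\psi$ at generic $u$ produces the prefactor $(4\delta-2\uMMSE(p_X))/(\delta-\uMMSE(p_X))$. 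The main obstacles will be showing that the linearized upper bound is tight enough to give this constant rather than something larger, and ruling out intermediate fixed points where the smoothing through $\Shape$ and the monotonicity of the SE map do not obviously suffice; I expect these will require a careful coupling/monotonicity argument, in the spirit of (but technically distinct from) the BEC analysis of \cite{KudekarBEC}, together with a justified interchange of the large-$\ell$ and small-$\sigma$ limits.
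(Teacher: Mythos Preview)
Your proposal has the right overall architecture (state-evolution reduction, then analysis of the deterministic recursion via a potential/free-energy functional), but there is one genuine misunderstanding and one methodological divergence worth flagging.

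\textbf{The identity block is not a right-side boundary condition for state evolution.} In the modified scheme, AMP is run on $y_1=Ax+w_1$ exactly as before; the appended identity rows do not enter the iteration (\ref{eq:AMP1})--(\ref{eq:AMP2}) and hence do not alter the recursion (\ref{eq:ExplicitSE}) at all. The profile $\phi_a(t)$ is still nondecreasing in $a$ with $\psi_i=\infty$ to the right of the window; there is no ``pinning'' on the right. The role of the identity block is purely post-processing: for the rightmost $2\rho^{-1}$ column-groups you replace the AMP estimate by $y_2$, so those blocks contribute $\|w_2\|^2/n$ rather than the (potentially large, $\Theta(1)$) state-evolution error there. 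This is precisely why part (b) of Lemma~\ref{lem:phi_convergence} only needs to control $\frac{1}{L}\sum_{a=-\rho^{-1}}^{L-\rho^{-1}-1}\phi_a(t)$, with the upper summation index shifted left by $2\rho^{-1}$ relative to part (a). Your proposed ODE with Dirichlet data at both ends therefore models the wrong object; without this correction you would either fail to close the argument (no right boundary condition is actually available) or be forced to control the rightmost $\phi_a$, which do not become $O(\sigma^2)$.

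\textbf{The free-energy argument is stationarity-plus-perturbation, not Lyapunov descent.} The paper does not show that $\Energy_{\Shape}$ is non-increasing along state evolution. Instead, monotonicity of the map $\SEmap_W$ (Proposition~\ref{propo:monotone_map_discrete}) already guarantees convergence to a fixed point $\phi$; the free energy is used only to show that any fixed point with $\frac{1}{\ell}\int|\phi-\phi^*|>C\sigma^2$ admits an explicit perturbation $\phi_a$ (a multiplicative shrink $(1-a)\phi$ on a subinterval where $C(1-\alpha)\sigma^2\le\phi\le\phi_1$) for which $\Energy_{\Shape}(\phi_a)-\Energy_{\Shape}(\phi)\le -c\,a$, contradicting $\nabla\Energy_{\Shape}(\phi)=0$. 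The constant $C=\frac{2\delta}{(1-\alpha)(\delta-\uMMSE(p_X))}$ falls out of balancing the three pieces of $\Energy_{\Shape}$ under this perturbation (Proposition~\ref{pro:R_terms}), and the final $\frac{4\delta-2\uMMSE(p_X)}{\delta-\uMMSE(p_X)}$ comes from doubling (via $\mmse(s)\le 1/s$ in the reduction step), adding $\phi^*$, and letting $\alpha\to 0$, $\L0\to\infty$. Your linearized-ODE route may be workable, but as written it rests on a boundary condition that is not there.
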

It is obvious that Theorems~\ref{thm:MainTheoremNew} and
\ref{thm:MainTheoremNewR} respectively imply Theorems
\ref{thm:MainTheorem} and \ref{thm:MainTheoremR}. We shall therefore
focus on the proofs of Theorems~\ref{thm:MainTheoremNew} and
\ref{thm:MainTheoremNewR} in the rest of the paper. 

Notice that the results of Theorems~\ref{thm:MainTheoremNew} and~\ref{thm:MainTheoremNewR}
only deal with a linear subsequence $n = N\Lc$ with $N\to \infty$. However, this is sufficient to prove the claim
of Theorems~\ref{thm:MainTheorem} and \ref{thm:MainTheoremR}. More specifically,
suppose that $n$ is not a multiple of $\Lc$. Let $n'$ be the smallest number greater than $n$ which is divisible by $\Lc$,
i.e., $n'= \lceil n/\Lc\rceil \Lc$, and let $\hat{x} = (x,0)^T \in \reals^{n'}$ be obtained by padding $x$ with zeros.
Let $\hat{x}^t$ denote the Bayes optimal AMP estimate of $\hat{x}$ and $x^t$ be the restriction of $\hat{x}^t$ to the first $n$ entries. 
We have $(1/n) \|x^t - x\|^2 \le (n'/n) (1/n') \|\hat{x}^t - \hat{x}\|^2$. The result of Theorem~\ref{thm:MainTheorem} follows by applying Theorem~\ref{thm:MainTheoremNew} (for the sequence $n = N\Lc$, $N \to \infty$), and noting that $n'/n \le (1+ \Lc/n) \to 1$, as $N \to \infty$. 
Similar comment applies to Theorems~\ref{thm:MainTheoremNewR} and~\ref{thm:MainTheoremR}.

\section{Advantages of spatial coupling}

Within the construction proposed in this paper, spatially coupled
sensing matrices have independent heteroscedastic entries (entries
with different variances). In addition to this, we  also oversample a
few number of  coordinates of the signal, namely the first
$2\rho^{-1}N$ coordinates. In this section we informally discuss the
various components of this scheme.

It can be instructive to compare this
construction with the case of homoscedastic Gaussian matrices
(i.i.d. entries).
For the reader familiar with coding theory, this comparison is
analogous to the comparison between regular LDPC codes and spatially
coupled regular LDPC codes. Regular LDPC codes have been known since
Gallager \cite{GallagerThesis,MMRU05} to achieve the channel capacity, as the degree gets
large, under maximum likelihood decoding. However their performances under practical (belief propagation)
decoding is rather poor. When the code ensemble is modified via
spatial coupling, the belief propagation performances improve to
become asymptotically equivalent to the maximum likelihood
performances.  Hence spatially coupled LDPC codes achieve capacity
under practical decoding schemes.

Similarly, standard (non-spatially coupled) sensing matrices achieve
the information theoretic limit under computationally unpractical
recovery schemes \cite{WuVerdu}, but do not perform ideally under
practical reconstruction algorithms. Consider for instance Bayes optimal AMP. Within the standard ensemble,
the state evolution recursion reads
\begin{eqnarray}\label{eq:ExplicitSE_standard2}
\begin{split}
\phi(t) &=\sigma^2+\frac{1}{\delta}\psi(t)\, ,\\
\psi(t+1) &= \mmse\big(\phi(t)^{-1}\big)\,. 
\end{split}
\end{eqnarray}
Let $\tilde{\delta}(p_X)\equiv \sup_{s \ge 0} s\cdot \mmse(s)>\uRenyi(p_X)$.
It is immediate to see that the last recursion develops two (or
possibly more) stable
fixed points for $\delta<\tilde{\delta}(p_X)$ and all $\sigma^2$ small
enough. The smallest fixed point, call it $\phi_{\rm good}$, corresponds to correct
reconstruction and  is such that $\phi_{\rm good}= O(\sigma^2)$ as
$\sigma\to 0$. The
largest fixed point, call it $\phi_{\rm bad}$, corresponds to incorrect
reconstruction and is such that $\phi_{\rm bad}=\Theta(1)$ as
$\sigma\to 0$. A study of the above
recursion shows that $\lim_{t\to\infty}\phi(t) =\phi_{\rm bad}$. State
evolution converges to the `incorrect' fixed point, hence predicting a
large MSE for AMP.  

On the contrary, for
$\uRenyi(p_X)<\delta<\tilde{\delta}(p_X)$
the recursion~\eqref{eq:ExplicitSE_standard2} converges (for
appropriate choices of $W$ as in the previous section) to the `ideal'
fixed point $\lim_{t\to\infty}\phi_a(t)= \phi_{\rm good}$ for all $a$
(except possibly those near the boundaries). This is
illustrated in Fig.~\ref{fig:phi-profile}. We also refer to
\cite{HassaniEtAl} for a survey of examples of the same phenomenon and
to \cite{KrzakalaEtAl,JavanmardMon12} for further discussion in
compressed sensing.

The above discussion also clarifies why the posterior expectation
denoiser is useful. Spatially coupled sensing matrices do not yield
better performances than the ones dictated by the best fixed point in
the `standard' recursion (\ref{eq:ExplicitSE_standard2}). In
particular, replacing the Bayes optimal denoiser by another denoiser
$\eta_t$ amounts, roughly, to replacing $\mmse$ in
Eq.~\eqref{eq:ExplicitSE_standard2} by the MSE of another denoiser,
hence leading to worse performances.

In particular, if the posterior expectation denoiser is replaced by
soft thresholding, the resulting state evolution recursion always has
a unique stable fixed point for homoscedastic matrices
\cite{DMM09}. This suggests that spatial coupling does not lead to any
improvement for soft thresholding AMP and hence (via the
correspondence of \cite{BayatiMontanariLASSO}) for LASSO or $\ell_1$
reconstruction. This expectation is indeed confirmed numerically in \cite{JavanmardMon12}.
%
%
\section{Key lemmas and proof of the main theorems}
\label{sec:Lemmas}

Our proof is based in a crucial way on state evolution. This
effectively reduces the analysis of the algorithm (\ref{eq:AMP1}), (\ref{eq:AMP2}) to 
the analysis of the deterministic recursion (\ref{eq:ExplicitSE}).
\begin{lemma}\label{lemma:SE}
Let $W\in\reals_+^{\Rows\times\Cols}$ be a roughly row-stochastic matrix (see Eq.~\eqref{eqn:almost_row_stochastic})and $\phi(t)$, $Q^t$, $\ons^t$ be defined as in Section
\ref{sec:GeneralAlgo}.
Let $M=M(N)$ be such that $M/N\to\delta$, as $N \to \infty$. Define $m=M\Lr$, $n=N\Lc$,
and for each $N\ge 1$, let $A(n)\sim \Ens(W,M,N)$. 
Let $\{(x(n),w(n))\}_{n\ge 0}$ be a converging sequence of instances
with parameters $(p_X,\sigma^2)$.
Then, for all $t\ge 1$, almost surely we have 
\begin{eqnarray}
\underset{N\to\infty}{\limsup}\frac{1}{N}\|x^t_{C(i)}(A(n);y(n))-x_{C(i)}\|_2^2=
\mmse\Big(\sum_{a\in \Rows}W_{a,i}\phi_a(t-1)^{-1} \Big)\, .
\end{eqnarray}
for all $i \in \Cols$.
\end{lemma}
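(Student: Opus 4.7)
The plan is to follow the conditioning technique of \cite{BM-MPCS-2011} (see also the block-structured extension in \cite{JM-StateEvolution}), suitably adapted to the heteroscedastic ensemble $\Ens(W,M,N)$, and proceed by induction on $t$. At a high level I would show that, on each column block $C(i)$, the effective observation $y^t_{C(i)} \equiv x^t_{C(i)} + ((Q^t\odot A)^*r^t)_{C(i)}$ is asymptotically distributed as $x_{C(i)} + z$, where $z$ is i.i.d.\ $\normal(0, s_i(t-1)^{-1})$ with $s_i(t-1) = \sum_{a\in\Rows} W_{a,i}\phi_a(t-1)^{-1}$. Since $\eta_{t,i}$, defined in \eqref{eq:eta_def2}, is precisely the Bayes-optimal denoiser at signal-to-noise ratio $s_i(t-1)$, the per-block empirical mean square error converges almost surely to $\mmse(s_i(t-1))$, which is exactly the right-hand side of the claim.

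The key step is a conditioning argument. Let $\mathfrak{S}_t$ be the sigma-field generated by $x, w$ and the iterates $\{x^s, r^s\}_{s\le t}$. Conditionally on $\mathfrak{S}_t$, the Gaussian matrix $A$ decomposes into a deterministic component that enforces the already-observed linear identities $Ax^s$ and a fresh independent Gaussian component on the orthogonal complement. The Onsager correction $\ons^t\odot r^{t-1}$, with $\ons^t$ defined in \eqref{eq:OnsagerDef}, is precisely the bias term that cancels the contribution of the deterministic part, so that the distribution of the new residual $r^t$ and of $(Q^t\odot A)^* r^t$ on the orthogonal complement is driven by an independent copy of the block-heteroscedastic Gaussian matrix. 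Rotation invariance on each block, combined with the block-constant variance structure $A_{ij}\sim\normal(0, W_{\gr(i),\gr(j)}/M)$, makes the effective noise on $C(i)$ isotropic in the limit $N\to\infty$. A direct computation using \eqref{eq:Q_def} then shows that the per-coordinate variance of this effective noise converges almost surely to $(\sum_{a\in\Rows} W_{a,i}\phi_a(t-1)^{-1})^{-1} = s_i(t-1)^{-1}$, which is the content of the induction step for the $\phi\to\psi$ half of the state evolution map.

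From here, the conclusion follows by a standard Gaussian-plus-signal argument: the pair $(x_{C(i)}^t, x_{C(i)})$ has empirical distribution converging to the law of $(\eta_{t,i}(X + s_i(t-1)^{-1/2}Z), X)$ with $X\sim p_X$, $Z\sim\normal(0,1)$ independent, so $\lim_N N^{-1}\|x^t_{C(i)}-x_{C(i)}\|_2^2 = \E[(X-\E\{X\mid X + s_i(t-1)^{-1/2}Z\})^2] = \mmse(s_i(t-1))$ almost surely. The main technical obstacle is the bookkeeping required to verify that the normalizations in $Q^t$ and $\ons^t$ from Section \ref{sec:GeneralAlgo} exactly match the block-wise conditional Gaussian computation, so that the per-block effective noise remains isotropic and its variance is scalar-valued. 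A secondary but essential ingredient is the uniformly random partitioning of rows and columns built into the ensemble, which restores exchangeability and lets the empirical distribution of each block concentrate at the predicted Gaussian limit via a standard Lipschitz-test-function application of concentration for Gaussian quadratic forms; this is exactly where the random partitioning mentioned after Definition~\ref{def:SEmapa} is used.
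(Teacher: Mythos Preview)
Your proposal is correct and aligns with what the paper itself indicates: the paper does not prove this lemma in the text but states that it is ``a straightforward generalization of \cite{BM-MPCS-2011}'' whose formal proof is deferred to the companion paper \cite{JM-StateEvolution}, and it provides only a heuristic derivation (Section~\ref{sec:SE_Heuristics}) based on the fictitious i.i.d.-resampled-matrix recursion. Your sketch via the conditioning technique, the role of the Onsager term in cancelling the deterministic component, the block-wise variance computation using \eqref{eq:Q_def}, and the use of the random partitioning for exchangeability is precisely the route the paper points to, so there is nothing to correct or contrast.
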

This lemma is a straightforward generalization of
\cite{BM-MPCS-2011}. Since a formal proof does not require new ideas,
but a significant amount of new notations, it is presented in  a separate
publication \cite{JM-StateEvolution} which covers an even
more general setting.
In the interest of self-containedness, and to develop useful intuition
on  state evolution, we present an heuristic derivation of
the state evolution equations (\ref{eq:ExplicitSE}) in Section \ref{sec:SE_Heuristics}.

The next Lemma provides the needed analysis of the recursion 
(\ref{eq:ExplicitSE}).
\begin{lemma}\label{lem:phi_convergence}
Let $\delta>0$, and $p_X$ be a probability measure on the real line. Let $\Shape:\reals\to\reals_+$ be  a shape function.

$(a)$ If $\delta>\uRenyi(p_X)$, then for any $\ve>0$, there exist $\sigma_0 = \sigma_0(\ve, \delta, p_X),\rho, L_* > 0$,
such that for any $\sigma^2\in [0,\sigma_0^2], L_{0}> 3/\delta$, and $L > L_*$, 
the following holds for $W = W(L,L_0,\Shape,\rho)$:
\begin{eqnarray}
\lim_{t\to\infty} \frac{1}{L} \sum_{a = -\rho^{-1}}^{L +\rho^{-1} - 1} \phi_a(t) \le \ve.
\end{eqnarray}

$(b)$ If further $\delta>  \uMMSE(p_X) $, then there exist $\rho,L_* > 0$, and 
a finite stability constant $C= C(p_X,\delta)$, such that for $L_0 > 3/\delta$, and $L >L_*$,
the following holds for $W = W(L,L_0,\Shape,\rho)$.
\begin{eqnarray}
\lim_{t\to\infty} \frac{1}{L} \sum_{a = -\rho^{-1}}^{L - \rho^{-1} - 1} \phi_a(t) \le C\sigma^2.
\end{eqnarray}
Finally, in the asymptotic case where $\ell = L\rho \to \infty$, $\rho \to 0$, $\L0 \to \infty$, we have
 \begin{eqnarray}\label{eqn:phi-C-bound}
 \lim_{\sigma \to 0} \lim_{t\to \infty} \frac{1}{\sigma^2\, L} \sum_{a=-\rho^{-1}}^{L-\rho^{-1}-1} \phi_a(t) \le 
 \frac{3\delta- \uMMSE(p_X)}{\delta - \uMMSE(p_X)} \,.
 \end{eqnarray}
\end{lemma}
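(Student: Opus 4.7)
My plan is to analyze the deterministic recursion (\ref{eq:ExplicitSE}) by combining monotonicity of $\SEmap_W$ with a potential (``free energy'') functional argument analogous to threshold saturation in spatially coupled coding, adapted to the continuous MMSE setting. First, $\mmse(\cdot)$ is non-increasing and the affine map $\psi\mapsto\sigma^2+(1/\delta)W\psi$ is entry-wise non-decreasing, so their composition $\SEmap_W=\SEmapa_W\circ\SEmapb_W$ is monotone. Starting from $\psi_i(0)=\infty$, the iterates $(\phi(t),\psi(t))$ are therefore entry-wise non-increasing in $t$ and bounded below, hence converge to limits $(\phi^\ast,\psi^\ast)$ satisfying $\phi^\ast=\SEmap_W(\phi^\ast)$. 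The extra rows $\Rows_i$ for $i<0$ have $W_{a,i}=1$ on a single column and zero elsewhere, so they directly oversample the leftmost $2\rho^{-1}$ columns and pin $\psi^\ast_i=O(\sigma^2)$ there by a single application of $\mmse$.

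Motivated by the I-MMSE identity $\frac{\de}{\de s}I(X;\sqrt{s}X+Z)=\mmse(s)/2$, I would introduce a potential functional
\[
\cE(\phi)\;=\;\sum_{a\in\Rows}\Big(\log\phi_a+\frac{\sigma^2}{\phi_a}\Big)\;-\;\frac{2}{\delta}\sum_{i\in\Cols}I\!\Big(\sum_{b\in\Rows}W_{b,i}\phi_b^{-1}\Big).
\]
Differentiating in the variables $\phi_a^{-1}$ recovers exactly the fixed-point equations of (\ref{eq:ExplicitSE}), and the monotone trajectory is non-increasing for $\cE$. The crux of part (a) is to show, by a spatial-shift argument, that the monotone limit $\phi^\ast$ must coincide with the \emph{global} minimizer of $\cE$, not merely with a local one: if $\phi^\ast$ were elevated on some interior block, the approximate translation invariance of the band $W_{a,i}=\rho\,\Shape(\rho(a-i))$ at scale $\rho\ll 1$, together with the boundary pinning, would allow constructing a right-translated configuration of strictly lower $\cE$ that still dominates $\phi^\ast$ entry-wise, contradicting minimality in the monotone cone. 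In the continuum limit $\rho\to 0$ this reduces to minimizing the scalar potential $f(\phi)=\log\phi+\sigma^2/\phi-(2/\delta)I(1/\phi)$; since $I(s)/\log s\to\uRenyi(p_X)/2$ by the definition of R\'enyi dimension together with I-MMSE, $\delta>\uRenyi(p_X)$ forces $\min f$ to be attained at arbitrarily small $\phi$ provided $\sigma^2$ is small enough, which is part (a).

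For part (b) the stronger hypothesis $\delta>\uMMSE(p_X)$ yields $s\cdot\mmse(s)\le\uMMSE(p_X)+o(1)$ as $s\to\infty$. Linearizing $\SEmap_W$ around the good fixed point $\phi^\ast_a=O(\sigma^2)$, and using that $W$ restricted to $\Rows_0$ is approximately doubly stochastic (via the Riemann-sum approximation $\sum_a\rho\,\Shape(\rho(a-i))\to\int\Shape=1$), yields a contraction with asymptotic factor $\uMMSE(p_X)/\delta<1$ per double-step $\SEmapb_W\circ\SEmapa_W$. Summing the resulting geometric series of $\sigma^2$ injections through the two maps and carefully collecting the leading-order constants as $\ell=L\rho\to\infty$, $\rho\to 0$, $\L0\to\infty$, produces both the linear bound $C(p_X,\delta)\sigma^2$ and the sharp sensitivity ratio $(3\delta-\uMMSE(p_X))/(\delta-\uMMSE(p_X))$ claimed in (\ref{eqn:phi-C-bound}); the factor $3$ in the numerator arises from the two $\sigma^2$ injections per round trip plus the ``seed'' noise inherited from the oversampled boundary.

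The principal obstacle will be the rigorous shift/propagation step used to pin the monotone state-evolution limit at the \emph{global} minimizer of $\cE$. This is the analogue of the Kudekar--Richardson--Urbanke threshold-saturation theorem for spatially coupled LDPC codes, and its adaptation here requires controlling boundary corrections to $\cE$ of order $O(1/\rho)$ (which dictate the scaling constraint $L_0/(L\rho)\le\ve$), quantifying approximate translation invariance of the band at scale $\rho$, and ruling out spurious local minima that could trap the monotone dynamics. Once this propagation lemma is in place, parts (a) and (b) reduce to routine asymptotic analysis of the scalar potential $f$ via the I-MMSE relation and the definitions of $\uRenyi(p_X)$ and $\uMMSE(p_X)$.
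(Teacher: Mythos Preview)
Your overall plan for part (a)---monotonicity of $\SEmap_W$ giving convergence to a fixed point, boundary pinning from the oversampled seed, a free-energy functional whose stationary points are the fixed points, and a translation-based perturbation exploiting the band structure of $W$---matches the paper's strategy closely. Two points need correction, however. First, your functional has a sign error: with $-\tfrac{2}{\delta}\sum_i I(\cdot)$ the stationarity condition reads $\phi_a=\sigma^2-\tfrac{1}{\delta}\sum_i W_{a,i}\,\mmse(\cdot)$, not $+$; the paper's $\Energy_\Shape$ carries $+I$ (and an overall factor $\delta/2$). Second, and more importantly, the contradiction you invoke---``a right-translated configuration of strictly lower $\cE$ that still dominates $\phi^\ast$ entry-wise, contradicting minimality in the monotone cone''---is not the paper's mechanism and is not clearly sound as stated (what exactly is being minimized in which cone, and why does a lower-energy dominating profile contradict it?). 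The paper argues more directly: a fixed point is a \emph{stationary} point of $\Energy_\Shape$ (Corollary~\ref{coro:fix_gradiant}), yet an explicit infinitesimal shift $\phi\mapsto\phi_a$ produces a strict first-order decrease $\Energy_\Shape(\phi_a)-\Energy_\Shape(\phi)\le -c\,a$, contradicting $\nabla\Energy_\Shape(\phi)=0$. Nowhere does the paper claim or need that $\cE$ is non-increasing along the state-evolution trajectory; that Lyapunov statement is neither proved nor used, and you should drop it. The paper also inserts two technical intermediaries you omit: a \emph{modified} state evolution that dominates the original (Lemma~\ref{lem:modified_dominance}) and a \emph{continuum} state evolution approximating the modified one as $\rho\to 0$ (Lemma~\ref{lemma:ContinuumLimit}); the free-energy argument is carried out at the continuum level, where the shift perturbation and the decomposition $\Energy_\Shape=\int V+\text{corrections}$ are clean.

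Your approach to part (b) departs from the paper. You propose to linearize $\SEmap_W$ near the good fixed point and sum a geometric series with ratio $\uMMSE(p_X)/\delta$. The gap is that linearization is only valid once you already know $\phi$ is uniformly in the regime where $s\cdot\mmse(s)\approx\uMMSE(p_X)$, and part (a) by itself does not deliver this for a \emph{fixed} $\sigma$ (its $\sigma_0$ depends on $\ve$). The paper instead runs a second free-energy argument (Lemma~\ref{lem:main_continuum2}) with a different potential $V_{\rm rob}(\phi)=\tfrac{\delta}{2}(\sigma^2/\phi+\log\phi)$ and a \emph{multiplicative} perturbation $\phi_a=(1-a)\phi$ on an interval where $C\sigma^2(1-\alpha)\le\phi\le\phi_1$; the existence of such an interval uses part (a) internally (Claim~\ref{claim:slope_phi2}). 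The constant $\tfrac{3\delta-\uMMSE}{\delta-\uMMSE}$ then emerges as $\tfrac{2\delta}{\delta-\uMMSE}+1$, the $+1$ coming from $\phi^\ast\approx\sigma^2$, not from ``two $\sigma^2$ injections plus seed noise'' as you suggest.
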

The proof of this lemma is deferred to Section \ref{sec:AnalysisLemma}
and is indeed the technical core of the paper.

Now, we have in place all we need to prove our main results.
\begin{proof}[Proof (Theorem \ref{thm:MainTheoremNew})]
Recall that $\Cols \cong \{-2\rho^{-1}\cdots,L-1\}$. Therefore,
\begin{eqnarray}
\begin{split}
\underset{N\to\infty}{\limsup}\frac{1}{n} \big\|x^t\big(A(n);y(n)\big)-x(n)\big\|^2 &\le
\frac{1}{\Lc} \sum_{i \in \Cols} \underset{N\to\infty}{\limsup}\frac{1}{N} \big\|x^t_{C(i)}\big(A(n);y(n)\big)-x_{C(i)}(n)\big\|^2\\
& \stackrel{(a)}{\le} \frac{1}{\Lc} \sum_{i= -2\rho^{-1}}^{L-1} \mmse \left( \sum_{a \in \Rows}W_{a,i} \phi_a(t-1)^{-1}\right)\\
&\stackrel{(b)}{\le} \frac{1}{\Lc} \sum_{i= -2\rho^{-1}}^{L-1} \mmse \left( \sum_{a \in \Rows_0}W_{a,i} \phi_a(t-1)^{-1}\right)\\
&\stackrel{(c)}{\le} \frac{1}{\Lc} \sum_{i= -2\rho^{-1}}^{L-1} \mmse \left( \frac{1}{2} \phi_{i+\rho^{-1}}(t-1)^{-1}\right)\\
&\stackrel{(d)}{\le} \frac{1}{\Lc} \sum_{a= -\rho^{-1}}^{L+\rho^{-1}-1}  2\phi_{a}(t-1).
\end{split}
\end{eqnarray}
Here, $(a)$ follows from Lemma~\ref{lemma:SE}; $(b)$ follows from the
fact that $\mmse$ is non-increasing; (c) holds because of the following facts: $(i)$ $\phi_a(t)$ is nondecreasing 
in $a$ for every $t$ (see Lemma~\ref{lemma:NonDecreasing} below). $(ii)$ Restriction of $W$ to the rows in $\Rows_0$ is roughly column-stochastic. $(iii)$ $\mmse$ is non-increasing; $(d)$ follows from the inequality $\mmse(s) \le 1/s$. The result is immediate due to Lemma~\ref{lem:phi_convergence}, Part $(a)$.

Now, we prove the claim regarding the expected error. Let $f_n = \frac{1}{n} \|x^t(A(n);y(n)) - x(n)\|^2$. Since $\underset{n \to \infty}{\lim \sup}\, f_n \le \ve$, there exists $n_0$ such that $f_n \le 2 \ve$ for $n \ge n_0$. Applying reverse Fatou's lemma to the bounded sequence $\{f_n\}_{n \ge n_0}$, we have $\underset{N \to \infty}{\lim \sup}\, \E f_n  \le \E [\underset{N \to \infty}{\lim \sup}\, f_n]  \le \ve$.     
\end{proof}
\begin{proof}[Proof (Theorem \ref{thm:MainTheoremNewR})]
The proof proceeds in a similar manner to the proof of Theorem~\ref{thm:MainTheoremNew}. 
\begin{eqnarray}\label{eqn:proofMTR}
\begin{split}
\underset{N\to\infty}{\limsup}\frac{1}{n} \big\|\tilde{x}^t\big(\tilde{A}(n)&;y(n)\big)-x(n)\big\|^2\\
 &\le \frac{1}{\Lc} \Big\{ \sum_{i = - 2\rho^{-1}}^{L - 2\rho^{-1}-1} \underset{N\to\infty}{\limsup} \frac{1}{N} \big\|x^t_{C(i)}\big(A(n);y(n)\big)-x_{C(i)}(n)\big\|^2 + \lim_{N\to\infty}\frac{1}{N} \big\|w_2(n)\big\|^2 \Big\}\\
& \le \frac{1}{\Lc} \Big\{\sum_{i= -2\rho^{-1}}^{L-2\rho^{-1}-1} \mmse \left( \sum_{a \in \Rows}W_{a,i} \phi_a(t-1)^{-1}\right) + \lim_{N\to\infty}\frac{1}{N} \big\| w_2(n)\big\|^2 \Big\}\\
& \le \frac{1}{\Lc} \Big\{\sum_{i= -2\rho^{-1}}^{L-2\rho^{-1}-1} \mmse \left( \sum_{a \in \Rows_0}W_{a,i} \phi_a(t-1)^{-1}\right) + \lim_{N\to\infty}\frac{1}{N} \big\| w_2(n)\big\|^2 \Big\}\\
& \le \frac{1}{\Lc} \Big\{\sum_{i= -2\rho^{-1}}^{L-2\rho^{-1}-1} \mmse \left( \frac{1}{2} \phi_{i+\rho^{-1}}(t-1)^{-1}\right)
+  \lim_{N\to\infty}\frac{1}{N} \big\|w_2(n)\big\|^2 \Big\}\\
&\le \frac{1}{\Lc} \Big\{ \sum_{a= -\rho^{-1}}^{L-\rho^{-1}-1}  2\phi_{a}(t-1)
+  \lim_{N\to\infty}\frac{1}{N} \big\|w_2(n)\big\|^2 \Big\} \le C\, \sigma^2,
\end{split}
\end{eqnarray}
where the last step follows from Part $(b)$ in Lemma~\ref{lem:phi_convergence}, and Part $(b)$ in Definition~\ref{def:Converging}.

The claim regarding the expected error follows by a similar argument to the one in the proof of Theorem~\ref{thm:MainTheoremNew}.

Finally, in the asymptotic case, where $\ell =L\rho \to \infty$, $\L0 \to \infty$, $\rho \to 0$, we have $\sum_{a\in \Rows_0} W_{a,i} = \sum_{a\in \Rows_0} \rho \Shape(\rho(a-i)) \to \int \Shape(u)\, \de u = 1$, and using Eq.~\eqref{eqn:phi-C-bound} in Eq.~\eqref{eqn:proofMTR}, we obtain the desired result.
\end{proof}
%
%
\section{Numerical experiments}
We consider a Bernoulli-Gaussian distribution $p_{X} = (1-\ve)\delta_0 +
\ve\, \gamma_{0,1}$. Recall that $\gamma_{\mu,\sigma}(\de x) = (2\pi \sigma^2)^{-1/2} \exp\{-(x-\mu)^2/(2\sigma^2)\} \de x$.
 We construct a random signal
$x(n) \in \reals^n$ by sampling i.i.d. coordinates
$x(n)_i \sim p_{X}$. We have $\uRenyi(p_{X}) = \ve$ by Proposition~\ref{propo:Renyi} and
\begin{eqnarray}
\eta_{t,i} (v_i) = \frac{\ve \gamma_{1+s_{\gr(i)}^{-1}}(v_i)}{\ve \gamma_{1+s_{\gr(i)}^{-1}}(v_i) + (1-\ve) \gamma_{s_{\gr(i)}^{-1}}(v_i)}\cdot \frac{1}{1+s_{\gr(i)}^{-1}} v_i.
\end{eqnarray}
In the experiments, we use $\ve = 0.1$, $\sigma = 0.01$, $\rho = 0.1$, $M = 6$, $N = 50$, $L = 500$,
$L_0 = 5$. 
\subsection{Evolution of the AMP algorithm}

Our first set of experiments aims at illustrating the evolution of the profile $\phi(t)$
defined by state evolution versus iteration $t$, and comparing the predicted errors by the state evolution
with the empirical errors.

Figure~\ref{fig:phi-profile} shows the evolution of profile $\phi(t) \in \reals^{\Lr}$, given by the state
evolution recursion~\eqref{eq:ExplicitSE}. As explained in Section~\ref{sec:Intuition}, in the spatially coupled sensing matrix, additional
measurements are associated to the first few coordinates of $x$,
namely, $2\rho^{-1} N = 1000$ first coordinates.
This ensures that the values of these coordinates are recovered up to a mean
 square error of order $\sigma^2$. This is reflected
in the figure as the profile $\phi$ becomes of order $\sigma^2$ on the first few
entries after a few iterations (see $t=5$ in the figure). As the
iteration proceeds, the contribution of these components is correctly
subtracted from all the measurements, and essentially they are removed
from the problem. Now, in the resulting problem the first few
variables are effectively oversampled and the algorithm reconstructs
their values up to a mean square error of $\sigma^2$. Correspondingly,
the profile $\phi$ falls to a value of order $\sigma^2$ in the next
few coordinates. As the process is iterated, all the variables are
progressively reconstructed and the profile $\phi$ follows a traveling wave with  constant
velocity. After a sufficient number of iterations ($t = 800$ in the
figure), $\phi$ is uniformly of order $\sigma^2$.
\begin{figure}[!t]
\centering
\includegraphics*[viewport = -10 190 610 610, width =
3.5in]{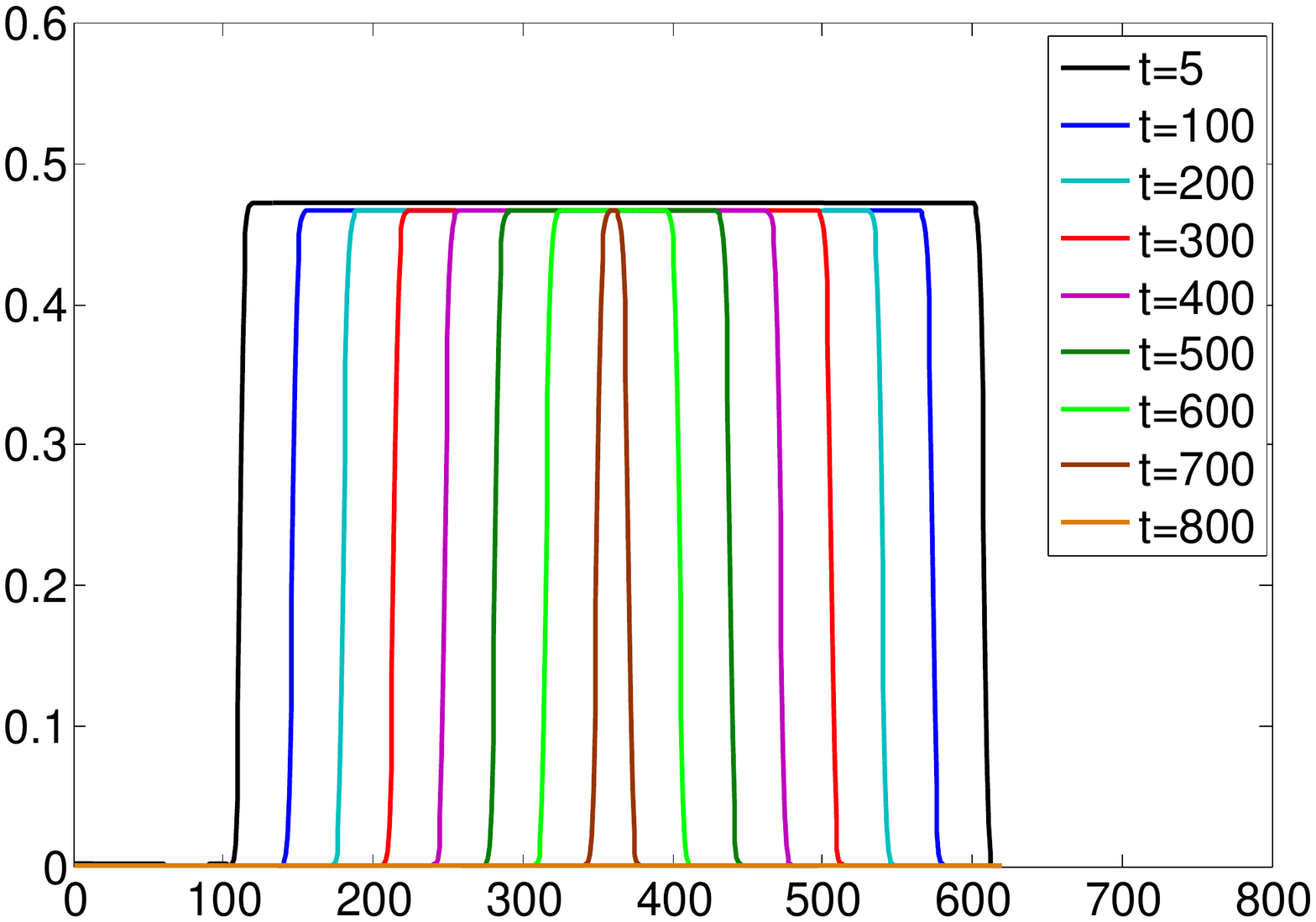}
\put(-120,-9){$a$}
\put(-265,88){$\phi_a(t)$}
\caption{Profile $\phi_a(t)$ versus $a$ for several iteration numbers.}\label{fig:phi-profile}
\end{figure} 

Next, we numerically verify that the deterministic state evolution recursion predicts
the performance of the AMP at each iteration. Define the
empirical and the predicted mean square errors respectively by
\begin{eqnarray}
\MSE_{\rm AMP}(t)&=& \frac{1}{n} \|x^t(y) - x\|_2^2,\\
\MSE_{\rm SE}(t) & = &\frac{1}{\Lc} \sum_{i\in \Cols} \mmse\Big( \sum_{a \in \Rows} W_{a,i} \phi_a^{-1}(t-1)\Big).
\end{eqnarray}

\noindent The values of $\MSE_{\rm AMP}(t)$ and $\MSE_{\rm SE}(t)$ 
are depicted versus $t$ in Fig.~\ref{fig:SE-AMP}. (Values of $\MSE_{\rm
  AMP(t)}$ and the error bars correspond to $M = 30$ Monte Carlo
instances). This verifies that the state evolution provides an
iteration-by-iteration prediction of the AMP performance. We observe  that  $\MSE_{\rm AMP}(t)$ (and $\MSE_{\rm SE}(t)$) decreases linearly versus $t$. 
\begin{figure}[!t]
\centering
\includegraphics*[viewport = -20 180 610 610, width =
3.7in]{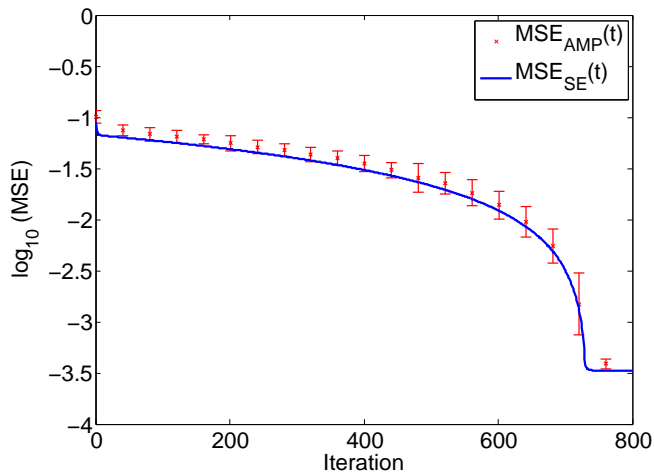}
\caption{Comparison of $\MSE_{{\rm AMP}}$ and $\MSE_{{\rm SE}}$ across iteration.}\label{fig:SE-AMP}

\end{figure} 

 \subsection{Phase diagram}
Consider a noiseless setting and let $\mathcal{A}$ be a sensing matrix--reconstruction algorithm scheme. 
The curve $\ve \mapsto \delta_{\mathcal{A}}(\ve)$ describes the
sparsity-undersampling tradeoff of $\cA$ if the following happens in the large-system limit $n,m \to
\infty$, with $m/n =\delta$.  The scheme $\cA$ does (with high
probability) correctly recover the original signal provided $\delta >
\delta_{\mathcal{A}}(\ve)$, while for $\delta <
\delta_{\mathcal{A}}(\ve)$ the algorithm fails with high probability. 

The goal of this section is to numerically compute the sparsity-undersampling tradeoff
curve for the proposed scheme (spatially coupled sensing matrices and Bayes optimal AMP ).
We consider a set of sparsity parameters  $\ve \in \{0.1,0.2,0.3,0.4,0.5\}$, and for
each value of $\ve$, evaluate the empirical phase transition
through a logit fit (we omit details, but follow the methodology
described in \cite{DMM09}). As shown in Fig~\ref{fig:phi-diagram}, the numerical results are consistent with
the claim that this scheme achieves the information theoretic lower bound $\delta > \uRenyi(p_X) = \ve$.
(We indeed expect the gap to decrease further by taking larger values of $L$).

\begin{figure}[!t]
\centering
\includegraphics*[viewport = -10 180 610 610, width =
3.7in]{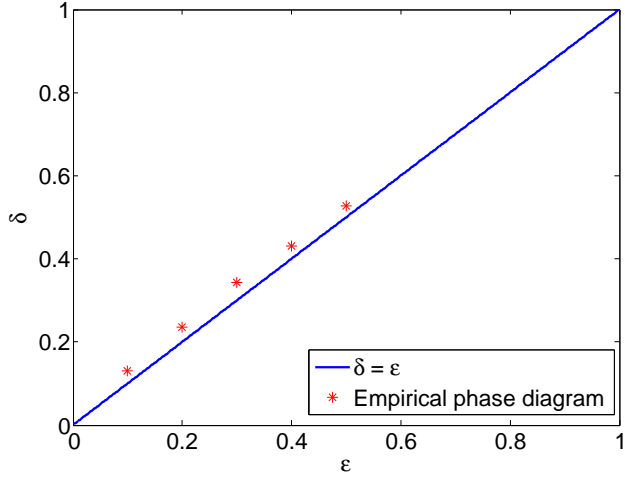}
\caption{Phase diagram for the spatially coupled sensing matrices and Bayes optimal AMP.}\label{fig:phi-diagram}
\end{figure} 

In~\cite{JavanmardMon12}, we numerically show that the spatial coupling phenomenon is significantly more robust and general than suggested by constructions in the present paper. Namely, we consider the problem of sampling a signal with sparse support in frequency domain and propose a sampling scheme that acquires a random subset of Gabor coefficients of the signal. This scheme offers one venue (out of many) for implementing the idea of spatial coupling. Note that the corresponding sensing matrix, in this context, does not have gaussian entries. As shown numerically for the mixture model, the combination of this scheme and the Bayes optimal AMP achieves the fundamental lower bound $\delta > \uRenyi(p_X)$.

%
%
\section{State evolution: an heuristic derivation}
\label{sec:SE_Heuristics}

This section presents an heuristic derivation of the state evolution
equations  (\ref{eq:ExplicitSE}). Our objective is to provide some
basic intuition: a proof in a more general setting will appear in a
separate publication \cite{JM-StateEvolution}. An heuristic derivation similar to
the present one, for the special cases of sensing matrices with
i.i.d. entries was presented in \cite{BM-MPCS-2011}.

Consider the recursion~\eqref{eq:AMP1}-\eqref{eq:AMP2}, and introduce the following modifications: $(i)$ At each iteration, replace the random matrix $A$ with a new independent copy $A^t$; $(ii)$ Replace the observation vector $y$ with $y^t = A^tx + w$; $(iii)$ Eliminate the last term in the update equation for $r^t$. Then, we have the following update rules:
\begin{eqnarray}
x^{t+1} & = & \eta_t(x^t+(Q^t\odot A^t)^*r^t)\, ,\label{eq:AMP_mod1}\\
r^t & = & y^t-A^t x^t\, ,\label{eq:AMP_mod2}
\end{eqnarray}
where $A^0, A^1, A^2, \cdots$ are i.i.d. random matrices distributed according to the ensemble $\mathcal{M}(W,M,N)$, i.e.,
\begin{eqnarray}
A_{ij}^t \sim \normal\Big(0,\frac{1}{M} W_{\gr(i),\gr(j)}\Big).
\end{eqnarray} 
Rewriting the recursion by eliminating $r^t$, we obtain:
\begin{eqnarray}\label{eq:AMP_mod3}
\begin{split}
x^{t+1} & =  \eta_t((Q^t\odot A^t)^*y^t + (I -(Q^t\odot A^t)^* A^t) x^t)\\
 & = \eta_t(x +  (Q^t\odot A^t)^* w + B^t (x^t - x))\, ,
\end{split}
\end{eqnarray}
where $B^t = I -(Q^t\odot A^t)^* A^t \in \reals^{n \times n}$. Note
that the recursion~\eqref{eq:AMP_mod3} does not correspond to the AMP
update rules defined per Eqs.~\eqref{eq:AMP1} and~\eqref{eq:AMP2}. 
In particular, it does not correspond to any practical algorithm since the sensing matrix $A$ is a fixed input to a reconstruction algorithm and is not resampled at each iteration. 
However, it is much easier to analyze, 
since $A^t$ is independent of $x^t$ and therefore the distribution of
$(Q^t\odot A^t)^*r^t$ can be easily characterized.
Also, it is useful for presenting the intuition behind the AMP
algorithm and to emphasize the role of the
term $\ons^t \odot r^{t-1}$ in the update rule for $r^t$. As it emerges
from the proof of \cite{BM-MPCS-2011}, this term does asymptotically
cancel dependencies across iterations.

By virtue of the central limit theorem, each entry of $B^t$ is approximately normal. More specifically, $B_{ij}^t$ is approximately normal with mean zero and variance $(1/M) \sum_{r\in \Rows} W_{r,\gr(i)} W_{r,\gr(j)} (Q^t_{r,\gr(i)})^2$, for $i,j \in [n]$. Define $\hat{\tau}_t(s) = \lim_{N \to \infty} \|x^t_{C(s)} - x_{C(s)}\|^2 / N$, for $s \in \Cols$. It is easy to show that distinct entries in $B^t$ are approximately independent. Also, $B^t$ is independent of $\{B^s\}_{1\le s\le t-1}$, and in particular, of $x^t - x$. Hence, $B^t(x^t - x)$ converges to a vector, say $v$, with i.i.d. normal entries, and for $i \in [n]$,
\begin{eqnarray}
\E\{v_i\} = 0, \quad \E\{v_i^2\} = \frac{N}{M} \sum_{u \in \Cols} \sum_{r \in \Rows} W_{r,\gr(i)} W_{r,u} (Q^t_{r,\gr(i)})^2\, \hat{\tau}_t(u).
\end{eqnarray}

Conditional on $w$, $(Q^t\odot A^t)^*w$ is a vector with i.i.d. zero-mean normal
entries . Also, the variance of its $i^{th}$ entry, for $i\in[n]$, is 
\begin{eqnarray}
\frac{1}{M} \sum_{r \in \Rows} W_{r,\gr(i)} (Q^t_{r,\gr(i)})^2 \|w_{R(r)}\|^2 ,
\end{eqnarray}
which converges to $\sum_{r \in \Rows} W_{r,\gr(i)} (Q^t_{r,\gr(i)})^2 \sigma^2 $, by the law of large numbers. With
slightly more work, 
it can be shown that these entries are approximately independent of the ones of $B^t(x^t - x)$. 

Summarizing, the $i^{th}$ entry of the vector in the argument of $\eta_t$ in Eq.~\eqref{eq:AMP_mod3} converges to $X + \tau_t(\gr(i))^{1/2} Z$ with $Z\sim \normal(0,1)$ independent of $X$, and
\begin{eqnarray}\label{eq:AMP_mod4}
\begin{split}
\tau_t(s) &= \sum_{r \in \Rows} W_{r,s} (Q^t_{r,s})^2 \big\{\sigma^2 +  \frac{1}{\delta} \sum_{u \in \Cols} W_{r,u} \, \hat{\tau}_t(u) \big\},\\
%
\end{split}
\end{eqnarray} 
for $s \in \Cols$. In addition, using Eq.~\eqref{eq:AMP_mod3} and invoking Eqs.~\eqref{eq:eta_def1},~\eqref{eq:eta_def2}, each entry of $x^{t+1}_{C(s)} - x_{C(s)}$ converges to $\eta_{t,s}(X + \tau_t(s)^{1/2}Z) - X$, for $s\in \Cols$. Therefore,
\begin{eqnarray}\label{eq:AMP_mod5}
\begin{split}
\hat{\tau}_{t+1}(s) &= \lim_{N \to \infty} \frac{1}{N} \|x^{t+1}_{C(s)} - x_{C(s)}\|^2\\
&= \E\{[\eta_{t,s}(X + \tau_t(s)^{1/2} Z) - X]^2\} = \mmse(\tau_t(s)^{-1}).
\end{split}
\end{eqnarray}

Using Eqs.~\eqref{eq:AMP_mod4} and~\eqref{eq:AMP_mod5}, we obtain:
\begin{eqnarray}
\tau_{t+1}(s) = \sum_{r \in \Rows} W_{r,s} (Q^{t+1}_{r,s})^2 \big\{\sigma^2 +  \frac{1}{\delta} \sum_{u \in \Cols} W_{r,u} \, \mmse(\tau_t(u)^{-1}) \big\}.
\end{eqnarray}

Applying the change of variable $\tau_t(u)^{-1} = \sum_{b \in \Rows} W_{b,u} \phi_b(t)^{-1}$, and substituting for $Q^{t+1}_{r,s}$ from Eq.~\eqref{eq:Q_def}, we obtain the state evolution recursion, Eq.~\eqref{eq:ExplicitSE}.

In conclusion, we showed that the state evolution recursion would hold
if the matrix $A$ was resampled independently from the ensemble
$\mathcal{M}(W,M,N)$, at each iteration. However, in our proposed AMP algorithm,
the matrix $A$ is constant across iterations, and the above argument is
not valid since $x^t$ and $A$ are dependent. 
The dependency between $A$ and $x^t$ cannot
be neglected. Indeed, state evolution does not apply to the following
naive iteration in which we dropped the memory term $\ons^t \odot r^{t-1}$:
\begin{eqnarray}
x^{t+1} & = & \eta_t(x^t+(Q^t\odot A)^*r^t)\, ,\label{eq:AMP_naive1}\\
r^t & = & y^t-A x^t\, .\label{eq:AMP_naive2}
\end{eqnarray}
Indeed,
the term $\ons^t \odot r^{t-1}$ leads to an asymptotic cancellation of
the dependencies between $A$ and $x^t$  as proved in \cite{BM-MPCS-2011,JM-StateEvolution}. 
     
%
\section{Analysis of  state evolution: Proof of Lemma \ref{lem:phi_convergence}}
\label{sec:AnalysisLemma}

This section is devoted to the analysis of the state evolution
recursion for spatially coupled matrices $A$, hence proving Lemma
\ref{lem:phi_convergence}.

In order to prove Lemma \ref{lem:phi_convergence}, we
will construct a free energy functional $\Energy_{\Shape}(\phi)$ such that the fixed points of the state
evolution are the stationary points of $\Energy_{\Shape}$. We then assume
by contradiction that the claim of the lemma does not hold, i.e.,
$\phi(t)$ converges to a fixed point $\phi(\infty)$ with $\phi_a(\infty)\gg\sigma^2$ for a
significant fraction of the indices $a$. We then obtain a
contradiction by describing an infinitesimal deformation of this fixed
point (roughly speaking, a shift to the right) that decreases its free energy.

\subsection{Outline}

A more precise outline of the proof is given below:
\begin{enumerate}
\item[($i$)] We establish some useful properties of the state evolution sequence $\{\phi(t), \psi(t)\}_{t \ge 0}$. This includes a monotonicity property as well as a lower and an upper bound for the state vectors.

\item[($ii$)] We define a modified state evolution sequence, denoted by $\{\phi^{\rm mod}(t), \psi^{\rm mod} (t)\}_{t \ge 0}$. This sequence dominates the original state vectors (see Lemma~\ref{lem:modified_dominance}) and hence it suffices to focus on the modified state evolution to get the desired result. As we will see the modified state evolution is more amenable to analysis.

\item[($iii$)] We next introduce continuum state evolution which serves as the continuous analog of the modified state evolution. (The continuum states are functions rather than vectors). The bounds on the continuum state evolution sequence lead to bounds on the modified state vectors.

\item[($iv$)] Analysis of the continuum state evolution incorporates
  the definition of a free energy functional defined on the space of
  non-negative measurable functions with bounded support. The energy
  is constructed in a way to ensure that the fixed points of the
  continuum state evolution are the stationary points of the free
  energy. Then, we show that if the undersampling rate is greater than
  the information dimension, the solution of the continuum state
  evolution can be made as small as $O(\sigma^2)$. If this were not
  the case, the (large) fixed point  could be perturbed slightly in
  such a way that  the free energy decreases to the  first order. However, since the fixed point is a stationary point of the free energy, this leads to a contradiction.     

\end{enumerate}
\subsection{Properties of the state evolution sequence}
Throughout this section $p_X$ is a given probability distribution over
the real line, and $X\sim p_X$. Also, we will take $\sigma > 0$. The result for the noiseless model (Corollary~\ref{coro:Noiseless}) follows by letting $\sigma \downarrow 0$. Recall the inequality
\begin{eqnarray}\label{eq:mmse_ineq}
\mmse(s)\le \min({\rm Var}(X),\frac{1}{s})\, .
\end{eqnarray}

\begin{definition}
For two vectors $\phi, \tilde{\phi} \in \reals^{K}$, we write $\phi
\succeq \tilde{\phi}$ if all $\phi_r \ge \tilde{\phi}_r$ for 
$r\in\{1,\dots,K\}$.
\end{definition}

\begin{propo}
\label{propo:monotonicity_TW}
For any $W \in \reals_+^{\Rows \times \Cols}$, the maps $\SEmapa_{W}:\reals_+^{\Rows}\to\reals_+^{\Cols}$ and 
$\SEmapb_{W}:\reals_+^{\Cols}\to\reals_+^{\Rows}$, as defined in Definition~\ref{def:SEmapa}, are monotone;
i.e., if $\phi \succeq \tilde{\phi}$ then $\SEmapa_{W}(\phi) \succeq
\SEmapa_W(\tilde{\phi})$, and if $\psi \succeq \tilde{\psi}$ then $\SEmapb_{W}(\psi) \succeq
\SEmapb_W(\tilde{\psi})$. Consequently, $\SEmap_W$ is also monotone.
\end{propo}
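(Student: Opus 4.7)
\medskip

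\noindent\textbf{Proof plan.} The claim is essentially a bookkeeping exercise once one identifies the correct monotonicity of the scalar functions that build $\SEmapa_W$ and $\SEmapb_W$. I would proceed in three short steps.

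First, monotonicity of $\SEmapb_W$ is immediate from the formula
\begin{equation*}
\SEmapb_{W}(\psi)_{a} = \sigma^2+\frac{1}{\delta}\sum_{i\in\Cols} W_{a,i}\, \psi_i,
\end{equation*}
since $W_{a,i}\ge 0$ by assumption: if $\psi_i\ge \tilde\psi_i$ for every $i\in\Cols$, then each summand $W_{a,i}\psi_i \ge W_{a,i}\tilde\psi_i$, and termwise summation gives $\SEmapb_W(\psi)_a\ge \SEmapb_W(\tilde\psi)_a$ for every $a\in\Rows$.

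Second, for $\SEmapa_W$ I would record two elementary monotonicity facts. Recall
\begin{equation*}
\SEmapa_{W}(\phi)_{i} = \mmse\Big(\sum_{b\in\Rows}W_{b,i}\phi_b^{-1}\Big).
\end{equation*}
The key ingredient is that $s\mapsto\mmse(s)$ is non-increasing on $\realsc$; this is a standard fact about Gaussian-noise MMSE (the conditional expectation is the Bayes estimator, and estimation can only improve as the signal-to-noise ratio grows; alternatively, one couples $Y_s=\sqrt{s}X+Z$ and $Y_{s'}=\sqrt{s'}X+Z$ with common $X,Z$ and uses that the MMSE estimator at $s$ is one admissible estimator at $s'\ge s$). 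Combining this with the fact that $x\mapsto x^{-1}$ is non-increasing on $(0,\infty]$ (with the conventions $0^{-1}=\infty$ and $\infty^{-1}=0$, and $\mmse(\infty)=0$), if $\phi\succeq\tilde\phi$ then $\phi_b^{-1}\le\tilde\phi_b^{-1}$, so $\sum_b W_{b,i}\phi_b^{-1}\le \sum_b W_{b,i}\tilde\phi_b^{-1}$, and applying $\mmse$ reverses the inequality once more to yield $\SEmapa_W(\phi)_i\ge \SEmapa_W(\tilde\phi)_i$.

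Third, monotonicity of $\SEmap_W=\SEmapa_W\circ\SEmapb_W$ follows by composing: if $\psi\succeq\tilde\psi$, then by the first step $\SEmapb_W(\psi)\succeq \SEmapb_W(\tilde\psi)$, and then by the second step $\SEmap_W(\psi)=\SEmapa_W(\SEmapb_W(\psi))\succeq \SEmapa_W(\SEmapb_W(\tilde\psi))=\SEmap_W(\tilde\psi)$.

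There is no real obstacle here; the only nontrivial input is the non-increasing character of $\mmse(\cdot)$, which is a classical property of Gaussian-channel MMSE and, at worst, warrants a one-line coupling argument or a citation to, e.g., the Guo--Shamai--Verd\'u I-MMSE literature.
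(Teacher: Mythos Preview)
Your proposal is correct and follows essentially the same approach as the paper: the paper's proof is a one-liner observing that the claim follows from the monotone decreasing nature of $s\mapsto\mmse(s)$ together with the nonnegativity of the entries of $W$, and your three steps simply unpack that observation in detail.
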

\begin{proof}
It follows immediately from the fact that $s\mapsto \mmse(s)$ is a
monotone decreasing function and the positivity of the matrix $W$.
\end{proof}

\begin{propo}\label{propo:monotone_map_discrete}
The state evolution sequence $\{\phi(t),\psi(t)\}_{t\ge 0}$ with
initial condition $\psi_i(0)=\infty$, for $i\in \Cols$, is monotone decreasing, in the
sense that $\phi(0)\succeq\phi(1)\succeq \phi(2)\succeq\dots$ and
$\psi(0)\succeq\psi(1)\succeq \psi(2)\succeq\dots$.
\end{propo}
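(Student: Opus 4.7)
The plan is straightforward induction using the monotonicity of the two state evolution maps established in Proposition~\ref{propo:monotonicity_TW} and the composite map $\SEmap_W = \SEmapa_W \circ \SEmapb_W$. The base case exploits the fact that the initialization $\psi_i(0) = \infty$ sits at the top of the (extended) order, so it trivially dominates everything produced afterward; monotonicity then propagates this initial domination through all subsequent iterations.

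More concretely, I would first observe that $\psi(0) \succeq \psi(1)$: the entries of $\psi(0)$ are $+\infty$, while using the bound in~(\ref{eq:mmse_ineq}) we have $\psi_i(1) = \mmse(\sum_b W_{b,i}\phi_b(0)^{-1}) \le \mathrm{Var}(X) < \infty$ for every $i \in \Cols$ (indeed $\phi(0)$ is itself $+\infty$ componentwise, so the argument of $\mmse$ is $0$ and $\psi_i(1) = \mathrm{Var}(X)$). Then, assuming inductively that $\psi(t) \succeq \psi(t+1)$, I apply monotonicity of $\SEmapb_W$ to obtain
\[
\phi(t) = \SEmapb_W(\psi(t)) \succeq \SEmapb_W(\psi(t+1)) = \phi(t+1),
\]
and then monotonicity of $\SEmapa_W$ to obtain
\[
\psi(t+1) = \SEmapa_W(\phi(t)) \succeq \SEmapa_W(\phi(t+1)) = \psi(t+2),
\]
which closes the induction and simultaneously yields the monotonicity of the $\phi$-sequence.

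There is no real obstacle here: the only point requiring slight care is handling the $+\infty$ initialization rigorously, which is dispatched by noting that $\mmse(0) = \mathrm{Var}(X)$ is finite so that $\psi(1)$ is immediately finite, after which all quantities live in $\reals_+$ and the monotonicity statements in Proposition~\ref{propo:monotonicity_TW} apply verbatim. No additional properties of $p_X$, $W$, or $\sigma^2$ beyond positivity of $W$ and monotonicity of $\mmse$ are needed.
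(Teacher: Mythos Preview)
Your proposal is correct and follows essentially the same approach as the paper: establish $\psi(0)\succeq\psi(1)$ from the $+\infty$ initialization, then invoke the monotonicity of the state evolution maps (Proposition~\ref{propo:monotonicity_TW}) to propagate the inequality by induction. If anything, you spell out the induction step in more detail than the paper does.
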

\begin{proof}
Since $\psi_i(0)=\infty$ for all $i$, we have
$\psi(0)\succeq\psi(1)$. The thesis follows from the monotonicity of
the state evolution map.
\end{proof}

\begin{propo}\label{propo:monotone_noise}
The state evolution sequence $\{\phi(t),\psi(t)\}_{t\ge 0}$ 
is monotone increasing in $\sigma^2$. Namely, let $0\le \sigma_1\le
\sigma_2$ and $\{\phi^{(1)}(t),\psi^{(1)}(t)\}_{t\ge 0}$, $\{\phi^{(2)}(t),\psi^{(2)}(t)\}_{t\ge 0}$ 
be the state evolution sequences corresponding to setting, respectively,
$\sigma^2=\sigma_1^2$ and $\sigma^2=\sigma_2^2$ in
Eq.~(\ref{eq:ExplicitSE}), with identical initial conditions.
Then $\phi^{(1)}(t)\preceq\phi^{(2)}(t)$, 
$\psi^{(1)}(t)\preceq\psi^{(2)}(t)$ for all $t$. 
\end{propo}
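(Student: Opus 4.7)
The plan is a straightforward induction on $t$, using the monotonicity of the state evolution maps established in Proposition~\ref{propo:monotonicity_TW} together with the explicit dependence of $\SEmapb_W$ on $\sigma^2$. The key observation is that $\sigma^2$ enters the recursion only through the additive term in $\SEmapb_W(\psi)_a = \sigma^2 + \frac{1}{\delta}\sum_i W_{a,i}\psi_i$, so increasing $\sigma^2$ directly pushes $\phi$ upward componentwise; the challenge is then to propagate this inequality forward through the nonlinear map $\SEmapa_W$, which involves the (decreasing) $\mmse$ function.

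The base case is immediate since the initial conditions coincide: $\psi^{(1)}(0)=\psi^{(2)}(0)$. For the inductive step, assume $\psi^{(1)}(t)\preceq \psi^{(2)}(t)$. Denoting the two recursions with their respective noise variances by $\SEmapb_W^{\sigma_1^2}$ and $\SEmapb_W^{\sigma_2^2}$, we have, entrywise,
\begin{equation*}
\phi_a^{(1)}(t) = \sigma_1^2 + \frac{1}{\delta}\sum_{i\in\Cols} W_{a,i}\,\psi_i^{(1)}(t) \le \sigma_2^2 + \frac{1}{\delta}\sum_{i\in\Cols} W_{a,i}\,\psi_i^{(2)}(t) = \phi_a^{(2)}(t),
\end{equation*}
using $\sigma_1^2\le\sigma_2^2$, the induction hypothesis, and the nonnegativity of $W$. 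Hence $\phi^{(1)}(t)\preceq\phi^{(2)}(t)$. Applying $\SEmapa_W$ (which is monotone by Proposition~\ref{propo:monotonicity_TW}) then yields $\psi^{(1)}(t+1) = \SEmapa_W(\phi^{(1)}(t)) \preceq \SEmapa_W(\phi^{(2)}(t)) = \psi^{(2)}(t+1)$, closing the induction.

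There is no real obstacle here: the argument is essentially a comparison principle for a monotone dynamical system with a parameter entering monotonically. The only subtlety is bookkeeping the direction of inequalities through $\mmse(\cdot)$, which is decreasing in its argument, and through the reciprocals $\phi_b^{-1}$, but these two sign flips cancel and leave the desired monotonicity in $\sigma^2$ intact, consistent with what is already encoded in the proof of Proposition~\ref{propo:monotonicity_TW}.
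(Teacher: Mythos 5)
Your proof is correct and follows the same route as the paper, which simply cites the monotonicity of $\SEmapa_W$ (Proposition~\ref{propo:monotonicity_TW}) together with the monotone dependence of the one-step map on $\sigma^2$; your induction on $t$ is exactly the unpacked version of that one-line argument.
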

\begin{proof}
Follows immediately from Proposition \ref{propo:monotonicity_TW} and
the monotonicity of the one-step mapping (\ref{eq:ExplicitSE}).
\end{proof}

\begin{lemma}\label{lemma:UB_SE}
Assume $\delta L_0> 3$. Then
there exists $t_0$ (depending only on $p_X$), such that, for all $t\ge t_0$ and all
$i\in\{-2\rho^{-1},\dots,-1\}$, $a\in\Rows_i$, we have
\begin{eqnarray}
\psi_i(t)&\le&
\mmse\Big(\frac{L_0}{2\sigma^2}\Big)\le\frac{2\sigma^2}{L_0}\, ,\\
\phi_a(t)&\le&\sigma^2+\frac{1}{\delta}\mmse\Big(\frac{L_0}{2\sigma^2}\Big)\le
\Big(1+\frac{2}{\delta L_0}\Big)\sigma^2\, .
\end{eqnarray}
\end{lemma}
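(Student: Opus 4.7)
The plan is to exploit that the ``oversampling'' rows $\Rows_i$ (for $i\in\{-2\rho^{-1},\dots,-1\}$) decouple from the rest of the graph. By construction of $W$, for every $a\in\Rows_i$ the only nonzero entry of the $a$-th row is $W_{a,i}=1$, so Definition~\ref{def:SESequence} gives, for every $a\in\Rows_i$,
\[
\phi_a(t)\;=\;\sigma^2+\frac{1}{\delta}\,\psi_i(t).
\]
Thus $\phi_a(t)$ takes the same value on the whole block $\Rows_i$ and depends on $\psi_i(t)$ alone. Substituting this in the $\psi$-update and discarding the non-negative contributions coming from $b\in\Rows_0$ (which by monotonicity of $\mmse$ can only weaken the bound) yields
\[
\psi_i(t+1)\;\le\;\mmse\!\Big(\frac{L_0}{\sigma^2+\psi_i(t)/\delta}\Big).
\]

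I would then combine this with the universal bound $\mmse(s)\le 1/s$ from \eqref{eq:mmse_ineq} to obtain the affine recursion
\[
\psi_i(t+1)\;\le\;\frac{\sigma^2+\psi_i(t)/\delta}{L_0},
\]
which is a contraction with factor $1/(\delta L_0)\le 1/3$ (using $\delta L_0>3$) and fixed point $\psi^\star=\delta\sigma^2/(\delta L_0-1)<\delta\sigma^2/2$. Since $\psi_i(1)\le\mmse(0)=\mathrm{Var}_{p_X}(X)<\infty$ and the iterates decrease monotonically by Proposition~\ref{propo:monotone_map_discrete}, iterating the contraction gives $\psi_i(t)\le\delta\sigma^2$ for all $t$ large enough; the required number of steps is controlled by $\mathrm{Var}_{p_X}(X)$ and the universal contraction rate, yielding a $t_0$ that depends only on $p_X$ (via its second moment).

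Once $\psi_i(t-1)\le\delta\sigma^2$, the argument of $\mmse$ in the earlier display is at least $L_0/(2\sigma^2)$, and monotonicity of $\mmse$ gives
\[
\psi_i(t)\;\le\;\mmse\!\Big(\frac{L_0}{2\sigma^2}\Big)\;\le\;\frac{2\sigma^2}{L_0},
\]
the final step again by $\mmse(s)\le 1/s$. Substituting into $\phi_a(t)=\sigma^2+\psi_i(t)/\delta$ then produces the companion bound $\phi_a(t)\le\sigma^2\bigl(1+2/(\delta L_0)\bigr)$. I do not foresee any serious obstacle: the argument is essentially a one-dimensional fixed-point analysis made possible by the ``identity'' structure $W_{a,i}=1$ on the blocks $\Rows_i$. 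The only point needing care is the two-step nature of the bound—one first shows the coarse estimate $\psi_i(t)\le\delta\sigma^2$ via the linearised recursion, and only then invokes monotonicity of $\mmse$ to upgrade to the sharper $\mmse(L_0/(2\sigma^2))$ bound claimed in the lemma.
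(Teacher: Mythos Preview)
Your proposal is correct and follows essentially the same route as the paper: use the block structure $W_{a,i}=1$, $W_{a,j}=0$ ($j\neq i$) to get $\phi_a(t)=\sigma^2+\psi_i(t)/\delta$, derive the affine contraction $\psi_i(t+1)\le(\sigma^2+\psi_i(t)/\delta)/L_0$ via $\mmse(s)\le 1/s$, iterate to a coarse bound, and then feed that back through the monotonicity of $\mmse$ to obtain the sharp $\mmse(L_0/(2\sigma^2))$ estimate. The only cosmetic difference is the intermediate target (you aim for $\psi_i\le\delta\sigma^2$, the paper for $\psi_i\le 2\sigma^2/L_0$), but both serve the same purpose of ensuring $\phi_a\le 2\sigma^2$ before the final substitution.
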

\begin{proof}
Take $i \in \{-2\rho^{-1},\cdots,-1\}$. For $a \in\Rows_i$, we have $\phi_a(t) =
\sigma^2+(1/\delta)\psi_i(t)$. Further
from $ \mmse(s)\le 1/s$, we deduce that 
\begin{eqnarray}
\begin{split}
\psi_i(t+1) &= \mmse\Big( \sum_{b \in \Rows} W_{b,i} \phi_b(t)^{-1}\Big)
 \le \Big( \sum_{b \in \Rows} W_{b,i} \phi_b(t)^{-1}\Big)^{-1}\\
 & \le \Big(\sum_{a\in\Rows_i} W_{a,i}\phi_a(t)^{-1} \Big) ^{-1} = \Big( \L0 \phi_a(t)^{-1}\Big)^{-1} = \frac{\phi_a(t)}{\L0}.
 \end{split}
\end{eqnarray}
Here we used the facts that $W_{a,i} =1$, for $a\in \Rows_i$ and $|\Rows_i| = \L0$. Substituting in the earlier relation, we get $\psi_i(t+1)\le (1/\L0) (\sigma^2 + (1 /\delta) \psi_i(t))$. Recalling that $\delta \L0 >3$, we have $\psi_i(t) \le 2\sigma^2 / \L0$, for all $t$ sufficiently large. Now, using this in the equation for $\phi_a(t)$, $a \in \Rows_i$, we obtain
\begin{eqnarray} \label{eqn:phi_b1}
\phi_a(t) = \sigma^2 + \frac{1}{\delta} \psi_i(t) \le \Big( 1+ \frac{2}{\delta \L0}\Big) \sigma^2.
\end{eqnarray}
We prove the other claims
by repeatedly substituting in the previous bounds. In particular,
\begin{eqnarray}
\begin{split} \label{eqn:psi_b_1}
\psi_i(t) &= \mmse\Big( \sum_{b \in \Rows} W_{b,i} \phi_b(t-1)^{-1}\Big)
\le \mmse\Big( \sum_{a \in \Rows_i} W_{a,i} \phi_a(t)^{-1}\Big)\\
&= \mmse(\L0 \phi_a(t)^{-1}) \le \mmse \Big( \frac{\L0}{ (1+\frac{2}{\delta \L0}) \sigma^2} \Big) \le \mmse\Big( \frac{\L0}{2 \sigma^2}\Big),
\end{split}
\end{eqnarray}
where we used Eq.~\eqref{eqn:phi_b1} in the penultimate inequality. Finally,
\begin{eqnarray}
\phi_a(t) \le \sigma^2 + \frac{1}{\delta} \psi_i(t) \le \sigma^2 + \frac{1}{\delta} \mmse \Big( \frac{\L0}{2 \sigma^2}\Big),
\end{eqnarray}
where the inequality follows from Eq.~\eqref{eqn:psi_b_1}.
\end{proof}

Next we prove a lower bound on the state evolution sequence.
Here and below
$\Cols_0\equiv\Cols\setminus\{-2\rho^{-1},\dots,-1\}\cong\{0,\dots ,L-1\}$. Also, recall that $\Rows_0 \equiv \{-\rho^{-1},\dots,0,\dots ,L-1 + \rho^{-1}\}$. (See Fig.~\ref{fig:Wmatrix}).
\begin{lemma}\label{lemma:LB_SE}
For any $t\ge 0$, and any $i\in\Cols_0$, $\psi_i(t)\ge
\mmse(2 \sigma^{-2})$. Further, for any $a\in
\Rows_0$ and any 
$t\ge 0$ we have $\phi_a(t)\ge
\sigma^2+(2\delta)^{-1}\mmse(2\sigma^2)$.
\end{lemma}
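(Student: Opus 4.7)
The proof is a short induction on $t$ using the recursion~(\ref{eq:ExplicitSE}) together with monotonicity of $\mmse$ (Proposition~\ref{propo:monotonicity_TW}) and the structural properties of $W = W(L,L_0,\Shape,\rho)$ recorded in Section~\ref{sec:Choices}.

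For the first claim, I would fix $i \in \Cols_0$ and observe that by construction the only rows $b$ with $W_{b,i} > 0$ lie in $\Rows_0$: the rows in the auxiliary blocks $\Rows_k$, $k \in \{-2\rho^{-1},\dots,-1\}$, carry a single nonzero entry located at column $k \neq i$. Since $\phi_a(t-1) \ge \sigma^2$ is immediate from the recursion (as $W \ge 0$ and $\psi \ge 0$), and since $W$ restricted to $\Rows_0$ is roughly column-stochastic for $\rho$ small (so that $\sum_{a\in\Rows_0} W_{a,i} \le 2$, by the Riemann-sum argument from Section~\ref{sec:Choices}), we obtain
\[
\sum_{b\in\Rows} W_{b,i}\,\phi_b(t-1)^{-1} \;=\; \sum_{a\in\Rows_0} W_{a,i}\,\phi_a(t-1)^{-1} \;\le\; 2\sigma^{-2}.
\]
Monotonicity of $\mmse$ then yields $\psi_i(t) \ge \mmse(2\sigma^{-2})$. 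The base case $t=0$ is trivial from $\psi_i(0) = \infty$.

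For the second claim, given $a \in \Rows_0$ I would insert the first claim into the recursion to obtain
\[
\phi_a(t) \;=\; \sigma^2 + \frac{1}{\delta}\sum_{i\in\Cols} W_{a,i}\psi_i(t) \;\ge\; \sigma^2 + \frac{\mmse(2\sigma^{-2})}{\delta}\sum_{i\in\Cols_0} W_{a,i},
\]
reducing matters to the estimate $\sum_{i\in\Cols_0} W_{a,i} \ge 1/2$. By Riemann-sum convergence and the symmetry $\Shape(-u) = \Shape(u)$, the sum $\sum_{i=0}^{L-1} \rho\Shape(\rho(a-i))$ approximates $\int_{\rho(a-L+1)}^{\rho a}\Shape$, which is at least $1/2$ whenever $\rho a \ge 0$, i.e., for $a \in \{0,1,\dots,L-1+\rho^{-1}\}$---the bulk of $\Rows_0$.

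The main technical obstacle is the handful of left-boundary rows $a \in \{-\rho^{-1},\dots,-1\}$, for which the $\Cols_0$-truncated shape window captures less than half the mass of $\Shape$, so the displayed inequality alone does not deliver the advertised constant $(2\delta)^{-1}\mmse(2\sigma^{-2})$. These $\rho^{-1} = O(1)$ rows are essentially harmless for the downstream averages $\frac{1}{L}\sum_{a=-\rho^{-1}}^{L\pm\rho^{-1}-1}\phi_a(t)$ appearing in Lemma~\ref{lem:phi_convergence} once $L \to \infty$; if a uniform-in-$a$ statement is genuinely required, one can insert a slightly weaker row-dependent constant for these finitely many boundary rows without changing the form of the bound. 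Apart from this edge effect, the argument is a two-line induction.
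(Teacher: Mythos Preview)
Your approach is essentially the paper's own proof: use $\phi_b(t)\ge\sigma^2$ and the (roughly) column-stochastic property on $\Cols_0$ to obtain $\psi_i(t)\ge\mmse(2\sigma^{-2})$, then feed this back into the recursion for $\phi_a$. The only cosmetic difference is that the paper sums over all of $\Cols$ in the second step and invokes the roughly row-stochastic condition $\sum_{i\in\Cols}W_{a,i}\ge 1/2$ directly, rather than restricting to $\Cols_0$ as you do; since for $a\in\Rows_0$ the row $W_{a,\cdot}$ is supported on the shape window, this amounts to the same thing.

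The boundary issue you flag for $a\in\{-\rho^{-1},\dots,-1\}$ is real, and the paper's proof glosses over it just as much as yours: when $a$ is at the left end of $\Rows_0$ the row mass over $\Cols_0$ can fall below $1/2$ (and, symmetrically, the implicit extension of the $\psi$-bound to $i<0$ is not established). So you have not missed anything the paper supplies; the lemma is used downstream only through averages over $\Rows_0$ with $L\to\infty$, where these $O(\rho^{-1})$ rows are negligible, which is precisely the observation you make. Your write-up is, if anything, more honest about this edge effect than the paper's.
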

\begin{proof}
Since $\phi_a(t)\ge\sigma^2$ by definition, we have, for $i\ge 0$, $\psi_i(t)\ge
\mmse(\sigma^{-2}\sum_bW_{bi})\ge \mmse(2\sigma^{-2})$, where we used
the fact that the restriction of $W$ to columns in $\Cols_0$ is roughly column-stochastic. Plugging this into the expression for $\phi_a$, we get 
\begin{eqnarray}
\phi_a(t)\ge\sigma^2+\frac{1}{\delta}\sum_{i \in \Cols} W_{a,i}\;
\mmse(2\sigma^{-2})\ge
\sigma^2+\frac{1}{2\delta}\mmse(2\sigma^{-2}) \, .
\end{eqnarray}
\end{proof}

Notice that for $L_{0,*} \ge 4$ and for all $L_0>L_{0,*}$, the upper bound for $\psi_i(t)$, $i\in \{-2\rho^{-1},\cdots, -1\}$, given in Lemma \ref{lemma:UB_SE} is below the lower
bound for $\psi_i(t)$, with $i \in \Cols_0$, given in Lemma \ref{lemma:LB_SE}; i.e., for all $\sigma$,
\begin{eqnarray}\label{eq:ConditionL0}
\mmse\Big(\frac{L_0}{2\sigma^2}\Big)\le \mmse\Big(\frac{2}{\sigma^2}\Big)\, .
\end{eqnarray}
%

\subsection{Modified state evolution}

First of all, by Proposition \ref{propo:monotone_noise} we can assume,
without loss of generality $\sigma>0$.

Motivated by the monotonicity properties of the state evolution sequence mentioned
in Lemmas~\ref{lemma:UB_SE} and~\ref{lemma:LB_SE}, 
we introduce a new state evolution recursion that dominates the
original one and yet is more amenable to analysis.  Namely, we define the modified state evolution maps
$\Mapa_W:\reals_+^{\Rows_0}\to\reals_+^{\Cols_0}$,
$\Mapb_W:\reals_+^{\Cols_0}\to\reals_+^{\Rows_0}$. 
For $\phi=(\phi_a)_{a\in\Rows_0}\in\reals_+^{\Rows_0}$,
$\psi=(\psi_i)_{i \in\Cols_0}\in\reals_+^{\Cols_0}$, and for all
$i\in\Cols_0$, $a\in\Rows_0$, let:
\begin{eqnarray}
\Mapa_{W}(\phi)_{i} &= &\mmse\Big(\sum_{b\in\Rows_0}W_{b-i}\phi_b^{-1}\Big)\, ,\\
\Mapb_{W}(\psi)_{a} &=& \sigma^2+\frac{1}{\delta}\sum_{i\in\integers}
W_{a-i}\, \psi_i\, .
\end{eqnarray}
where, in the last equation we set  by convention,
$\psi_i(t)=\mmse(L_0/(2\sigma^2))$ for 
$i\le -1$, and $\psi_i=\infty$ for $i\ge L$, and recall the shorthand
$W_{a-i}\equiv \rho\, \Shape\big(\rho\,(a-i)\big)$ 
introduced in Section \ref{sec:Choices}. We also let $\Map_W =
\Mapa_W\circ\Mapb_W$. 
\begin{definition}\label{def:Modified_SE}
The \emph{modified state evolution sequence} is the sequence
$\{\phi(t),\psi(t)\}_{t\ge 0}$ with $\phi(t) = \Mapb_W(\psi(t))$ and   
$\psi(t+1)=\Mapa_W(\phi(t))$ for all $t\ge 0$, and $\psi_i(0) =\infty$
for all $i\in\Cols_0$. We also adopt the convention that, for $i \ge L$, $\psi_i(t) = + \infty$ and for $i \le -1$, $\psi_i(t) = \mmse(\L0/ (2 \sigma^2))$, for all $t$. 
\end{definition}
Lemma \ref{lemma:UB_SE} then implies the following.
\begin{lemma}\label{lem:modified_dominance}
Let $\{\phi(t),\psi(t)\}_{t\ge 0}$ denote the state evolution sequence as
per Definition \ref{def:SESequence}, and  $\{\phi^{\rm
  mod}(t),\psi^{\rm mod}(t)\}_{t\ge 0}$  denote the modified state
evolution sequence as per Definition \ref{def:Modified_SE}. Then,
there exists $t_0$ (depending only on $p_X$), such that, for all $t\ge t_0$,
$\phi(t)\preceq \phi^{\rm mod}(t-t_0)$ and $\psi(t)\preceq \psi^{\rm mod}(t-t_0)$.  
\end{lemma}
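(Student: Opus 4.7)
\textbf{Proof plan for Lemma \ref{lem:modified_dominance}.} The strategy is a direct induction in $t$ using the monotonicity of the state evolution maps (Proposition \ref{propo:monotonicity_TW}), together with the uniform upper bound on the boundary coordinates $\psi_i(t)$ for $i\in\{-2\rho^{-1},\dots,-1\}$ supplied by Lemma \ref{lemma:UB_SE}. The latter is precisely what is encoded into the modified recursion as the fixed boundary value $\mmse(L_0/(2\sigma^2))$, so after enough iterations the original sequence already satisfies the ``initial data'' that define the modified one.

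The plan is as follows. First, choose $t_0$ as in Lemma \ref{lemma:UB_SE}, so that for every $t\ge t_0$ and every $i\in\{-2\rho^{-1},\dots,-1\}$ one has $\psi_i(t)\le \mmse(L_0/(2\sigma^2))$. This gives the base case $s=0$: for $i\in\Cols_0$ we have $\psi^{\rm mod}_i(0)=\infty\ge \psi_i(t_0)$, while for $i\le -1$ we have $\psi^{\rm mod}_i=\mmse(L_0/(2\sigma^2))\ge \psi_i(t_0)$ by Lemma \ref{lemma:UB_SE}. Thus $\psi(t_0)\preceq\psi^{\rm mod}(0)$ coordinatewise on $\Cols$ (under the boundary conventions of Definition \ref{def:Modified_SE}).

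Next I induct on $s$. The key structural observation is that for $i\in\Cols_0$ and $b\in\Rows\setminus\Rows_0$, $W_{b,i}=0$: by construction the rows in $\Rows_j$ for $j\in\{-2\rho^{-1},\dots,-1\}$ are supported only on column $j\notin\Cols_0$. Therefore, for $i\in\Cols_0$ the sum in the state evolution map collapses to
\begin{equation*}
\psi_i(t+1)=\mmse\Big(\sum_{b\in\Rows_0}W_{b,i}\,\phi_b(t)^{-1}\Big),
\end{equation*}
and for $a\in\Rows_0$ the identity $W_{a,i}=\rho\,\Shape(\rho(a-i))=W_{a-i}$ holds for the range of $i$ with a nontrivial contribution. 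Assuming $\psi(t_0+s)\preceq\psi^{\rm mod}(s)$ (in the sense above), I bound, for any $a\in\Rows_0$,
\begin{equation*}
\phi_a(t_0+s)=\sigma^2+\frac{1}{\delta}\sum_{i\in\Cols}W_{a,i}\,\psi_i(t_0+s)\le \sigma^2+\frac{1}{\delta}\sum_{i\in\integers}W_{a-i}\,\psi^{\rm mod}_i(s)=\phi^{\rm mod}_a(s),
\end{equation*}
using the induction hypothesis on $i\in\Cols_0$ and Lemma \ref{lemma:UB_SE} for the boundary indices $i\le -1$ (the $i\ge L$ terms on the right being $+\infty$ only strengthen the inequality). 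Then, using the monotone decrease of $\mmse$, for each $i\in\Cols_0$,
\begin{equation*}
\psi_i(t_0+s+1)=\mmse\Big(\sum_{b\in\Rows_0}W_{b-i}\,\phi_b(t_0+s)^{-1}\Big)\le \mmse\Big(\sum_{b\in\Rows_0}W_{b-i}\,\phi^{\rm mod}_b(s)^{-1}\Big)=\psi^{\rm mod}_i(s+1),
\end{equation*}
while the boundary values $\psi_i(t_0+s+1)$ for $i\le -1$ remain bounded by $\mmse(L_0/(2\sigma^2))=\psi^{\rm mod}_i$ thanks to Lemma \ref{lemma:UB_SE}. This closes the induction.

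The only delicate point, and hence the only place where care is required, is aligning the two different index sets $\Rows,\Cols$ and $\Rows_0,\Cols_0$; once one notices that the rows $\Rows\setminus\Rows_0$ are genuinely decoupled from columns in $\Cols_0$, there is no real obstacle, and the inductive argument is purely mechanical monotonicity bookkeeping.
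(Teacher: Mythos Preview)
Your proof is correct and follows essentially the same inductive strategy as the paper: choose $t_0$ from Lemma \ref{lemma:UB_SE} to establish the base case, then propagate the domination through the monotone maps. The paper phrases the induction step abstractly as $\Mapb_W(\psi^{\rm mod})\succeq\SEmapb_W(\psi^{\rm mod})\succeq\SEmapb_W(\psi)$ and $\Mapa_W(\phi^{\rm mod})=\SEmapa_W(\phi^{\rm mod})$ on $\Cols_0$, whereas you unpack these relations explicitly by noting that $W_{b,i}=0$ for $b\in\Rows\setminus\Rows_0$ and $i\in\Cols_0$; this is exactly the content of the paper's identity $\Mapa_W=\SEmapa_W$ on $\Cols_0$, so the two arguments are the same modulo presentation.
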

\begin{proof}
Choose $t_0 = t(\L0,\delta)$ as given by Lemma~\ref{lemma:UB_SE}.  We prove the claims by induction on $t$. For the induction basis ($t = t_0$), we have from Lemma~\ref{lemma:UB_SE}, $\psi_i(t_0) \le  \mmse(\L0 /(2\sigma^2)) = \psi_i^{\rm mod}(0)$, for $i \le -1$. Also, we have $\psi^{\rm mod}_i(0) = \infty \ge \psi_i(t_0)$, for $i \ge 0$. Further, 
\begin{eqnarray}
\phi^{\rm mod}_a(0) = \Mapb_W(\psi^{\rm mod}(0))_a \ge \SEmapb_W(\psi^{\rm mod}(0))_a \ge  \SEmapb_W(\psi(t_0))_a = \phi_a(t_0),
\end{eqnarray}
for $a \in \Rows_0$. Here, the last inequality follows from monotonicity of $\SEmapb_W$ (Proposition~\ref{propo:monotonicity_TW}). Now, assume that the claim holds for $t$; we prove it for $t+1$. For $i \in \Cols_0$, we have
\begin{eqnarray}
\begin{split} \label{eqn:psi_induction}
\psi_i^{\rm mod}(t+1-t_0) &= \Mapa_W(\phi^{\rm mod}(t - t_0))_i = \SEmapa_W(\phi^{\rm mod}(t-t_0))_i\\
&\ge \SEmapa_W(\phi(t))_i  = \psi_i(t+1),
\end{split}
\end{eqnarray}
where the inequality follows from monotonicity of $\SEmapa_W$ (Proposition~\ref{propo:monotonicity_TW}) and the induction hypothesis. In addition, for $a \in \Rows_0$,
\begin{eqnarray}
\begin{split}
\phi_a^{\rm mod}(t+1-t_0) &= \Mapb_W(\psi^{\rm mod}(t+1-t_0))_a \ge \SEmapb_W(\psi^{\rm mod}(t+1-t_0))_a\\
& \ge \SEmapb_W(\psi(t+1))_a = \phi_a(t+1).
\end{split}
\end{eqnarray}
Here, the last inequality follows from monotonicity of $\SEmapb_W$ and Eq.~\eqref{eqn:psi_induction}.
\end{proof}
By Lemma~\ref{lem:modified_dominance}, we can now focus on the modified state evolution sequence in order to prove Lemma~\ref{lem:phi_convergence}. Notice that the mapping $\Map_W$ has a
particularly simple description in terms of a shift-invariant state
evolution mapping. Explicitly, define
$\SEmapa_{W,\infty}:\reals^{\integers}\to\reals^{\integers}$,
$\SEmapb_{W,\infty}:\reals^{\integers}\to\reals^{\integers}$, by
letting, for $\phi,\psi\in\reals^{\integers}$ and all $i,a\in \integers$:
\begin{eqnarray}
\SEmapa_{W,\infty}(\phi)_{i} &= &\mmse\Big(\sum_{b\in\integers}W_{b-i}\phi_b^{-1}\Big)\, ,\\
\SEmapb_{W,\infty}(\psi)_{a} &=& \sigma^2+\frac{1}{\delta}\sum_{i\in\integers}
W_{a-i}\, \psi_i\, .
\end{eqnarray}
Further, define the embedding
$\Embed:\reals^{\Cols_0}\to\reals^{\integers}$ by letting
\begin{eqnarray}
(\Embed\psi)_i = 
\begin{cases}
\mmse(L_0/(2\sigma^2)) & \mbox{if $i<0$,}\\
\psi_i &\mbox{if $0\le i\le L-1$,}\\
+\infty &\mbox{if $i\ge L$,}
\end{cases}
\end{eqnarray}
And the restriction mapping
$\Restr_{a,b}:\reals^{\integers}\to\reals^{b-a+1}$ by
$\Restr_{a,b}\psi = (\psi_a,\dots,\psi_{b})$. 
\begin{lemma}\label{lemma:Shift}
With the above definitions, $\Map_W =
\Restr_{0,L-1}\circ\SEmap_{W,\infty}\circ\Embed$.
\end{lemma}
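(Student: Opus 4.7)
The claim is a direct bookkeeping identity: I plan to unfold both sides componentwise on $\Cols_0 = \{0,\dots,L-1\}$ and exhibit a term-by-term match, using only the compact support of $\Shape$ (which makes $W_{a-i}$ vanish for $|a-i|>\rho^{-1}$) and the fact that the boundary conventions on $\psi_i$ built into the definition of $\Mapb_W$ are exactly the values that $\Embed$ prescribes. So there is no deep content; the task is to verify that the ``finite'' operator $\Map_W$ is obtained from the translation-invariant operator $\SEmap_{W,\infty}$ by encoding the boundary conditions through $\Embed$ and discarding the irrelevant tails via $\Restr_{0,L-1}$.

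First I would show the inner step $\Mapb_W(\psi) = \Restr_{-\rho^{-1},L-1+\rho^{-1}}\bigl(\SEmapb_{W,\infty}(\Embed\,\psi)\bigr)$ for $\psi\in\reals_+^{\Cols_0}$. This is immediate from the definitions: both equal $\sigma^2 + \delta^{-1}\sum_{i\in\integers} W_{a-i}\,\psi_i$ for $a\in\Rows_0$, once the convention $\psi_i = \mmse(L_0/(2\sigma^2))$ for $i\le -1$ and $\psi_i=+\infty$ for $i\ge L$ in $\Mapb_W$ is identified with the definition of $\Embed$. Call $\phi := \Mapb_W(\psi)\in\reals_+^{\Rows_0}$ and $\hat\phi := \SEmapb_{W,\infty}(\Embed\,\psi)\in\reals^{\integers}$, so that $\hat\phi_b = \phi_b$ for every $b\in\Rows_0$.

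Next I would compute the outer step for $i\in\Cols_0$. Because $\Shape$ is supported on $[-1,1]$, the weight $W_{b-i} = \rho\,\Shape(\rho(b-i))$ vanishes whenever $|b-i|>\rho^{-1}$; hence for $i\in\{0,\dots,L-1\}$ the interval $\{b\in\integers : W_{b-i}\neq 0\}\subseteq [i-\rho^{-1},i+\rho^{-1}]$ is contained in $[-\rho^{-1},L-1+\rho^{-1}] = \Rows_0$. Therefore
\begin{equation*}
\SEmapa_{W,\infty}(\hat\phi)_i \;=\; \mmse\!\Bigl(\sum_{b\in\integers} W_{b-i}\,\hat\phi_b^{-1}\Bigr) \;=\; \mmse\!\Bigl(\sum_{b\in\Rows_0} W_{b-i}\,\phi_b^{-1}\Bigr) \;=\; \Mapa_W(\phi)_i,
\end{equation*}
where the middle equality uses $\hat\phi_b=\phi_b$ on $\Rows_0$ and the fact that the omitted terms have zero weight (any infinite value of $\hat\phi_b$ outside $\Rows_0$ contributes $\hat\phi_b^{-1}=0$, so there is no issue). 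Taking $\Restr_{0,L-1}$ of $\SEmapa_{W,\infty}(\hat\phi)$ then yields exactly $\Mapa_W(\Mapb_W(\psi)) = \Map_W(\psi)$, which is the desired identity.

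The only ``obstacle'' is checking that the possible infinite entries of $\Embed\,\psi$ (at indices $i\ge L$) and of $\hat\phi$ (at the induced indices $b$ near or beyond $L-1+\rho^{-1}$) are handled consistently. For the first composition, any $i\ge L$ for which $W_{a-i}\ne 0$ forces $\hat\phi_a = +\infty$, matching the corresponding $\Mapb_W(\psi)_a = +\infty$ value dictated by the same convention. For the second composition, such infinite $\hat\phi_b$ enter only through $\hat\phi_b^{-1}=0$ and thus contribute nothing, again matching $\Mapa_W$. With that verified, the identity $\Map_W = \Restr_{0,L-1}\circ\SEmap_{W,\infty}\circ\Embed$ is established.
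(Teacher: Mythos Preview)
Your proposal is correct and follows essentially the same approach as the paper: both arguments match the inner step by identifying the boundary conventions in $\Mapb_W$ with the values prescribed by $\Embed$, and then use the compact support of $\Shape$ to reduce the infinite sum in $\SEmapa_{W,\infty}$ to a sum over $\Rows_0$ for indices $i\in\Cols_0$. Your treatment of the infinite entries is slightly more explicit than the paper's, but the proof structure is identical.
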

\begin{proof}
Clearly, for any $\psi = (\psi_i)_{i \in \Cols_0}$, we have $\SEmapb_{W} \circ\Embed (\psi)_a = \Mapb_W \circ\Embed (\psi)_a$ for $ a \in \Rows_0$, since the definition of the embedding $\Embed$ is consistent with the convention adopted in defining the modified state evolution. Moreover, for $i \in \Cols_0 \cong \{0,\dots,L-1\}$, we have
\begin{eqnarray}
\begin{split}
\SEmapa_{W,\infty}(\phi)_i &= \mmse\Big(\sum_{b \in \integers} W_{b - i} \phi_b^{-1} \Big) = \mmse \Big( \sum_{-\rho^{-1} \le b \le L -1 + \rho^{-1} } W_{b - i} \phi_b^{-1}\Big)\\
&= \mmse\Big( \sum_{b \in \Rows_0 } W_{b - i} \phi_b^{-1}\Big) = \Mapa_W(\phi)_i.
\end{split}
\end{eqnarray}
Hence, $\SEmapa_{W,\infty} \circ \SEmapb_{W,\infty} \circ \Embed(\psi)_i =  \Mapa_W \circ \Mapb _W \circ \Embed (\psi)_i$, for $i \in \Cols_0$. Therefore, $\Restr_{0,L-1} \circ \SEmap_{W,\infty} \circ\Embed (\psi) =  \Map_W \circ\Embed (\psi)$, for any $\psi \in \reals_+^{\Cols_0}$, which completes the proof.
\end{proof}

We will say that a vector $\psi\in\reals^K$ is \emph{nondecreasing} if, for
every $1\le i<j\le K$, $\psi_i\le \psi_j$.
\begin{lemma}\label{lemma:NonDecreasing}
If $\psi\in\reals^{\Cols_0}$ is nondecreasing, with $\psi_i\ge
\mmse(L_0/(2\sigma^2))$ for all $i$,
then $\Map_W(\psi)$ is 
nondecreasing as well. In particular, if $\{\phi(t),\psi(t)\}_{t\ge 0}$ is
the modified
state evolution sequence, then $\phi(t)$ and $\psi(t)$ are nondecreasing for all
$t$.
\end{lemma}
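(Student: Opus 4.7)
The plan is to reduce the lemma to a shift-invariant monotonicity statement via Lemma~\ref{lemma:Shift}, and then verify monotonicity of the shift-invariant maps $\SEmapa_{W,\infty}$ and $\SEmapb_{W,\infty}$ by direct differencing. First I would observe that since $\Map_W = \Restr_{0,L-1}\circ\SEmap_{W,\infty}\circ\Embed$ and restriction of a nondecreasing sequence to any window is nondecreasing, it suffices to show $(a)$ that $\Embed(\psi)$ is nondecreasing on $\integers$ whenever $\psi\in\reals^{\Cols_0}$ is nondecreasing with $\psi_i\ge\mmse(L_0/(2\sigma^2))$, and $(b)$ that $\SEmapa_{W,\infty}$ and $\SEmapb_{W,\infty}$ each send nondecreasing sequences in $\reals^\integers$ to nondecreasing sequences in $\reals^\integers$. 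Claim $(a)$ is immediate from the definition: the embedded sequence equals the constant $\mmse(L_0/(2\sigma^2))$ on $\{i<0\}$, equals the nondecreasing $\psi$ on $\Cols_0$, and is $+\infty$ on $\{i\ge L\}$, with the lower-bound hypothesis ensuring the transition at $i=0$ is nondecreasing.

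For claim $(b)$, after the change of variables $k=b-i$ one has $\SEmapb_{W,\infty}(\tilde\psi)_a = \sigma^2 + \delta^{-1}\sum_k W_k\tilde\psi_{a-k}$ and $\SEmapa_{W,\infty}(\tilde\phi)_i = \mmse\bigl(\sum_k W_k\tilde\phi_{i+k}^{-1}\bigr)$. Hence $\SEmapb_{W,\infty}(\tilde\psi)_{a+1}-\SEmapb_{W,\infty}(\tilde\psi)_a = \delta^{-1}\sum_k W_k(\tilde\psi_{a+1-k}-\tilde\psi_{a-k})\ge 0$, since $W_k\ge 0$ and $\tilde\psi$ is nondecreasing (no symmetry of $\Shape$ is needed). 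For $\SEmapa_{W,\infty}$, since $s\mapsto\mmse(s)$ is nonincreasing, the claim reduces to showing $i\mapsto\sum_k W_k\tilde\phi_{i+k}^{-1}$ is nonincreasing; and $\tilde\phi_{i+1+k}\ge\tilde\phi_{i+k}$ gives $\tilde\phi_{i+1+k}^{-1}\le\tilde\phi_{i+k}^{-1}$ (with the convention $1/\infty=0$), so each term in the increment is $\le 0$. Composing, $\Map_W$ preserves the nondecreasing property.

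For the second assertion, I would induct on $t$. The base case $\psi(0)_i=\infty$ is constant (trivially nondecreasing) and satisfies $\psi(0)_i\ge\mmse(L_0/(2\sigma^2))$. For the inductive step, once $\psi(t)$ is nondecreasing with $\psi(t)_i\ge\mmse(L_0/(2\sigma^2))$, the first part of the lemma gives that $\phi(t)=\Mapb_W(\psi(t))$, which equals the restriction of $\SEmapb_{W,\infty}(\Embed(\psi(t)))$ to $\Rows_0$, is nondecreasing, and likewise $\psi(t+1)=\Map_W(\psi(t))$ is nondecreasing. To close the induction one verifies the lower bound $\psi(t+1)_i\ge\mmse(L_0/(2\sigma^2))$: from $\phi_b(t)\ge\sigma^2$ (immediate from $\Mapb_W$) and the roughly column-stochastic bound $\sum_b W_{b-i}\le 2$, one gets $\psi(t+1)_i=\mmse\bigl(\sum_b W_{b-i}\phi_b(t)^{-1}\bigr)\ge\mmse(2\sigma^{-2})\ge\mmse(L_0/(2\sigma^2))$, the last step being exactly Eq.~\eqref{eq:ConditionL0}.

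I do not anticipate a substantive obstacle; the argument is essentially that convolution against a non-negative kernel preserves monotonicity and that post-composition with the nonincreasing function $\mmse$ turns nonincreasing arguments into nondecreasing outputs. The only delicate point is verifying that the pointwise lower bound $\psi(t)_i\ge\mmse(L_0/(2\sigma^2))$ persists through the induction, which is precisely where hypothesis~\eqref{eq:ConditionL0} (equivalently $L_0\ge 4$) is used.
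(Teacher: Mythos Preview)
Your proposal is correct and follows essentially the same approach as the paper: factor $\Map_W$ through the shift-invariant map via Lemma~\ref{lemma:Shift}, observe that the embedding is nondecreasing thanks to the lower-bound hypothesis, use shift-invariance together with monotonicity of $\mmse$ to conclude that $\SEmapa_{W,\infty}$ and $\SEmapb_{W,\infty}$ preserve the nondecreasing property, and close the induction via the lower bound $\psi_i(t)\ge\mmse(2\sigma^{-2})\ge\mmse(L_0/(2\sigma^2))$ (the paper packages this last step as Lemma~\ref{lemma:LB_SE} plus Eq.~\eqref{eq:ConditionL0}). Your explicit differencing argument for the shift-invariant maps is a bit more detailed than the paper's one-line appeal to shift invariance, but the content is identical.
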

\begin{proof}
By Lemma \ref{lemma:Shift}, we know that $\Map_W=
\Restr_{0,L-1}\circ\SEmap_{W,\infty}\circ\Embed$. We first notice that, by the
assumption $\psi_i\ge \mmse(L_0/(2\sigma^2))$, we have that 
$\Embed(\psi)$ is nondecreasing. 

Next, if $\psi\in\reals^{\integers}$ is nondecreasing,
$\SEmap_{W,\infty}(\psi)$ is nondecreasing as well. In fact, the mappings
$\SEmapa_{W,\infty}$ and $\SEmapb_{W,\infty}$ both preserve the
nondecreasing property, since both are shift invariant, and $\mmse(\,\cdot \,)$ is a decreasing function.
Finally, the restriction of a nondecreasing vector is obviously
nondecreasing. 

This proves that $\Map_W$ preserves the nondecreasing
property. To conclude that $\psi(t)$ is nondecreasing for all $t$,
notice that the condition  $\psi_i(t)\ge
\mmse(L_0/(2\sigma^2))$ is satisfied at all $t$ by Lemma
\ref{lemma:LB_SE} and condition (\ref{eq:ConditionL0}). The claim for $\psi(t)$ follows by induction.

Now, since $\Mapb_{W}$ preserves the nondecreasing property, we have $\phi(t) = \Mapb_{W}(\psi(t))$ is nondecreasing for all $t$, as well.
\end{proof}

\subsection{Continuum state evolution}

We start by defining the continuum state evolution mappings.
For  $\Omega \subseteq \reals$, let $\mathscr{M}(\Omega)$ be the space of non-negative measurable functions on $\Omega$ (up to
measure-zero redefinitions). Define $\CMapa_{\Shape}: \mathscr{M}([-1,\ell+1]) \to \mathscr{M}([0,\ell])$ and $\CMapb_{\Shape}: \mathscr{\mathscr{M}}([0,\ell]) \to \mathscr{M}([-1,\ell+1])$ as follows. For $\phi \in \mathscr{M}([-1,\ell+1]), \psi \in \mathscr{M}([0,\ell])$, and for all $x \in [0,\ell], y \in [-1,\ell+1]$, we let
\begin{align}
\CMapa_{\Shape}(\phi)(x) &= \mmse\Big(\int_{-1}^{\ell+1} \Shape(x-z) \phi(z)^{-1} \de z \Big),\label{eq:CMAP1}\\
\CMapb_{\Shape}(\psi)(y) &= \sigma^2 + \frac{1}{\delta} \int_{\reals} \Shape(y-x) \psi(x) \de x\,,\label{eq:CMAP2}
\end{align}
where we adopt the convention that $\psi(x) = \mmse(\L0/(2\sigma^2))$ for $x < 0$, and $\psi(x) = \infty$ for $x > \ell$. 
\begin{definition}\label{def:ContinuumSE}
The continuum state evolution sequence  is the sequence
$\{\phi(\,\cdot\, ;t),\psi(\,\cdot\,;t)\}_{t\ge 0}$, with $\phi(t) = \CMapb_{\Shape}(\psi(t))$ and   
$\psi(t+1)=\CMapa_{\Shape}(\phi(t))$ for all $t\ge 0$, and $\psi(x;0) =\infty$
for all $x\in[0,\ell]$.
\end{definition}
Recalling Eq.~\eqref{eq:mmse_ineq}, we have $\psi(x;t) = \CMapa_{\Shape}(\phi(t-1))(x) \le {\rm Var}(X)$, for $t\ge 1$. Also, $\phi(x;t) = \CMapb_{\Shape}(\psi(t))(x) \le \sigma^2 + (1/\delta) {\rm Var}(X)$, for $t \ge 1$. Define,
\begin{eqnarray}
\Phi_M = 1 + \frac{1}{\delta}\, {\rm Var}(X).
\end{eqnarray}
Assuming $\sigma < 1$, we have $\phi(x;t) < \Phi_M$, for all $t\ge 1$.

The point of introducing continuum state evolution is that by construction of the matrix $W$ and the continuity of $\Shape$, when $\rho$ is small, one can approximate summation by integration and study the evolution of the continuum states which are represented by functions rather than vectors. This observation is formally stated in lemma below.
\begin{lemma}\label{lemma:ContinuumLimit}
Let $\{\phi(\,\cdot\, ;t),\psi(\,\cdot\,;t)\}_{t\ge 0}$ be the
continuum state evolution sequence and $\{\phi(t),\psi(t)\}_{t\ge 0}$
be the modified discrete state evolution sequence, with parameters
$\rho$ and $L=\ell/\rho$. Then for any $t\ge 0$
\begin{align}
&\lim_{\rho\to 0 }\frac{1}{L}\sum_{i=0}^{L-1}
\big|\psi_i(t)-\psi(\rho i;t)\big| = 0\, ,\label{eq:FirstSEApprox}\\
&\lim_{\rho\to 0 }\frac{1}{L}\sum_{a=-\rho^{-1}}^{L-\rho^{-1} - 1}
\big|\phi_a(t)-\phi(\rho a;t)\big| = 0\,\label{eq:SecondSEApprox} .
\end{align}
\end{lemma}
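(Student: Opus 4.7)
The plan is to proceed by induction on $t$, exploiting that the continuum mappings $\CMapa_{\Shape}, \CMapb_{\Shape}$ are precisely the Riemann limits of the modified discrete maps $\Mapa_W, \Mapb_W$ as $\rho\to 0$: since $W_{a-i}=\rho\,\Shape(\rho(a-i))$, the discrete sum $\sum_{i} W_{a-i}\,\psi_i$ is a step-$\rho$ Riemann sum for $\int\Shape(y-x)\psi(x)\,\de x$ under the substitution $y=\rho a$, $x=\rho i$. Moreover, the boundary conventions (values $\mmse(L_0/(2\sigma^2))$ for indices $\le -1$ and $+\infty$ for indices $\ge L$, with the analogous extensions for $x<0$ and $x>\ell$ in the continuum) agree in both formulations, so the effective integration domains correspond exactly.

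At $t=0$ both $\psi_i(0)$ and $\psi(\rho i;0)$ equal $+\infty$ on $\Cols_0$ and $[0,\ell]$ respectively, and after one step both reduce to $\mmse(0)=\mathrm{Var}(X)$ on those sets, so (\ref{eq:FirstSEApprox})--(\ref{eq:SecondSEApprox}) hold at $t=1$ with zero error. For the inductive step, assume (\ref{eq:FirstSEApprox}) at time $t\ge 1$. Setting $y=\rho a$, $x=\rho i$, the triangle inequality gives
\begin{align*}
|\phi_a(t)-\phi(y;t)|
\le \tfrac{1}{\delta}\sum_{i}\rho\,\Shape(\rho(a-i))\,\bigl|\psi_i(t)-\psi(\rho i;t)\bigr|
+\tfrac{1}{\delta}\,R_\rho(y),
\end{align*}
where $R_\rho(y)$ denotes the error between $\sum_i \rho\,\Shape(y-\rho i)\,\psi(\rho i;t)$ and $\int\Shape(y-x)\,\psi(x;t)\,\de x$. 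Because $\Shape\in C^1$ with compact support, and $\psi(\,\cdot\,;t)$ is uniformly bounded (by $\max(\mmse(L_0/(2\sigma^2)),\mathrm{Var}(X))$ for $t\ge 1$) and piecewise smooth with only bounded jumps at the domain boundary, a standard Riemann-sum estimate yields $\sup_y R_\rho(y)=O(\rho)$. Averaging the first term over $a\in\{-\rho^{-1},\dots,L-\rho^{-1}-1\}$ and exchanging sums,
\[
\tfrac{1}{L}\sum_{a}\sum_{i} W_{a-i}\bigl|\psi_i(t)-\psi(\rho i;t)\bigr|
\le \Bigl(\max_i\sum_a W_{a-i}\Bigr)\cdot\tfrac{1}{L}\sum_i\bigl|\psi_i(t)-\psi(\rho i;t)\bigr|,
\]
and since $\sum_a W_{a-i}$ is uniformly bounded (as $\int\Shape=1$ and $W$ restricted to $\Rows_0$ is roughly column-stochastic), the inductive hypothesis forces this average to $0$, giving (\ref{eq:SecondSEApprox}) at time $t$.

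To close the induction by propagating to $\psi(t+1)$, I use two Lipschitz facts: the inversion $u\mapsto 1/u$ is Lipschitz on $[\sigma^2,\Phi_M]$ (with constant $\sigma^{-4}$), since $\phi_b(t),\phi(\rho b;t)\in[\sigma^2,\Phi_M]$ by Lemma~\ref{lemma:LB_SE} together with the upper bound $\mmse\le\mathrm{Var}(X)$; and $\mmse$ is continuously differentiable, hence Lipschitz on the bounded range of its argument. Repeating the Riemann-sum plus inductive-hypothesis decomposition with these Lipschitz constants inserted yields $(1/L)\sum_i|\psi_i(t+1)-\psi(\rho i;t+1)|\to 0$. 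The main technical subtlety is controlling the Riemann-sum error in the first iterations where $\psi(\,\cdot\,;t)$ carries jump discontinuities at the domain boundaries $x=0$ and $x=\ell$; however, the matching boundary conventions on both sides ensure that the jumps are of identical size in the continuum and discrete formulations, so their contributions cancel and the remaining error is $O(\rho)$. The restriction of the index range to $a\in\{-\rho^{-1},\dots,L-\rho^{-1}-1\}$ in the lemma statement keeps $y=\rho a$ away from the right boundary $y=\ell$ where $\psi$ becomes infinite, guaranteeing that all quantities under consideration remain finite throughout the argument.
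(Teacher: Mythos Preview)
Your proposal is correct and follows essentially the same approach as the paper's proof: induction on $t$, splitting the error at each step via the triangle inequality into (i) a term controlled by the inductive hypothesis through the Lipschitz continuity of $\mmse$ and of $u\mapsto 1/u$ on $[\sigma^2,\Phi_M]$, and (ii) a Riemann-sum approximation error of order $O(\rho)$ coming from $W_{a-i}=\rho\,\Shape(\rho(a-i))$ with $\Shape\in C^1$. The paper collapses both maps $\Mapb_W$ and $\Mapa_W$ into a single step from $\psi(t)$ to $\psi(t+1)$, whereas you treat the $\phi$ and $\psi$ stages separately, but this is only an organizational difference.
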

Lemma ~\ref{lemma:ContinuumLimit} is proved in Appendix~\ref{app:ContinuumLimit}.
\begin{coro}\label{cor:monotone_map_continuous}
The continuum state evolution sequence $\{\phi(\,\cdot\, ;t),\psi(\,\cdot\,;t)\}_{t\ge 0}$, with initial condition $\psi(x) = \mmse(\L0/(2\sigma^2))$ for $x < 0$, and $\psi(x) = \infty$ for $x > \ell$, is monotone decreasing, in the sense that $\phi(x;0) \ge \phi(x;1) \ge \phi(x;2) \ge \cdots$ and $\psi(x;0) \ge \psi(x;1) \ge \psi(x;2) \ge \cdots$, for all $x \in [0,\ell]$.
\end{coro}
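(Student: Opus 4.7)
The plan is to mimic the discrete-case argument used for Proposition~\ref{propo:monotone_map_discrete}, replacing finite sums by integrals. The essential ingredients are monotonicity of the two continuum maps together with the fact that the initial condition $\psi(\,\cdot\,;0)\equiv\infty$ dominates any later iterate.

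First I would establish pointwise monotonicity of $\CMapa_{\Shape}$ and $\CMapb_{\Shape}$. For $\CMapb_{\Shape}$ this is immediate: $\Shape\ge 0$, so if $\psi\ge\tilde\psi$ pointwise on $[0,\ell]$ (and hence, after applying the boundary conventions, pointwise on $\reals$), then termwise in the integrand \eqref{eq:CMAP2} we have $\Shape(y-x)\psi(x)\ge \Shape(y-x)\tilde\psi(x)$, so $\CMapb_{\Shape}(\psi)(y)\ge \CMapb_{\Shape}(\tilde\psi)(y)$. For $\CMapa_{\Shape}$, I use that $\phi\ge\tilde\phi>0$ implies $\phi^{-1}\le\tilde\phi^{-1}$, so the argument of $\mmse$ in \eqref{eq:CMAP1} is no larger for $\phi$ than for $\tilde\phi$; combining this with the fact that $s\mapsto\mmse(s)$ is non-increasing yields $\CMapa_{\Shape}(\phi)(x)\ge \CMapa_{\Shape}(\tilde\phi)(x)$.

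Next I would run the induction. The base case is trivial: $\psi(\,\cdot\,;0)\equiv+\infty$ on $[0,\ell]$, while by the bound recalled after Definition~\ref{def:ContinuumSE} we have $\psi(\,\cdot\,;1)\le {\rm Var}(X)<\infty$ on $[0,\ell]$, so $\psi(\,\cdot\,;0)\ge\psi(\,\cdot\,;1)$ pointwise. (Under the boundary convention both iterates coincide for $x<0$ and for $x>\ell$, so the inequality holds on all of $\reals$.) Assuming $\psi(\,\cdot\,;t-1)\ge\psi(\,\cdot\,;t)$ pointwise, applying $\CMapb_{\Shape}$ gives $\phi(\,\cdot\,;t-1)\ge\phi(\,\cdot\,;t)$, and then applying $\CMapa_{\Shape}$ gives $\psi(\,\cdot\,;t)\ge\psi(\,\cdot\,;t+1)$, closing the induction.

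I do not expect serious obstacles: the argument is purely structural (monotonicity plus a dominating initial condition), exactly as in the discrete case. The only points that require a small amount of care are checking that the boundary conventions imposed on $\psi(x;t)$ for $x\notin[0,\ell]$ are consistent across iterations (they are constants independent of $t$, so they do not interfere with the pointwise inequalities), and verifying that the $\CMapa_{\Shape}$ step is well-defined when some values of $\phi$ are taken from the dominating iterate—this is handled by the uniform upper bound $\phi(\,\cdot\,;t)\le\Phi_M$ stated just before the corollary, which keeps $\phi^{-1}$ bounded below away from zero and makes the integral in \eqref{eq:CMAP1} finite at every iteration $t\ge 1$.
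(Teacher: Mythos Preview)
Your direct argument is correct and self-contained, but it differs from the paper's route. The paper's proof of this corollary is one line: it invokes the monotonicity of the \emph{modified discrete} state evolution (Proposition~\ref{propo:monotone_map_discrete}) together with the discrete-to-continuum approximation Lemma~\ref{lemma:ContinuumLimit}, and passes the pointwise inequalities $\psi_i(t)\ge\psi_i(t+1)$ through the $\rho\to 0$ limit to obtain $\psi(x;t)\ge\psi(x;t+1)$. In contrast, you work entirely at the continuum level, proving monotonicity of $\CMapa_{\Shape}$ and $\CMapb_{\Shape}$ from scratch and then running the same induction as in Proposition~\ref{propo:monotone_map_discrete}. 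Your approach is more elementary and does not depend on Lemma~\ref{lemma:ContinuumLimit} (which itself requires some care about Riemann-sum convergence and Lipschitz bounds); the paper's approach has the virtue of reusing already-established discrete results without re-deriving monotonicity of the integral maps. Either is fine here, and your handling of the boundary conventions and the well-definedness of $\CMapa_{\Shape}$ via the bound $\phi(\cdot;t)\le\Phi_M$ is the right level of care.
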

\begin{proof}
Follows immediately from Lemmas \ref{propo:monotone_map_discrete} and
\ref{lemma:ContinuumLimit}.
\end{proof}
\begin{coro}\label{cor:nondecreasing_continuous}
Let $\{\phi(\,\cdot\, ;t),\psi(\,\cdot\,;t)\}_{t\ge 2}$ be the
continuum state evolution sequence.
Then for any $t$, $x\mapsto \psi(x;t)$ and $x \mapsto \phi(x;t)$ are nondecreasing Lipschitz continuous functions.
\end{coro}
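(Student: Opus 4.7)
The plan is to prove both properties by induction on $t$, using the structure of the continuum state evolution maps: the shape function $\Shape$ is even, $C^1$, compactly supported and has bounded derivative, while $\mmse(\,\cdot\,)$ is monotone decreasing and Lipschitz on bounded subintervals of $[0,\infty)$.

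I would first settle the \emph{nondecreasing} property. With the convention ($\psi(x;0)=\mmse(\L0/(2\sigma^2))$ for $x<0$ and $+\infty$ for $x\ge 0$) the initial datum is nondecreasing on $\reals$. The change of variable $u=y-x$ recasts $\CMapb_\Shape$ as
\begin{equation*}
\phi(y;t)=\sigma^2+\frac{1}{\delta}\int \Shape(u)\,\psi(y-u;t)\,\de u,
\end{equation*}
so $\psi$ nondecreasing immediately forces $\phi$ nondecreasing. Conversely, if $\phi$ is nondecreasing, then $\phi^{-1}$ is nonincreasing, $\int \Shape(x-z)\phi(z)^{-1}\de z$ is nonincreasing in $x$, and composition with the decreasing function $\mmse$ gives a nondecreasing $\psi=\CMapa_\Shape(\phi)$. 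Induction closes this half. (Alternatively, one can invoke the discrete analog Lemma~\ref{lemma:NonDecreasing} and pass to the limit $\rho\to 0$ using Lemma~\ref{lemma:ContinuumLimit}.)

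For \emph{Lipschitz continuity}, the hypothesis $t\ge 2$ is what lifts the singular initial datum to a bounded one: at $t=1$ we have $\psi(\,\cdot\,;1)\le \mathrm{Var}(X)$ on $[0,\ell]$, and combined with the fixed value $\mmse(\L0/(2\sigma^2))$ on $(-\infty,0)$ from the convention, the convolution defining $\phi(\,\cdot\,;1)$ is bounded on the subdomain $[-1,\ell-1]$ (matching the index range $a\in\{-\rho^{-1},\dots,L-\rho^{-1}-1\}$ used in Lemma~\ref{lemma:ContinuumLimit}). The mean value theorem applied to $\Shape$ yields
\begin{equation*}
|\phi(y_1;t-1)-\phi(y_2;t-1)|\le \frac{\|\Shape'\|_\infty}{\delta}\,|y_1-y_2|\int_{y_1-1}^{y_2+1}\psi(x;t-1)\,\de x,
\end{equation*}
showing that $\phi(\,\cdot\,;t-1)$ is Lipschitz on $[-1,\ell-1]$. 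Similarly, $x\mapsto\int\Shape(x-z)\phi(z;t-1)^{-1}\de z$ is Lipschitz using $\phi\ge\sigma^2$ so that $\phi^{-1}\le\sigma^{-2}$; since this argument to $\mmse$ then stays in a compact subinterval of $[0,\infty)$ on which $\mmse$ is Lipschitz, composing gives Lipschitz continuity of $\psi(\,\cdot\,;t)$ on $[0,\ell]$.

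The main obstacle is bookkeeping around the boundary convention $\psi(x;t)=+\infty$ for $x>\ell$, which forces $\phi(y;t-1)=+\infty$ for $y\in(\ell-1,\ell+1]$. The Lipschitz claim for $\phi$ must therefore be interpreted on its natural finite subdomain $[-1,\ell-1]$, and one has to propagate these domain restrictions carefully through the iteration. Once these issues are sorted, the proof reduces to a routine combination of the mean value estimate for $\Shape$, uniform $L^\infty$ bounds on $\psi$ and $\phi^{-1}$, and local Lipschitz regularity of $\mmse$.
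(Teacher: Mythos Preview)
Your proposal is correct and follows essentially the same approach as the paper. The paper establishes the nondecreasing property by invoking the discrete analog (Lemma~\ref{lemma:NonDecreasing}) and passing to the $\rho\to 0$ limit via Lemma~\ref{lemma:ContinuumLimit}, which you mention as an alternative; your direct induction on the continuum maps via the change of variable $u=y-x$ is equally valid and arguably cleaner. For Lipschitz continuity, both you and the paper use the same ingredients: boundedness of $\psi(\,\cdot\,;t)$ for $t\ge 1$, Lipschitz continuity of $\Shape$, the lower bound $\phi\ge\sigma^2$, and local Lipschitz regularity of $\mmse$. Your explicit attention to the domain restriction forced by the convention $\psi=+\infty$ for $x>\ell$ is more careful than the paper, which leaves this implicit.
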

\begin{proof}
Nondecreasing property of functions $x\mapsto \psi(x;t)$, and $x \mapsto \phi(x;t)$ follows immediately from Lemmas \ref{lemma:NonDecreasing} and \ref{lemma:ContinuumLimit}. Further, since $\psi(x;t)$ is bounded for $t \ge 1$, and $\Shape(\,\cdot\,)$ is Lipschitz continuous, recalling Eq.~\eqref{eq:CMAP2}, the function $x\mapsto \phi(x;t)$ is Lipschitz continuous as well, for $t\ge 1$. Similarly, since $ \sigma^2 < \phi(x;t) < \Phi_M$, invoking Eq.~\eqref{eq:CMAP1}, the function $x\mapsto \psi(x;t)$ is Lipschitz continuous for $t\ge 2$. 
\end{proof}
%
%

\subsubsection{Free energy}

A key role in our analysis is played by the free energy functional. In order to define the free energy, we first provide some preliminaries. 
Define the mutual information
between  $X$ and a noisy observation of $X$ at signal-to-noise ratio $s$ by
\begin{eqnarray}
\Info(s) \equiv I(X;\sqrt{s}X+Z)\,,
\end{eqnarray}
with $Z\sim\normal(0,1)$ independent of $X\sim p_X$.
Recall the relation~\cite{Guo05mutualinformation} 
\begin{eqnarray}
\frac{\de\phantom{s}}{\de s}\,\Info(s) = \frac{1}{2}\,\mmse(s)\,.
\end{eqnarray}
Furthermore, the following identities relate the scaling law of mutual information under weak noise to R\'enyi information dimension~\cite{WuVerduMMSE}. 

\begin{propo}\label{propo:info_dimension} 
Assume $H (\lfloor X \rfloor) < \infty$. Then
\begin{eqnarray}\label{eqn: info_dimension}
\begin{split}
\underset{s \to \infty}{\liminf} \frac{\Info(s)}{\frac{1}{2} \log s} &= \lRenyi(p_X),\\
\underset{s \to \infty}{\limsup} \frac{\Info(s)}{\frac{1}{2} \log s} &= \uRenyi(p_X).
\end{split}
\end{eqnarray}
\end{propo}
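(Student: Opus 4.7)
The plan is to establish a two-sided sandwich
\[
H(\lfloor \sqrt{s}X\rfloor) - C \;\le\; \Info(s) \;\le\; H(\lfloor \sqrt{s}X\rfloor) + C
\]
valid for all $s\ge 1$ with a constant $C$ depending only on the noise $Z$. Once this is in place, dividing through by $\log\sqrt{s}=(1/2)\log s$ and taking $\liminf$/$\limsup$ as $s\to\infty$ gives the two identities, since $H([X]_\ell) = H(\lfloor \ell X\rfloor)$ by definition of the quantizer $[X]_\ell = \lfloor \ell X\rfloor/\ell$. The hypothesis $H(\lfloor X\rfloor)<\infty$ enters only to guarantee $H(\lfloor\ell X\rfloor)<\infty$ for every $\ell$ (via a standard decomposition of $\lfloor\ell X\rfloor$ as a function of $\lfloor X\rfloor$ plus a bounded integer correction), so that the quotients under $\liminf$ and $\limsup$ are meaningful.

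For the upper bound, set $Y=\sqrt{s}X+Z$ and note $\Info(s) = h(Y) - h(Z) = h(Y) - \tfrac{1}{2}\log(2\pi e)$ since $Z$ is independent of $X$. A standard interval-by-interval Jensen argument applied to the density of $Y$ yields $h(Y) \le H(\lfloor Y\rfloor)$. Now use the exact identity $\lfloor Y\rfloor = \lfloor\sqrt{s}X\rfloor + \lfloor\{\sqrt{s}X\} + Z\rfloor$ (which holds because $\lfloor\sqrt{s}X\rfloor$ is an integer) together with subadditivity of entropy to obtain
\[
H(\lfloor Y\rfloor) \;\le\; H(\lfloor\sqrt{s}X\rfloor) + H\bigl(\lfloor\{\sqrt{s}X\}+Z\rfloor\bigr).
\]
The second term is bounded uniformly in $s$: for any $u\in[0,1)$, Gaussian tail bounds yield $\prob(\lfloor u+Z\rfloor = k) \le 2\,e^{-\max(|k|-1,\,0)^2/2}$, which in turn bounds the discrete entropy by an absolute constant independent of $u$ and hence of $s$.

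For the lower bound, data processing applied to the deterministic map $X\mapsto\lfloor\sqrt{s}X\rfloor$ gives
\[
\Info(s) \;\ge\; I(\lfloor\sqrt{s}X\rfloor;Y) \;=\; H(\lfloor\sqrt{s}X\rfloor) - H(\lfloor\sqrt{s}X\rfloor\mid Y).
\]
The conditional entropy is controlled by the identity $\lfloor\sqrt{s}X\rfloor - \lfloor Y\rfloor = -\lfloor\{\sqrt{s}X\}+Z\rfloor$, so that $H(\lfloor\sqrt{s}X\rfloor\mid Y) \le H(\lfloor\sqrt{s}X\rfloor - \lfloor Y\rfloor) = H(\lfloor\{\sqrt{s}X\}+Z\rfloor)$, which is the same quantity already bounded by a constant in the previous paragraph. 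Combining the two bounds produces the sandwich and proves the proposition.

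The main technical point is the uniform-in-$s$ control of the ``carry'' entropy $H(\lfloor\{\sqrt{s}X\}+Z\rfloor)$: this is the single place where Gaussianity of $Z$ is used in a quantitative way, and it is what makes coarse integer quantization at scale $\sqrt{s}$ asymptotically equivalent, up to additive $O(1)$ terms, to the Gaussian channel at SNR $s$. Everything else is bookkeeping: chain rules, subadditivity of Shannon entropy, and the classical inequality $h(Y)\le H(\lfloor Y\rfloor)$ for an absolutely continuous $Y$.
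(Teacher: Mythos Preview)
The paper does not prove this proposition; it is quoted from \cite{WuVerduMMSE} (Wu--Verd\'u) as a known result. Your argument is essentially the proof given in that reference: the sandwich $H(\lfloor\sqrt{s}X\rfloor)-C\le\Info(s)\le H(\lfloor\sqrt{s}X\rfloor)+C$ via the decomposition $\lfloor Y\rfloor=\lfloor\sqrt{s}X\rfloor+\lfloor\{\sqrt{s}X\}+Z\rfloor$ and the uniform Gaussian ``carry'' bound is exactly their approach, and your derivation of both directions (Jensen for $h(Y)\le H(\lfloor Y\rfloor)$; data processing plus the same carry bound for the other side) is correct.

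One small point you glossed over: after the sandwich you obtain $\liminf/\limsup$ of $H(\lfloor\ell X\rfloor)/\log\ell$ with $\ell=\sqrt{s}$ ranging over \emph{positive reals}, whereas the R\'enyi dimension is usually defined with integer $\ell$. Under $H(\lfloor X\rfloor)<\infty$ the two agree: for $n\le\ell<2n$ one has $\lfloor\ell X\rfloor$ determined by $(\lfloor nX\rfloor,\lfloor 2nX\rfloor)$ and vice versa up to a bounded correction, so $|H(\lfloor\ell X\rfloor)-H(\lfloor nX\rfloor)|$ is $O(1)$. This is routine, but worth one sentence since otherwise the identification with $\lRenyi(p_X)$, $\uRenyi(p_X)$ is not literally immediate.
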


Now we are ready to define the free energy functional.
\begin{definition}
Let $\Shape(\,\cdot\,)$ be a shape function, and $\sigma,\delta>0$ be given. The corresponding \emph{free energy} is the
functional $\Energy_\Shape: \mathscr{M}([-1,\ell+1])\to \overline{\reals}$ defined as
follows for $\phi \in \mathscr{M}([-1,\ell+1])$:
\begin{eqnarray}\label{eqn:free_spec}
\begin{split}
\Energy_{\Shape}(\phi) = \frac{\delta}{2}\int_{-1}^{\ell-1} 
\Big\{\frac{\varsigma^2(x)}{\phi(x)}+\log \phi(x)\Big\} \de x + 
\int_{0}^{\ell}\Info\Big(\int \Shape(x-z)\phi(z)^{-1} \de z\Big) \de x,
\end{split}
\end{eqnarray}
where 
\begin{eqnarray} \label{eqn:new_sigma}
\varsigma^2(x) = \sigma^2 + \frac{1}{\delta} \Big(\int_{y \le 0} \Shape(y-x) \de y\Big) \mmse \Big( \frac{\L0}{2\sigma^2}\Big). 
\end{eqnarray}
\end{definition}
The name `free energy' is motivated by the connection with statistical
physics, whereby $\Energy_{\Shape}(\phi)$ is the asymptotic
log-partition function for the Gibbs-Boltzmann measure corresponding
to the posterior distribution of $x$ given $y$. 
(This connection is however immaterial for our proof and we will
not explore it further, see for instance \cite{KrzakalaEtAl}.)

Notice that this is where the R\'enyi information comes into the picture. The mutual information appears in the expression of the free energy and the mutual information is related to the R\'enyi information via Proposition~\ref{propo:info_dimension}.

Viewing $\Energy_{\Shape}$ as a function defined on the Banach space
$L_2([-1,\ell])$, we will denote by $\nabla E_{\Shape}(\phi)$ its
Fr\'echet derivative at $\phi$. This will be identified, via standard duality,
with a function in $L_2([-1,\ell])$. It is not hard to show
that the Fr\'echet derivative exists on $\{\phi:\, \phi(x)\ge \sigma^2\}$ and is such that
\begin{eqnarray}\label{eqn:grad_energy}
\begin{split}
\nabla\Energy_{\Shape}(\phi)(y) = \frac{\delta}{2\phi^2(y)} \Big\{\phi(y) - \varsigma^2(y)
- \frac{1}{\delta} \int_{0}^{\ell} \Shape(x-y) \mmse \big(\int \Shape(x-z) \phi(z)^{-1} \de z \big) \de x \Big\}, 
\end{split}
\end{eqnarray}
for $-1 \le y \le \ell-1$. Note that the condition $\phi(x)\ge
\sigma^2$ is immediately satisfied by the state evolution sequence
since, by Eq.~(\ref{eq:CMAP2}), $\CMapb_{\Shape}(\psi)(y) \ge\sigma^2$
for all $y$ (because $\Shape(y-x)$, $\psi(x;t)\ge0$); see also Definition \ref{def:ContinuumSE}.

The specific choice of the free energy in Eq.~\eqref{eqn:free_spec} ensures that the fixed points of the continuum state evolution are the stationary points of the free energy.

\begin{coro}\label{coro:fix_gradiant}
If $\{\phi, \psi\}$ is the fixed point of the continuum state evolution, then $\nabla\Energy_W(\phi)(y)=0$, for $-1 \le y\le \ell - 1$. 
\end{coro}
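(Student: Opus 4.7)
The plan is to verify the claim by direct substitution of the fixed-point equations into the explicit formula for $\nabla \Energy_{\Shape}(\phi)$ given in Eq.~\eqref{eqn:grad_energy}. Fix $y \in [-1,\ell-1]$. Since $\phi(y) > 0$, the gradient vanishes at $y$ if and only if the bracketed numerator
\[
N(y)\; \equiv\; \phi(y) - \varsigma^2(y) - \frac{1}{\delta}\int_0^\ell \Shape(x-y)\, \mmse\Bigl(\int \Shape(x-z)\phi(z)^{-1}\de z\Bigr)\, \de x
\]
is zero, so the task reduces to showing $N(y)=0$ at any fixed point.

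First I would rewrite the last integral using the fixed-point equation $\psi = \CMapa_{\Shape}(\phi)$, i.e.\ $\psi(x) = \mmse(\int \Shape(x-z)\phi(z)^{-1}\de z)$ for $x \in [0,\ell]$, so that
\[
N(y) \;=\; \phi(y) - \varsigma^2(y) - \frac{1}{\delta}\int_0^\ell \Shape(x-y)\,\psi(x)\,\de x.
\]
Next I would use the other fixed-point equation $\phi = \CMapb_{\Shape}(\psi)$, together with the conventions $\psi(x)=\mmse(\L0/(2\sigma^2))$ for $x<0$ and $\psi(x)=\infty$ for $x>\ell$. Because $\Shape$ is supported on $[-1,1]$ and $y \le \ell-1$, the kernel $\Shape(y-x)$ vanishes for $x>y+1\le \ell$, so the $x>\ell$ region (where $\psi=\infty$) contributes nothing; splitting the remaining integral at $0$ gives
\[
\phi(y) \;=\; \sigma^2 + \frac{1}{\delta}\int_{x\le 0}\Shape(y-x)\,\mmse\!\Bigl(\tfrac{\L0}{2\sigma^2}\Bigr)\de x + \frac{1}{\delta}\int_0^\ell \Shape(y-x)\,\psi(x)\,\de x.
\]
By the definition~\eqref{eqn:new_sigma} of $\varsigma^2$ (noting the change of variable makes the two $x\le 0$ integrals agree), the first two terms combine to exactly $\varsigma^2(y)$.

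Finally, using the symmetry $\Shape(-u)=\Shape(u)$ from the definition of a shape function, we have $\Shape(y-x)=\Shape(x-y)$, so the surviving integral matches the one subtracted in $N(y)$. Therefore
\[
\phi(y) - \varsigma^2(y) \;=\; \frac{1}{\delta}\int_0^\ell \Shape(x-y)\,\psi(x)\,\de x,
\]
and $N(y)=0$ as required. The argument is a one-line computation once the fixed-point relations are written out; the only point requiring mild care is checking that the region $x>\ell$ (where $\psi=\infty$ by convention) is not seen by the kernel, which holds precisely because $y$ is restricted to $[-1,\ell-1]$ and $\Shape$ has support in $[-1,1]$. There is no substantive obstacle.
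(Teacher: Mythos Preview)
Your proposal is correct and follows essentially the same approach as the paper's own proof: both substitute the fixed-point relations $\psi=\CMapa_{\Shape}(\phi)$ and $\phi=\CMapb_{\Shape}(\psi)$ into the explicit gradient formula~\eqref{eqn:grad_energy}, split the $\psi$-integral at $x=0$, and identify the boundary piece with $\varsigma^2(y)$. You are slightly more careful than the paper in making explicit the use of the symmetry $\Shape(-u)=\Shape(u)$ and in explaining why the region $x>\ell$ (where $\psi=\infty$) does not contribute for $y\le\ell-1$; the paper leaves both of these implicit.
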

\begin{proof}
We have $\phi = \CMapb_{\Shape}(\psi)$ and $\psi = \CMapa_{\Shape}(\phi)$, whereby for $-1\le y \le \ell-1$,
\begin{eqnarray}
\begin{split}
\phi(y) &= \sigma^2  + \frac{1}{\delta} \int \Shape(y-x) \psi(x) \de x\\
           &= \sigma^2 + \frac{1}{\delta} \Big(\int_{x \le 0} \Shape(y-x) \de x\Big) \mmse \left( \frac{\L0}{2\sigma^2}\right)+\\
           &\quad  \frac{1}{\delta} \int_{0}^{\ell} \Shape(y-x) \mmse \Big( \int_{-1}^{\ell+1} \Shape(x-z) \phi(z)^{-1} \de z\Big) \de x\\
           &  =  \varsigma^2(y) + \frac{1}{\delta} \int_{0}^{\ell} \Shape(y-x) \mmse \Big( \int_{-1}^{\ell+1} \Shape(x-z) \phi(z)^{-1} \de z\Big) \de x.
\end{split}
\end{eqnarray}
The result follows immediately from Eq.~\eqref{eqn:grad_energy}.
\end{proof}
\begin{definition}
Define the \emph{potential} function $V: \reals_+ \to \reals_+$ as follows.
\begin{eqnarray}\label{eqn:potential_def}
V(\phi) = \frac{\delta}{2} \Big( \frac{\sigma^2}{\phi} + \log \phi \Big) + \Info(\phi^{-1}).
\end{eqnarray}
\end{definition}

As we will see later, the analysis of the continuum state evolution involves a decomposition of the free energy functional into three terms and a careful treatment of each term separately. The definition of the potential function $V$ is motivated by that decomposition. 

Using Eq.~\eqref{eqn: info_dimension}, we have for $\phi \ll 1$,
\begin{eqnarray}
\begin{split}
V(\phi) &\lesssim \frac{\delta}{2}(\frac{\sigma^2}{\phi} + \log \phi) + \frac{1}{2} \uRenyi(p_X)  \log(\phi^{-1})\\
& = \frac{\delta \sigma^2}{2 \phi} + \frac{1}{2} [\delta - \uRenyi(p_X)] \log(\phi).
\end{split}
\end{eqnarray}
Define 
\begin{eqnarray}
\phi^* = \sigma^2+ \frac{1}{\delta} \mmse\Big(\frac{\L0}{2\sigma^2}\Big).
\end{eqnarray}
Notice that $\sigma^2 < \phi^* \le (1+2/(\delta \L0)) \sigma^2 < 2\sigma^2$, given that $\delta \L0 > 3$. The following proposition upper bounds $V(\phi^*)$ and its proof is deferred to Appendix~\ref{app:V_properties}.
\begin{propo}
\label{propo:V_properties}
There exists $\sigma_2> 0$, such that, for $\sigma \in (0, \sigma_2]$, we have
\begin{eqnarray}\label{eqn:V_prop1}
V(\phi^*) \le \frac{\delta}{2} + \frac{\delta - \uRenyi(p_X)}{4} \log(2\sigma^2).
\end{eqnarray}
\end{propo}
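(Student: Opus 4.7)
\bigskip
\noindent\textbf{Proof plan for Proposition \ref{propo:V_properties}.}

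The plan is to split the potential $V(\phi^*)=\frac{\delta}{2}\bigl(\tfrac{\sigma^2}{\phi^*}+\log\phi^*\bigr)+\Info((\phi^*)^{-1})$ into three pieces and control each one using the two-sided bound $\sigma^2<\phi^*\le\bigl(1+\tfrac{2}{\delta L_0}\bigr)\sigma^2\le 2\sigma^2$ (which holds because $\delta L_0>3$), combined with the asymptotic characterization of the upper R\'enyi dimension in Proposition~\ref{propo:info_dimension}. As noted in the discussion following Proposition~\ref{propo:V_properties}, the intended regime is $\delta>\uRenyi(p_X)$, and the proof will use this strict inequality to absorb the error term coming from the $\limsup$.

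First, since $\sigma^2<\phi^*$, the elementary inequality $\sigma^2/\phi^*<1$ gives $\frac{\delta}{2}\cdot\frac{\sigma^2}{\phi^*}\le\frac{\delta}{2}$. Second, since $\phi^*\le 2\sigma^2$, monotonicity of $\log$ yields $\frac{\delta}{2}\log\phi^*\le\frac{\delta}{2}\log(2\sigma^2)$. Third, for the mutual-information term, fix $\eps=(\delta-\uRenyi(p_X))/4>0$; by Proposition~\ref{propo:info_dimension}, there exists $s_0$ such that $\Info(s)\le\tfrac{1}{2}(\uRenyi(p_X)+\eps)\log s$ for $s\ge s_0$. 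Since $(\phi^*)^{-1}\ge 1/(2\sigma^2)\to\infty$ as $\sigma\to 0$, this applies once $\sigma$ is sufficiently small; combining with the lower bound $\phi^*\ge\sigma^2$ gives
\[
\Info((\phi^*)^{-1})\le\tfrac{1}{2}(\uRenyi(p_X)+\eps)\log\bigl((\phi^*)^{-1}\bigr)\le-\tfrac{1}{2}(\uRenyi(p_X)+\eps)\log\sigma^2.
\]

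Adding the three bounds gives
\[
V(\phi^*)\le\frac{\delta}{2}+\frac{\delta}{2}\log(2\sigma^2)-\frac{1}{2}(\uRenyi(p_X)+\eps)\log\sigma^2,
\]
and writing $\log(2\sigma^2)=\log 2+\log\sigma^2$ lets me collect coefficients of $\log\sigma^2$ into $\tfrac{1}{2}(\delta-\uRenyi(p_X)-\eps)$. The target bound $\tfrac{\delta}{2}+\tfrac{\delta-\uRenyi(p_X)}{4}\log(2\sigma^2)$ has coefficient $\tfrac{\delta-\uRenyi(p_X)}{4}$ on $\log\sigma^2$; since $\log\sigma^2<0$ for $\sigma<1$, the desired inequality reduces to showing
\[
\Bigl(\tfrac{\delta}{2}-\tfrac{\delta-\uRenyi(p_X)}{4}\Bigr)\log 2\;\le\;\Bigl(\tfrac{\delta-\uRenyi(p_X)-2\eps}{4}\Bigr)\cdot(-\log\sigma^2).
\]
With my choice $\eps=(\delta-\uRenyi(p_X))/4$ the right-hand coefficient is strictly positive, while the left-hand side is a fixed constant. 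Since $-\log\sigma^2\to\infty$ as $\sigma\to 0$, this inequality holds for all $\sigma$ smaller than some $\sigma_2=\sigma_2(\delta,p_X,L_0)$, completing the argument.

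The only non-routine step is the third: one must verify that the $\limsup$ characterization of $\uRenyi(p_X)$ can indeed be invoked at the signal-to-noise ratio $(\phi^*)^{-1}$, which is why the proposition needs a sufficiently small threshold $\sigma_2$; all other steps are elementary. Once this inequality is chained together, the statement follows.
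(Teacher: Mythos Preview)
Your proof is correct and uses the same ingredients as the paper: the sandwich $\sigma^2<\phi^*<2\sigma^2$ and the bound $\Info(s)\le\tfrac12(\uRenyi(p_X)+\eps)\log s$ from Proposition~\ref{propo:info_dimension}. The paper's execution is a bit cleaner: it keeps $\log\phi^*$ intact when combining $\tfrac{\delta}{2}\log\phi^*$ with $-\tfrac12(\uRenyi(p_X)+\eps)\log\phi^*$, chooses $\eps=(\delta-\uRenyi(p_X))/2$ so the coefficient is exactly $\tfrac{\delta-\uRenyi(p_X)}{4}$, and then invokes $\phi^*<2\sigma^2$ once---this avoids the residual $\log 2$ constant you have to absorb at the end by a separate smallness argument.
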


%
%
Now, we write the energy functional in terms of the potential function.
\begin{eqnarray} \label{eqn:energy-potential}
\Energy_{\Shape}(\phi) = \int_{-1}^{\ell-1} V(\phi(x)) \;\de x +\frac{\delta}{2} \int_{-1}^{\ell-1} \frac{\varsigma^2(x) - \sigma^2}{\phi(x)}\; \de x +  \tEnergy_{\Shape}(\phi),
\end{eqnarray}
with,
\begin{eqnarray}
\tEnergy_{\Shape}(\phi) = \int_{0}^{\ell} \big\{ \Info(\Shape \ast {\phi}(y)^{-1}) - \Info(\phi(y-1)^{-1}) \big\} \de y.
\end{eqnarray}
%

\subsubsection{Analysis of the continuum state evolution}
Now we are ready to study the fixed points of the continuum state evolution.
\begin{lemma}\label{lem:main_continuum}
Let $\delta > 0$, and $p_X$ be a probability measure on the real line with $\delta > \bar{d}(p_X)$. For any $\kappa > 0$, there exist $\ell_0$, $\sigma_0^2 = \sigma_0(\kappa, \delta, p_X)^2$, such that, for any $\ell > \ell_0$ and $\sigma \in (0,\sigma_0]$, and any fixed point of continuum state evolution,$\{\phi, \psi \}$, with $\psi$ and $\phi$ nondecreasing Lipschitz functions and $\psi(x) \ge \mmse (\L0/(2\sigma^2))$, the following holds.
\begin{eqnarray}
\int_{-1}^{\ell - 1} |\phi(x) - \phi^*| \;\de x \le \kappa \ell.
\end{eqnarray}
\end{lemma}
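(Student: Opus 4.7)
The plan is to argue by contradiction, perturbing a hypothetical bad fixed point so as to strictly decrease the free energy $\Energy_{\Shape}$ from~\eqref{eqn:free_spec} and thereby violate the stationarity $\nabla\Energy_{\Shape}(\phi)\equiv 0$ from Corollary~\ref{coro:fix_gradiant}. Suppose $(\phi,\psi)$ is a fixed point with $\int_{-1}^{\ell-1}|\phi(x)-\phi^*|\,\de x > \kappa\ell$. A direct calculation, using the boundary convention $\psi(z)=\mmse(\L0/(2\sigma^2))$ for $z<0$ together with $\int_{\reals}\Shape(u)\,\de u=1$ and $\mathrm{supp}(\Shape)\subset[-1,1]$, yields $\phi(-1)=\phi^*$ exactly. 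Since $\phi$ is nondecreasing with values in $[\phi^*,\Phi_M]$, the inequality $\int_{-1}^{\ell-1}(\phi-\phi^*)\,\de x \le \ell\,(\phi(\ell-1)-\phi^*)$ forces $\phi(\ell-1)\ge \phi^*+\kappa$.

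Next I would introduce the shifted profile
\begin{eqnarray}
\phi^\tau(x) = \begin{cases} \phi^* & \text{if } x\in [-1,-1+\tau],\\ \phi(x-\tau) & \text{if } x\in [-1+\tau,\ell+1],\end{cases}
\end{eqnarray}
for small $\tau>0$; continuity at $x=-1+\tau$ is automatic from $\phi(-1)=\phi^*$, and the family deforms $\phi$ smoothly from $\tau=0$. Applying the decomposition~\eqref{eqn:energy-potential}, the main potential term changes by
\begin{eqnarray}
\int_{-1}^{\ell-1} V(\phi^\tau)\,\de x - \int_{-1}^{\ell-1} V(\phi)\,\de x \;=\; \tau V(\phi^*) - \int_{\ell-1-\tau}^{\ell-1} V(\phi(x))\,\de x,
\end{eqnarray}
which to first order in $\tau$ equals $\tau[V(\phi^*) - V(\phi(\ell-1))]$. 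By Proposition~\ref{propo:V_properties} (valid because $\delta>\uRenyi(p_X)$), the first term is at most $\tau\{\delta/2 + \frac{\delta-\uRenyi(p_X)}{4}\log(2\sigma^2)\}$, while $\phi(\ell-1)\ge \phi^*+\kappa\ge \kappa/2$ (for $\sigma$ small) and $\phi(\ell-1)\le \Phi_M$ imply $|V(\phi(\ell-1))|\le K_1(\kappa,\delta,p_X)$ via the explicit form~\eqref{eqn:potential_def}. Hence the main contribution is strongly negative, scaling like $\tau\,[(\delta-\uRenyi(p_X))/4]\log(\sigma^2)$ as $\sigma\downarrow 0$.

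For the remaining terms, the middle term in~\eqref{eqn:energy-potential} has integrand bounded by $O(1/\L0)$ and supported on $[-1,1]$, so its shift-change is $O(\tau/\L0)$. For $\tEnergy_\Shape$, the translation identity $\Shape\ast(\phi^\tau)^{-1}(y)=\Shape\ast\phi^{-1}(y-\tau)$ valid for $y\ge \tau$, combined with $\phi^\tau(y-1)=\phi(y-1-\tau)$ on the same range, reduces $\tEnergy_\Shape(\phi^\tau)-\tEnergy_\Shape(\phi)$ to integrals over two intervals of length $\tau$ near $y=0$ and $y=\ell$. These I would bound using the concavity of $\Info$ (Jensen), the Lipschitz continuity and monotonicity of $\phi$ from Corollary~\ref{cor:nondecreasing_continuous}, and the telescoping structure of $\tEnergy_\Shape$. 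Choosing $\sigma_0=\sigma_0(\kappa,\delta,p_X)$ small enough that the main-term negativity dominates all correction bounds, we obtain $\Energy_\Shape(\phi^\tau)-\Energy_\Shape(\phi)<0$ for all sufficiently small $\tau>0$, contradicting Corollary~\ref{coro:fix_gradiant}.

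The principal difficulty will be establishing a bound on $\tEnergy_\Shape(\phi^\tau)-\tEnergy_\Shape(\phi)$ that is $O(\tau)$ with constant independent of $\sigma$, rather than the naive $O(\tau|\log\sigma|)$: the latter would match the order of the main potential term and destroy the argument under the minimal hypothesis $\delta>\uRenyi(p_X)$. The sharper estimate presumably exploits the integral representation $\Info(b)-\Info(a)=\frac{1}{2}\int_a^b \mmse(s)\,\de s$, the pointwise bound $\mmse(s)\le 1/s$, and the cancellations between $\Info(\Shape\ast\phi^{-1})$ and $\Info(\phi(\cdot-1)^{-1})$ exposed by the shift identity; carrying this out carefully in the region near $y=0$, where $\Shape\ast\phi^{-1}$ can be as large as $1/\phi^*\sim 1/\sigma^2$, is the technical crux of the argument.
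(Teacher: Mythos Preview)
Your overall strategy—perturb a hypothetical bad fixed point to strictly decrease the free energy and contradict Corollary~\ref{coro:fix_gradiant}—is exactly the paper's. The computation $\phi(-1)=\phi^*$ and the observation $\phi(\ell-1)\ge\phi^*+\kappa$ are correct, and your instinct about the $\sigma$-independent bound near $y=0$ is right: the paper resolves it precisely via $\phi^*<2\sigma^2$, so that $\Info(\sigma^{-2})-\Info((\phi^*)^{-1})\le\frac12\log(\phi^*/\sigma^2)<\frac12\log 2$.

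The gap is in the choice of perturbation. Corollary~\ref{coro:fix_gradiant} only gives $\nabla\Energy_{\Shape}(\phi)(y)=0$ for $y\in[-1,\ell-1]$; on $(\ell-1,\ell+1]$ the fixed point is fixed by the boundary convention $\psi\equiv\infty$ for $x>\ell$, not by stationarity, and in fact $\nabla\Energy_{\Shape}(\phi)(y)=-\frac{1}{2\phi(y)^2}\int_0^\ell\Shape(x-y)\,\mmse(\cdots)\,\de x\le 0$ there. Your pure shift $\phi^\tau$ replaces $\phi$ by finite values on $(\ell-1,\ell-1+\tau]$, so $\phi^\tau-\phi$ is \emph{not} supported in $[-1,\ell-1]$ and $\langle\nabla\Energy_\Shape(\phi),\phi^\tau-\phi\rangle$ need not vanish. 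Hence even $\Energy_\Shape(\phi^\tau)<\Energy_\Shape(\phi)$ yields no contradiction. Concretely, your right-boundary term $-\int_{\ell-\tau}^{\ell}\{\Info(\Shape\ast\phi^{-1})-\Info(\phi(\cdot-1)^{-1})\}\,\de y$ is nonnegative (since $\phi$ is nondecreasing) and, because $\Shape\ast\phi^{-1}(y)\to 0$ as $y\uparrow\ell$, behaves like $\tau\log(1/\tau)$ rather than $O(\tau)$; this destroys the first-order expansion.

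The paper fixes this by building a perturbation $\phi_a$ that coincides with $\phi$ for all $x\ge x_2$, where $x_2<\ell-1$. It first uses a pigeonhole argument (Claim~\ref{claim:slope_phi}) to locate an interval $[x_1,x_2]\subset[0,\ell-1)$ of any prescribed length $K$ on which $\phi\ge\kappa/2+\phi^*$; then $\phi_a$ equals $\phi^*$ on $[-1,-1+a]$, equals the shift $\phi(\cdot-a)$ on $[-1+a,x_0+a]$ (with $x_0=(x_1+x_2)/2$), and linearly rescales the argument on $[x_0+a,x_2]$ to reconnect continuously with $\phi$ at $x_2$. This guarantees $\phi_a-\phi$ is supported in $[-1,x_2]\subset[-1,\ell-1]$, so stationarity applies. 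The lower bound $\phi\ge\kappa/2$ on the rescaling window is what makes the $\tEnergy_\Shape$ contribution there $O(a)$ with a constant depending only on $\kappa,K$ (Proposition~\ref{propo:Etilde}), and the potential-term gain becomes $\frac{a}{x_2-x_0}\int_{x_0}^{x_2}\{V(\phi^*)-V(\phi)\}\,\de x$ rather than $\tau\{V(\phi^*)-V(\phi(\ell-1))\}$.
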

\begin{proof}
The claim is trivial for $\kappa \ge \Phi_M$, since $\phi(x) \le \Phi_M$. Fix $\kappa < \Phi_M$, and choose $\sigma_1$, such that $\phi^* < \kappa/2$, for $\sigma \in (0,\sigma_1]$. Since $\phi$ is a fixed point of continuum state evolution, we have $\nabla \Energy_{\Shape}(\phi) = 0$, on the interval $[-1,\ell-1]$ by Corollary~\ref{coro:fix_gradiant}. Now, assume that $\int_{-1}^{\ell - 1} |\phi(x) - \phi^*| > \kappa \ell$. We introduce an infinitesimal perturbation of $\phi$ that decreases the energy in the first order; this contradicts the fact $\nabla \Energy_{\Shape}(\phi) = 0$ on the interval $[-1,\ell-1]$.
\begin{claim}\label{claim:slope_phi}
For each fixed point of continuum state evolution that satisfies the hypothesis of Lemma~\ref{lem:main_continuum}, the following holds. For any $K > 0$, there exists $\ell_0$, such that, for $\ell > \ell_0$ there exist $x_1 < x_2 \in [0,\ell -1)$, with $x_2 - x_1  = K$ and $\kappa/2+ \phi^* < \phi(x)$, for $x\in [x_1,x_2]$.
\end{claim}
Claim~\ref{claim:slope_phi} is proved in Appendix~\ref{app:slope_phi}.

Fix $K >2$ and let $x_0 = (x_1+x_2)/2$. Thus, $x_0 \ge 1$. For $a \in (0,1]$, define
\begin{eqnarray}
\phi_a (x) = \begin{cases}
\phi(x), & \text{for }   x_2 \le x,\\
\phi(\frac{x_2 - x_0}{x_2 - x_0 - a}\;x - \frac{ax_2}{x_2 -x_0 - a}), & \text{for } x \in [x_0 +a, x_2),\\
\phi(x-a), &\text{for } x \in [-1+ a, x_0+a),\\
\phi^*, &\text{for } x \in [-1,-1+a).
\end{cases}
\end{eqnarray}
\begin{figure}[!t]
\centering
\includegraphics*[viewport = 80 120 700 500, width = 3.6in]{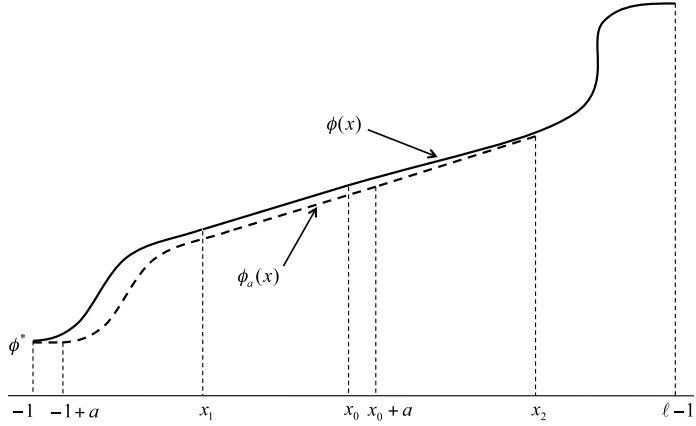}
\caption{\small An illustration of function $\phi(x)$ and its perturbation $\phi_a(x)$.}  \label{fig:Phi}
\end{figure}
See Fig.~\ref{fig:Phi} for an illustration. (Note that from Eq.~\eqref{eq:CMAP2}, $\phi(-1) = \phi^*$). In the following, we bound the difference of the free energies of functions $\phi$ and $\phi_a$.
\begin{propo}\label{propo:sigma_bound}
For each fixed point of continuum state evolution, satisfying the hypothesis of Lemma~\ref{lem:main_continuum}, there exists a constant $C(K)$, such that 
\begin{eqnarray*}
\int_{-1}^{\ell-1} \Big\{ \frac{\varsigma^2(x) - \sigma^2}{\phi_a(x)} - \frac{\varsigma^2(x) - \sigma^2}{\phi(x)}\Big\}\; \de x
\le C(K) a.
\end{eqnarray*}
\end{propo}
We refer to Appendix~\ref{app:sigma_bound} for the proof of Proposition~\ref{propo:sigma_bound}.
\begin{propo} \label{propo:Etilde}
For each fixed point of continuum state evolution, satisfying the hypothesis of Lemma~\ref{lem:main_continuum}, there exists a constant $C(\kappa,K)$, such that,
\begin{eqnarray*}
\tEnergy_{\Shape}(\phi_a) - \tEnergy_{\Shape} (\phi) \le C(\kappa,K) a.
\end{eqnarray*}
\end{propo}
Proof of Proposition~\ref{propo:Etilde} is deferred to Appendix~\ref{app:Etilde}.

Using Eq.~\eqref{eqn:energy-potential} and Proposition~\ref{propo:Etilde}, we have
\begin{eqnarray}\label{eqn:energy_diff}
\Energy_{\Shape}(\phi_a) - \Energy_{\Shape}(\phi) 
\le \int_{-1}^{\ell-1} \big\{V(\phi_a(x)) - V(\phi(x)) \big\} \de x + C(\kappa,K) a,
\end{eqnarray}
where the constants $(\delta/2) C(K)$ and $C(\kappa,K)$ are absorbed in $C(\kappa,K)$. 

We proceed by proving the following proposition. Its proof is deferred to Appendix~\ref{app:potential_Etilde}.
\begin{propo} \label{propo:potential_Etilde}
For any $C = C(\kappa,K)$, there exists $\sigma_0$, such that for $\sigma \in (0,\sigma_0]$ the following holds.
\begin{eqnarray}
\int_{-1}^{\ell-1} \big\{V(\phi_a(x)) - V(\phi(x)) \big\} \de x < - 2 C(\kappa,K) a.
\end{eqnarray}
\end{propo} 

Fix $C(\kappa,K) > 0$. As a result of Eq.~\eqref{eqn:energy_diff} and Proposition~\ref{propo:potential_Etilde},
\begin{eqnarray}\label{eqn:diff_energy}
\begin{split}
\Energy_{\Shape}(\phi_a) - \Energy_{\Shape}(\phi) &
< \int_{-1}^{\ell-1} \big\{V(\phi_a(x)) - V(\phi(x)) \big\} \de x + C(\kappa,K) a\\
&\le -C(\kappa,K)a\, .
\end{split}
\end{eqnarray}

Since $\phi$ is a Lipschitz function by assumption, it is easy to see that $\|\phi_a - \phi\|_2 \le C\,a$, for some constant $C$. By Taylor expansion of the free energy functional around function $\phi$, we have
\begin{eqnarray}\label{eqn:taylor_energy}
\begin{split}
\langle \nabla \Energy_{\Shape}(\phi), \phi_a - \phi \rangle &= 
\Energy_{\Shape}(\phi_a) - \Energy_{\Shape}(\phi) + o(\|\phi_a - \phi\|_2)\\
&\le -C(\kappa,K)a + o(a).
\end{split}
\end{eqnarray}
However, since $\{\phi, \psi\}$ is a fixed point of the continuum state evolution, we have $\nabla \Energy_{\Shape}(\phi) = 0$ on the interval $[-1,\ell-1]$ (cf. Corollary~\ref{coro:fix_gradiant}). Also, $\phi_a - \phi$ is zero out of $[-1,\ell-1]$. Therefore, $\langle \nabla \Energy_{\Shape}(\phi), \phi_a - \phi \rangle = 0$, which leads to a contradiction in Eq~\eqref{eqn:taylor_energy}. This implies that our first assumption $\int_{-1}^{\ell-1} |\phi(x) - \phi^*| \;\de x > \kappa \ell$ is false. The result follows.
\end{proof}

%
\subsubsection{Analysis of the continuum state evolution: robust reconstruction}

Next lemma pertains to the robust reconstruction of the signal. Prior to stating the lemma, we need to establish some definitions. Due to technical reasons in the proof, we consider an alternative decomposition of $\Energy_{\Shape}(\phi)$ to Eq.~\eqref{eqn:energy-potential}.

Define the potential function $V_{\rob}: \reals_{+} \to \reals_{+}$ as follows.
\begin{eqnarray}\label{eqn:potential_rob}
V_{\rob}(\phi) = \frac{\delta}{2}\Big(\frac{\sigma^2}{\phi} + \log \phi \Big),
\end{eqnarray}
and decompose the Energy functional as:
\begin{eqnarray} \label{eqn:energy-potential_rob}
\Energy_{\Shape}(\phi) = \int_{-1}^{\ell-1} V_{\rob}(\phi(x)) \;\de x +\frac{\delta}{2} \int_{-1}^{\ell-1} \frac{\varsigma^2(x) - \sigma^2}{\phi(x)}\; \de x +  \tEnergy_{\Shape,\rob}(\phi),
\end{eqnarray}
with,
\begin{eqnarray}
\tEnergy_{\Shape,\rob}(\phi) = \int_{0}^{\ell} \Info(\Shape \ast {\phi}(y)^{-1}) \de y.
\end{eqnarray}

\begin{lemma}\label{lem:main_continuum2}
Let $\delta > 0$, and $p_X$ be a probability measure on the real line with $\delta > \uMMSE(p_X)$. For any $0 <\alpha <1$, there exist $\ell_0 = \ell_0(\alpha)$, $\sigma_0^2 = \sigma_0(p_X, \delta, \alpha)^2$, such that , for any $\ell > \ell_0$ and $\sigma \in (0,\sigma_0]$, and for any fixed point of continuum state evolution, $\{\phi, \psi \}$, with $\psi$ and $\phi$ nondecreasing Lipschitz functions and $\psi(x) \ge \mmse(\L0/(2\sigma^2))$, the following holds.
\begin{eqnarray}
\int_{-1}^{\ell - 1} |\phi(x) - \phi^*| \;\de x \le C\sigma^2 \ell\,,
\end{eqnarray}
with $C = \frac{2\delta}{(1-\alpha)(\delta - \uMMSE(p_X))}$.
\end{lemma}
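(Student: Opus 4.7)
The approach is to follow the contradiction strategy of Lemma~\ref{lem:main_continuum}, now working from the robust energy decomposition~\eqref{eqn:energy-potential_rob}. Fix $\alpha\in(0,1)$ and set $C = 2\delta/((1-\alpha)(\delta-\uMMSE(p_X)))$. We assume, for contradiction, that $\int_{-1}^{\ell-1}|\phi(x)-\phi^*|\,\de x > C\sigma^2\ell$. An analog of Claim~\ref{claim:slope_phi} will then yield an interior interval $[x_1,x_2]$ of controllable length $K$ on which $\phi(x)-\phi^* \ge c_0\sigma^2$ for some $c_0 = c_0(C,\alpha) > 0$. We apply the same leftward-shift perturbation $\phi_a$ used in the earlier proof; monotonicity of $\phi$ combined with the boundary identity $\phi(-1)=\phi^*$ (forced by $\psi(x)=\mmse(L_0/(2\sigma^2))$ for $x<0$ in Definition~\ref{def:ContinuumSE}) gives $\phi_a\le\phi$ pointwise on $[-1,\ell-1]$.

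Next, we estimate each summand of~\eqref{eqn:energy-potential_rob} to first order in $a$. The $\varsigma^2-\sigma^2$ cross term contributes at most $C_1(K)a$ by Proposition~\ref{propo:sigma_bound}, whose proof applies verbatim. The robust potential contribution, through the same change-of-variable computation used previously, reduces to $a\,[V_{\rob}(\phi^*) - \bar{V}_{\rob}] + o(a)$, where $\bar{V}_{\rob}$ denotes the average of $V_{\rob}(\phi)$ on $[x_0,x_2]$. Since $V_{\rob}'(\phi) = (\delta/(2\phi^2))(\phi-\sigma^2) > 0$ on $\phi>\sigma^2$, and $\phi(y)\ge \phi^*+c_0\sigma^2$ on $[x_0,x_2]$ with $\phi^*\in(\sigma^2,(1+2/(\delta L_0))\sigma^2)$, a direct computation yields the $\sigma$-independent upper bound $V_{\rob}(\phi^*) - \bar{V}_{\rob} \le -(\delta/2) h(c_0)$ with $h(c_0)>0$.

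For $\tilde{\Energy}_{\Shape,\rob}(\phi_a) - \tilde{\Energy}_{\Shape,\rob}(\phi)$, which is nonnegative because $\phi_a\le\phi$ and $\Info$ is increasing, a first-order expansion together with change-of-variables on the shifted and compressed pieces will reduce the increment to a boundary-plus-bulk contribution whose leading coefficient, extracted using $\Info'(s)=\mmse(s)/2$ and the $\limsup$ characterization~\eqref{eqn:ummse_def} evaluated at arguments $\Shape\ast\phi^{-1}(y)\asymp\sigma^{-2}$, is proportional to $\uMMSE(p_X)/2$. The calibration of $C$ is chosen precisely so that, for $\sigma$ small enough (depending on $\alpha$, $\delta$, $p_X$), the leading negative contribution of order $-\delta$ from $V_{\rob}$ strictly dominates the leading positive contribution of order $\uMMSE(p_X)/(1-\alpha)$ from $\tilde{\Energy}_{\Shape,\rob}$, producing $\Energy_{\Shape}(\phi_a)-\Energy_{\Shape}(\phi) \le -c(\alpha)a + o(a)$ for some $c(\alpha)>0$.

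This will contradict the first-order Taylor expansion $\Energy_{\Shape}(\phi_a)-\Energy_{\Shape}(\phi) = \langle\nabla\Energy_{\Shape}(\phi),\phi_a-\phi\rangle + o(a) = o(a)$, which uses $\nabla\Energy_{\Shape}(\phi)\equiv 0$ on $[-1,\ell-1]$ from Corollary~\ref{coro:fix_gradiant}. The main obstacle will be the sharp estimate of the $\tilde{\Energy}_{\Shape,\rob}$ increment at leading order with coefficient exactly $\uMMSE(p_X)/2$: unlike Lemma~\ref{lem:main_continuum}, which achieves only the qualitative bound $\kappa\ell$ using the cruder $\uRenyi(p_X)/2$ scaling of $\Info$, here the quantitative $C\sigma^2\ell$ bound requires the tighter MMSE-dimension asymptotic. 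This is where the strict inequality $\delta>\uMMSE(p_X)$ enters, and the $(1-\alpha)^{-1}$ slack in $C$ is what allows subleading corrections to be absorbed into the leading terms for small enough $\sigma$.
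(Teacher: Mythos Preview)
Your overall contradiction strategy matches the paper, but you diverge at the two decisive technical steps.

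First, your analog of Claim~\ref{claim:slope_phi} supplies only a lower bound $\phi(x)-\phi^*\ge c_0\sigma^2$ on $[x_1,x_2]$. The paper's version (whose proof actually invokes the already-established Lemma~\ref{lem:main_continuum}) produces a \emph{two-sided} window $C(1-\alpha)\sigma^2\le\phi(x)\le\phi_1$, with $\phi_1$ fixed so that $\mmse(s)\le(\uMMSE(p_X)+\eps)/s$ for all $s\ge\phi_1^{-1}$. Without the upper bound you cannot invoke the MMSE-dimension asymptotic at the arguments $\Shape\ast\phi^{-1}(y)$, and your assertion that these are $\asymp\sigma^{-2}$ is unsupported (a lower bound on $\phi$ only upper-bounds $\Shape\ast\phi^{-1}$).

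Second, and this is the main gap, the paper does \emph{not} reuse the leftward-shift perturbation; it uses the purely multiplicative perturbation $\phi_a(x)=(1-a)\phi(x)$ on $(x_1,x_2)$ and $\phi_a=\phi$ elsewhere. This localization is what makes the quantitative balance exact: the $V_{\rob}$ increment equals $(\delta K/2)\log(1-a)$ plus a correction controlled by $\sigma^2/\phi\le 1/(C(1-\alpha))$, while the $\tEnergy_{\Shape,\rob}$ increment is at most $-\tfrac{1}{2}(\uMMSE(p_X)+\eps)(K+2)\log(1-a)$. These are clean per-unit-length contributions of order $-\delta/2$ and $+\uMMSE(p_X)/2$, and the given constant $C=2\delta/((1-\alpha)(\delta-\uMMSE(p_X)))$ is exactly what makes the correction small enough for the sum to be strictly negative. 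Under your shift perturbation, by contrast, the change in $\tEnergy_{\Shape,\rob}$ is spread over all of $[0,x_2+1]$ rather than $[x_1,x_2]$, it does not reduce to a per-unit-length form, and your claimed leading coefficient ``proportional to $\uMMSE(p_X)/2$'' with an extra $(1-\alpha)^{-1}$ is left as a heuristic; there is no visible computation connecting the shift to the specific $C$ in the statement. The multiplicative perturbation, not the shift, is the missing idea.
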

\begin{proof}

Suppose $\int_{-1}^{\ell-1} |\phi(x) - \phi^*| \de x > C\sigma^2 \ell$, for the given $C$. Similar to the proof of Lemma~\ref{lem:main_continuum}, we obtain an infinitesimal perturbation of $\phi$ that decreases the free energy in the first order, contradicting the fact $\nabla \Energy_{\Shape}(\phi) = 0$ on the interval $[-1,\ell-1]$.

By definition of upper MMSE dimension (Eq.~\eqref{eqn:ummse_def}), for any $\ve > 0$, there exists $\phi_1$, such that, for $\phi \in [0,\phi_1]$,
\begin{eqnarray}\label{eq:mmse_approx}
\mmse(\phi^{-1}) \le (\uMMSE(p_X) + \ve) \phi.
\end{eqnarray}
Henceforth, fix $\ve$ and $\phi_1$.
\begin{claim}\label{claim:slope_phi2}
For each fixed point of continuum state evolution that satisfies the hypothesis of Lemma~\ref{lem:main_continuum2}, the following holds. For any $K > 0$, $0< \alpha< 1$, there exist $\ell_0 = \ell_0(\alpha)$ and $\sigma_0 = \sigma_0(\ve,\alpha, p_X, \delta)$, such that for $\ell > \ell_0$  and $\sigma \in (0,\sigma_0]$, there exist $x_1 < x_2 \in [0,\ell-1)$, with $x_2 - x_1 = K$ and $C\sigma^2 (1-\alpha) \le \phi(x) \le \phi_1$, for $x \in [x_1,x_2]$.
\end{claim}

Claim~\ref{claim:slope_phi2} is proved in Appendix~\ref{app:slope_phi2}. For positive values of $a$, define
\begin{eqnarray}
\phi_a(x) = \begin{cases}
\phi(x), & \text{for } x \le x_1, x_2 \le x,\\
(1-a)\phi(x) & \text{for } x \in (x_1,x_2).
\end{cases}
\end{eqnarray}
Our aim is to show that $\Energy_{\Shape}(\phi_a) - \Energy_{\Shape}(\phi) \le -c\;a$, for some constant $c > 0$.

Invoking Eq.~\eqref{eqn:energy-potential}, we have
\begin{align}
\begin{split}
\Energy_{\Shape}(\phi_a) - \Energy_{\Shape}(\phi) &= 
\int_{-1}^{\ell-1} \{V_{\rob} \phi_a(x)) - V_{\rob}(\phi(x))\}\;\de x \\
&\quad+ \frac{\delta}{2}\int_{-1}^{\ell-1}(\varsigma^2(x) - \sigma^2)\left(\frac{1}{\phi_a(x)} - \frac{1}{\phi(x)}\right)\;\de x
+ \tEnergy_{\Shape,\rob}(\phi_a) - \tEnergy_{\Shape,\rob}(\phi).
\end{split}
\end{align}

The following proposition bounds each term on the right hand side separately.
\begin{propo}\label{pro:R_terms}
For the function $\phi(x)$ and its perturbation $\phi_a(x)$, we have
\begin{align}
&\int_{-1}^{\ell-1} \{V_{\rob}(\phi_a(x)) - V_{\rob}(\phi(x))\}\;\de x  \le
\frac{\delta}{2} K \log(1-a) + K\frac{\delta a}{2C(1-\alpha)(1-a)} ,\label{eqn:R_terms1}\\
&\int_{-1}^{\ell-1}(\varsigma^2(x) - \sigma^2)\left(\frac{1}{\phi_a(x)} - \frac{1}{\phi(x)}\right)\;\de x \le
K\frac{a}{C(1-\alpha)(1-a)},\label{eqn:R_terms2}\\
&\tEnergy_{\Shape,\rob}(\phi_a) - \tEnergy_{\Shape,\rob}(\phi) \le 
-\frac{\uMMSE(p_X)+\ve}{2} (K+2) \log(1-a).\label{eqn:R_terms3}
\end{align}
\end{propo}
We refer to Appendix~\ref{app:R_terms} for the proof of Proposition~\ref{pro:R_terms}.

Combining the bounds given by Proposition~\ref{pro:R_terms}, we obtain
\begin{eqnarray}\label{eqn:allterms}
\begin{split}
\Energy_{\Shape}(\phi_a) &- \Energy_{\Shape}(\phi) \le 
\frac{K}{2} \log(1-a) \Big\{ \delta - (\uMMSE(p_X) + \ve)(1+\frac{2}{K}) \Big\}+ K \frac{\delta a}{C(1-\alpha)(1-a)}.
\end{split}
\end{eqnarray}
Since $\delta > \uMMSE(p_X)$ by our assumption, and $C = \frac{2\delta}{(1-\alpha)(\delta - \uMMSE(p_X))}$, there exist $\ve, a$ small enough and $K$ large enough, such that
\begin{eqnarray*}
c = \delta - (\uMMSE(p_X) +\ve)(1+\frac{2}{K}) -\frac{2 \delta}{C(1-\alpha)(1-a)} > 0.
\end{eqnarray*}
Using Eq.~\eqref{eqn:allterms}, we get
\begin{eqnarray}
\Energy_{\Shape}(\phi_a) - \Energy_{\Shape}(\phi) \le - \frac{cK}{2}a.
\end{eqnarray}
By an argument analogous to the one in the proof of Lemma~\ref{lem:main_continuum}, this is in contradiction with $\nabla \Energy_{\Shape}(\phi) = 0$. The result follows.
\end{proof}
 
\subsection{Proof of Lemma~\ref{lem:phi_convergence}}

By Lemma~\ref{lem:modified_dominance}, $\phi_a(t) \le \phi^{mod}_a(t - t_0)$, for $a \in \Rows_0\cong \{\rho^{-1}, \cdots, L-1+\rho^{-1}\}$ and $t \ge t_1(L_0,\delta)$. Therefore, we only need to prove the claim for the modified state evolution. The idea of the proof is as follows. In the previous section, we analyzed the continuum state evolution and showed that at the fixed point, the function $\phi(x)$ is close to the constant $\phi^*$. Also, in Lemma~\ref{lemma:ContinuumLimit}, we proved that the modified state evolution is essentially approximated by the continuum state evolution as $\rho \to 0$. Combining these results implies the thesis.

\begin{proof}[Proof (Part(a))]{ 
By monotonicity of continuum state evolution (cf. Corollary~\ref{cor:monotone_map_continuous}), $\lim_{t\to \infty} \phi(x;t) = \phi(x)$ exists. Further, by continuity of state evolution recursions, $\phi(x)$ is a fixed point. Finally, $\phi(x)$ is a nondecreasing Lipschitz function (cf. Corollary~\ref{cor:nondecreasing_continuous}). Using Lemma~\ref{lem:main_continuum} in conjunction with the Dominated Convergence theorem, we have, for any $\ve > 0$ 
\begin{eqnarray}
\lim_{t \to \infty} \frac{1}{\ell}  \int_{-1}^{\ell-1} |\phi(x;t) - \phi^*| \de x \le \frac{\ve}{4},
\end{eqnarray}
for $\sigma \in (0,\sigma_0^2]$ and $\ell > \ell_0$. Therefore, there exists $t_2 > 0$ such that $\frac{1}{\ell}  \int_{-1}^{\ell-1} |\phi(x;t_2) - \phi^*| \de x \le \ve/2$. 
\noindent Moreover, for any $t \ge 0$,
\begin{eqnarray}\label{eqn:int_approximation}
\frac{1}{\ell} \int_{-1}^{\ell-1} |\phi(x;t) - \phi^*| \de x = \lim_{\rho \to 0} \frac{\rho}{\ell} \sum_{a= -\rho^{-1}}^{L-\rho^{-1} - 1} |\phi(\rho a;t) - \phi^*| =\lim_{\rho \to 0} \frac{1}{L} \sum_{a=-\rho^{-1}}^{L-\rho^{-1}-1} |\phi(\rho a;t) - \phi^*|.
\end{eqnarray}
 
\noindent By triangle inequality, for any $t \ge 0$,
 \begin{eqnarray}
 \begin{split}
 \lim_{\rho \to 0} \frac{1}{L}  \sum_{a=-\rho^{-1}}^{L-\rho^{-1}-1} |\phi_a(t) - \phi^*|
 &\le  \lim_{\rho \to 0} \frac{1}{L}  \sum_{a=-\rho^{-1}}^{L-\rho^{-1}-1} |\phi_a(t) - \phi(\rho a; t)|
 + \lim_{\rho \to 0} \frac{1}{L}  \sum_{a=-\rho^{-1}}^{L-\rho^{-1}-1} |\phi(\rho a;t) - \phi^*|\\
 & = \frac{1}{\ell} \int_{-1}^{\ell-1} |\phi(x;t) - \phi^*| \de x,
\end{split}
\end{eqnarray}
where the last step follows from Lemma~\ref{lemma:ContinuumLimit} and Eq.~\eqref{eqn:int_approximation}. Since the sequence $\{\phi(t)\}$ is monotone decreasing in $t$, we have
\begin{eqnarray}
\begin{split}
\lim_{\rho \to 0} \lim_{t \to \infty} \frac{1}{L} \sum_{a=-\rho^{-1}}^{L-\rho^{-1}-1} \phi_a(t)  &\le
\lim_{\rho \to 0} \frac{1}{L} \sum_{a=-\rho^{-1}}^{L-\rho^{-1}-1} \phi_a(t_2)\\
&\le \lim_{\rho \to 0} \frac{1}{L} \sum_{a=-\rho^{-1}}^{L-\rho^{-1}-1} (|\phi_a(t_2) - \phi^*| + \phi^*)\\
&\le \frac{1}{\ell} \int_{-1}^{\ell-1} |\phi(x;t_2) - \phi^*| \de x + \phi^* \\
&\le \frac{\ve}{2} + \phi^*. 
\end{split}
\end{eqnarray}


\noindent Finally,
\begin{eqnarray}\label{eqn:final}
\begin{split}
\lim_{t \to \infty} \sum_{a = -\rho^{-1}}^{L +\rho^{-1}-1} \phi_a(t) & \le \frac{2\rho^{-1}}{L} \Phi_M + \frac{\ve}{2} + \phi^*\\
&\le \frac{2\rho^{-1} }{L_*} \Phi_M + \frac{\ve}{2} + 2\sigma_0.
\end{split}
\end{eqnarray}
\noindent Clearly, by choosing $L_*$ large enough and $\sigma_0$ sufficiently small, we can ensure that the right hand side of Eq.~\eqref{eqn:final} is less than $\ve$.
}\end{proof} 

\begin{proof}[Proof (Part(b))]{
Consider the following two cases.
\begin{itemize}
\item $\sigma \le \sigma_0$:  In this case, proceeding along the same lines as the proof of Part $(a)$, and using Lemma~\ref{lem:main_continuum2} in lieu of Lemma~\ref{lem:main_continuum}, we have
\begin{align}
\lim_{t \to \infty} \frac{1}{L} \sum_{a=-\rho^{-1}}^{L-\rho^{-1}-1} \phi_a(t) &\le C \sigma^2 + \phi^* 
\le \left(\frac{2\delta}{(1-\alpha)(\delta - \uMMSE(p_X))} + 1+ \frac{2}{\delta \L0} \right) \sigma^2\,.\label{eqn:C-limit}
\end{align}

\item $\sigma > \sigma_0$: Since $\phi_a(t) \le \sigma^2 + (1/\delta) {\rm Var}(X)$ for any $t >0$, we have
\begin{eqnarray}
\lim_{t \to \infty} \frac{1}{L} \sum_{a=-\rho^{-1}}^{L-\rho^{-1}-1} \phi_a(t) \le \sigma^2 + \frac{1}{\delta} {\rm Var}(X).
\end{eqnarray}
\end{itemize}
Choosing 
\[
C = \max\Big\{\frac{2\delta}{(1-\alpha)(\delta - \uMMSE(p_X))} + 1+ \frac{2}{\delta \L0},\, 1+ \frac{{\rm Var}(X)}{\delta \sigma_0^2}\Big\}\,,
\]
 proves the claim in both cases. 
 
 Finally, in the asymptotic case where $\ell = L\rho \to \infty$, $\rho \to 0$, $\L0 \to \infty$, we have $\alpha \to 0$ and using Eq.~\eqref{eqn:C-limit}, we get
 \[
 \lim_{\sigma \to 0} \lim_{t\to \infty} \frac{1}{\sigma^2\, L} \sum_{a=-\rho^{-1}}^{L-\rho^{-1}-1} \phi_a(t) \le 
 \frac{3\delta -  \uMMSE(p_X)}{\delta - \uMMSE(p_X)} \,.
 \]
}\end{proof}

\section*{Acknowledgements}

A.M. would like to thank Florent Krzakala, Marc M\'ezard,
Fran\c{c}ois Sausset, Yifan Sun and Lenka Zdeborov\'a for a
stimulating exchange about their results.
A.J. is supported by a Caroline and Fabian Pease Stanford
Graduate Fellowship. This work was partially supported by the NSF
CAREER award CCF- 0743978, the NSF grant DMS-0806211, and the AFOSR
grant FA9550-10-1-0360.

\newpage
\appendix
\section{Dependence of the algorithm on the prior $p_X$}
\label{app:Prior}

In this appendix we briefly discuss the impact of a wrong estimation of the prior $p_X$ on the AMP algorithm.
Namely, suppose that instead of the true prior $p_X$, we have an approximation of $p_X$
denoted by $\ptx$. The only change in the algorithm is
in the posterior expectation denoiser. That is to say, the denoiser $\eta$ in Eq.~(\ref{eq:AMP1})
will be replaced by a new denoiser $\teta$. We will quantify the
discrepancy  between $p_X$ and $\ptx$ through their Kolmogorov-Smirnov distance
$\Dks (\px,\ptx)$. Denoting by $F_X(z) = p_X((-\infty,z])$ and
$F_{\tX}(z) = \ptx((-\infty,z])$ the corresponding distribution
functions, we have 
\begin{eqnarray*}
\Dks (\px,\ptx) =\sup_{z\in \reals} \,
\big|F_X(z)-F_{\tX}(z)\big|\,. 
\end{eqnarray*}
The next lemma 
establishes a bound on the pointwise distance between $\eta$ and $\teta$
in terms of $\Dks (\px,\ptx)$.

Note that state evolution (\ref{eq:ExplicitSE}) applies also to the
algorithm with the mismatched denoiser, provided the
$\mmse(\,\cdot\,)$ function is replaced by the mean square error for
the non-optimal denoiser $\teta$. Hence the bound on
$|\eta(y)-\teta(y)|$ given below can be translated into a bound on
the performance of AMP with the mismatched prior.
 A full study of this issue goes beyond the scope of this paper and will be the object of a forthcoming
publication.

For the sake of simplicity we shall assume that $\px, \ptx$ have
bounded supports. The general case requires a more careful consideration.
\begin{lemma}
Let $\eta : \reals \to \reals$ be the Bayes optimal estimator for
estimating $X\sim \px$ in Gaussian noise
$\eta(y) = \E(X| X+ Z = y)$, with $Z\sim\normal(0,1)$. Define denoiser $\teta$ similarly, with 
respect to $\ptx$. Assume that $\px$ is supported in $[-M,M]$.
%
Then for any $\ptx$ supported in $[-M,M]$, we have
\begin{eqnarray*}
%
|\eta(y) - \teta(y)| \le \,\frac{M(15+10 M|y|)}{\E\{e^{-X^2/2}\}}\, \Dks(\px,\ptx)\, e^{2M|y|}\,.
\end{eqnarray*}
\end{lemma}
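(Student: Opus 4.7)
The plan is to write both denoisers as ratios of integrals against $p_X$ and $p_{\widetilde{X}}$, then exploit the compact support $[-M,M]$ together with integration by parts to convert differences of integrals into Kolmogorov--Smirnov distance. Specifically, I will write
\begin{eqnarray*}
\eta(y) \;=\; \frac{\bar{N}(y)}{\bar{D}(y)}\,,\quad \teta(y)\;=\;\frac{\bar{N}'(y)}{\bar{D}'(y)}\,,
\end{eqnarray*}
where, after multiplying numerator and denominator by $e^{y^2/2}$, I set
$\bar{D}(y)=\int e^{xy-x^2/2}\,p_X(dx)$ and $\bar{N}(y)=\int x\,e^{xy-x^2/2}\,p_X(dx)$, with $\bar{D}'$ and $\bar{N}'$ the analogous quantities for $p_{\widetilde{X}}$. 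The reason to renormalize this way is that the integrand $g(x)=e^{xy-x^2/2}$ and $x\,g(x)$ have derivatives that admit clean bounds on $[-M,M]$ in terms of $|y|$ and $M$, and the common factor $e^{-y^2/2}$ cancels.

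Next I would use the algebraic identity
\begin{eqnarray*}
\eta(y)-\teta(y)\;=\;\frac{\bar{N}(y)-\bar{N}'(y)}{\bar{D}(y)}\;+\;\teta(y)\,\frac{\bar{D}'(y)-\bar{D}(y)}{\bar{D}(y)}
\end{eqnarray*}
and note $|\teta(y)|\le M$ since $p_{\widetilde{X}}$ is supported in $[-M,M]$. To bound the numerators, I would integrate by parts: for any $C^1$ function $f$,
\begin{eqnarray*}
\Big|\!\int_{-M}^{M}\! f\,d(F_X-F_{\widetilde{X}})\Big|
\;=\;\Big|\!\int_{-M}^{M}\! f'(x)\,(F_X(x)-F_{\widetilde{X}}(x))\,dx\Big|
\;\le\;\Dks(\px,\ptx)\!\int_{-M}^{M}\!|f'(x)|\,dx,
\end{eqnarray*}
where the boundary terms vanish because $F_X(-M^-)=F_{\widetilde{X}}(-M^-)=0$ and $F_X(M)=F_{\widetilde{X}}(M)=1$. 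Applying this with $f(x)=g(x)$ and $f(x)=x\,g(x)$, and using $|g'(x)|\le(|y|+M)g(x)$, $|(xg)'(x)|\le(1+M|y|+M^2)g(x)$, together with the crude bound $g(x)\le e^{|y|M}$ on $[-M,M]$, yields
\begin{eqnarray*}
|\bar{D}-\bar{D}'|\;\lesssim\;M(|y|+M)\,e^{|y|M}\,\Dks,\qquad
|\bar{N}-\bar{N}'|\;\lesssim\;M(1+M|y|+M^2)\,e^{|y|M}\,\Dks.
\end{eqnarray*}

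Finally, I would derive a matching \emph{lower} bound on $\bar{D}$, which is the crux of the argument: since $xy\ge -|y|\cdot|x|\ge -|y|M$ for $|x|\le M$,
\begin{eqnarray*}
\bar{D}(y)\;=\;\int e^{xy}\,e^{-x^2/2}\,p_X(dx)\;\ge\;e^{-|y|M}\,\E\{e^{-X^2/2}\}.
\end{eqnarray*}
Plugging the three estimates into the identity above gives
\begin{eqnarray*}
|\eta(y)-\teta(y)|\;\le\;\frac{|\bar{N}-\bar{N}'|+M\,|\bar{D}-\bar{D}'|}{\bar{D}(y)}\;\lesssim\;\frac{M(1+M|y|+M^2)}{\E\{e^{-X^2/2}\}}\,e^{2M|y|}\,\Dks,
\end{eqnarray*}
which is exactly the claimed form up to absorbing the polynomial constants into the $15+10M|y|$ factor. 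The main obstacle is obtaining the lower bound on $\bar{D}$ with the exponent $e^{-|y|M}$: if instead one used the naive bound $\bar{D}\ge \phi(y)\cdot\text{const}$, one would incur a Gaussian $e^{y^2/2}$ factor that grows faster than the stated $e^{2M|y|}$. The support hypothesis is what makes $e^{-|y|M}$ available, and one sees that this step is where the bounded-support assumption is essential (and why the lemma warns that the general case would require more care).
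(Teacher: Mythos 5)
Your overall strategy---writing both denoisers as ratios of integrals, integrating by parts to convert differences of integrals against $p_X$ versus $p_{\tX}$ into $\Dks(\px,\ptx)$, and lower-bounding the denominator using the compact support---is sound and is a genuinely different bookkeeping from the paper's. The paper first changes measure to $p_W(\de x)\propto e^{-x^2/2}p_X(\de x)$, shows $\Dks(p_W,p_{\tW})\le 5\Dks(\px,\ptx)/\E\{e^{-X^2/2}\}$ by a separate (and slightly fiddly) estimate on the ratio of distribution functions, and only then integrates by parts against $F_W-F_{\tW}$ with the integrands $e^{yx}$ and $xe^{yx}$. Because the Gaussian weight has already been absorbed into the measure, the derivative of $xe^{yx}$ is $(1+xy)e^{yx}$, producing a prefactor $O(M(1+M|y|))$. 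You keep the weight $e^{-x^2/2}$ inside the integrand and integrate by parts against $F_X-F_{\tX}$ directly, which eliminates the intermediate change-of-measure lemma entirely---a cleaner architecture.

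The difficulty is the final claim that your bound is ``exactly the claimed form up to absorbing the polynomial constants into the $15+10M|y|$ factor.'' It is not: differentiating $f(x)=x\,e^{xy-x^2/2}$ gives $f'(x)=(1+xy-x^2)e^{xy-x^2/2}$, and the extra $-x^2$ term (which the paper's change of measure is precisely engineered to avoid) contributes an additional $M^2$. Tracing your constants gives $|\eta(y)-\teta(y)|\le (2M+4M^2|y|+4M^3)\,e^{2M|y|}\Dks(\px,\ptx)/\E\{e^{-X^2/2}\}$, and the $4M^3$ term is \emph{not} dominated by $M(15+10M|y|)$: already at $|y|=0$ and $M\ge 2$ your prefactor exceeds the stated one. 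So as written the argument proves a bound of the same qualitative form $\mathrm{poly}(M,|y|)\cdot e^{2M|y|}\Dks/\E\{e^{-X^2/2}\}$, which would serve the paper's downstream purposes equally well, but it does not establish the lemma's stated inequality. If you want to match it exactly, you would either need to adopt the paper's $p_W$ trick or sharpen the bound on $\int_{-M}^{M}|x^2 g(x)|\,\de x$ by exploiting the Gaussian decay of $g(x)=e^{xy-x^2/2}$ instead of the crude $g\le e^{M|y|}$.
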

\begin{proof}
Throughout the proof we let $\Delta \equiv \Dks(\px,\ptx)$, and $\Delta_1 \equiv \E\{e^{-X^2/2}\}$.

Let $\gamma(z) = \exp(-z^2/2)/\sqrt{2\pi}$ be the Gaussian density. We
then have $\eta(y) = \E\{X\gamma(X-y)\}/\E\{\gamma(X-y)\}$.
Let $p_W$ be the probability measure with Radon-Nikodym derivative
with respect to $p_X$ given by 
\begin{align*}
\frac{\de p_W}{\de p_X}(x) = \frac{e^{-x^2/2}}{\E\{e^{-X^2/2}\}}\, .
\end{align*}
We define $p_{\tW}$ analogously from the measure $\ptx$ and let $W,\tW$
be two random variables with law $p_W$ and $p_{\tW}$, respectively. We then have
\begin{align}
\eta(y) = \frac{\E\{We^{yW}\}}{\E\{e^{yW}\}} \, .\label{eq:EtaFormula}
\end{align}
Letting
$F_{W}$, $F_{\tW}$ denote the corresponding distribution functions, we
have
\begin{align*}
F_W(x)=
\int_{-\infty}^x \de p_W(w) = \frac{\int_{-\infty}^x e^{-z^2/2}\, \de p_X(z)}{\E\{e^{-X^2/2}\}}
= \frac{e^{-x^2/2}F_X(x)+\int_{-\infty}^{x}ze^{-z^2/2}F_X(z)\,\de
  z}{\int_{-\infty}^{\infty}ze^{-z^2/2}F_X(z)\,\de z}\, .
\end{align*}
Letting $N_W(x)$ be the numerator in this expression, we have
\begin{align*}
\big|N_W(x)-N_{\tW}(x)\big|\le \big|F_X(x)-F_{\tX}(x)\big|+
\int_{-\infty}^{x} |z|\,e^{-z^2/2}\big|F_X(z)-F_\tX(z)\big|\, \de z \le
3\Delta\, .
\end{align*}
Proceeding analogously for the denominator, we have
\begin{align*}
\big|\E\{e^{-X^2/2}\} - \E\{e^{-\tX^2/2}\}\big| \le 
\int_{-\infty}^{\infty} |z|\,e^{-z^2/2}\big|F_X(z)-F_\tX(z)\big|\, \de z \le
2\Delta\, .
\end{align*}
Combining these bounds, we obtain
\begin{eqnarray}
\begin{split}\label{eqn:frac-trick}
 \big|F_W(x)-F_{\tW}(x)\big| & = \Big|\frac{N_W(x)}{\E\{e^{-X^2/2}\}} - \frac{N_{\tW}(x)}{\E\{e^{-\tX^2/2}\}} \Big|\\
 &\le \Big|\frac{N_W(x) - N_{\tW}(x)}{\E\{e^{-X^2/2}\}} \Big| + \Big| N_{\tW}(x) \Big(\frac{1}{\E\{e^{-X^2/2}\}}  - \frac{1}{\E\{e^{-\tX^2/2}\}} \Big)\Big|\\
 & = \frac{\big|N_W(x) - N_{\tW}(x)\big|}{\E\{e^{-X^2/2}\}} + F_{\tW}(x)\, \frac{\big|\E\{e^{-X^2/2}\} - \E\{e^{-\tX^2/2}\}\big|}{\E\{e^{-X^2/2}\}}\\
 &\le \frac{3\Delta}{\Delta_1} + \frac{2\Delta}{\Delta_1} = \frac{5\Delta}{\Delta_1}\,.
\end{split}
\end{eqnarray}
Since, the above inequality holds for any $x \in \reals$, we get
\begin{align}\label{eqn:bound-DSKW}
\Dks(p_W,p_{\tW})\le \frac{5\Delta}{\Delta_1}\, .
\end{align}
Consider now Eq.~(\ref{eq:EtaFormula}). We have
\begin{align*}
\big|\E\{e^{yW}\}-\E\{e^{y\tW}\}\big| &= |y|\,\int
e^{yx}\big|F_W(x)-F_{\tW}(x)\big|\,\de x\\
&\le |y| \Dks(p_W,p_{\tW}) \int_{-M}^M e^{yx} \, \de x
\le e^{M|y|} \Dks(p_W,p_{\tW})\,.
\end{align*}
We proceed analogously for the numerator, namely,
\begin{align*}
\big|\E\{We^{yW}\}-\E\{\tW e^{y\tW}\}\big| &= \int(1+|yx|)
e^{yx}\big|F_W(x)-F_{\tW}(x)\big|\,\de x\\
& \le \Dks(p_W,p_{\tW}) \int_{-M}^M (1+|yx|)e^{yx}\,\de x
\le 2M(1+ M |y|)e^{M|y|}
\Dks(p_W,p_{\tW})\, .
\end{align*}
Combining these bounds and proceeding along similar lines to Eq.~\eqref{eqn:frac-trick}, we obtain
%
\begin{align}\label{eqn:eta-diff-bound}
\big|\teta(y)-\eta(y)\big| \le \frac{2M(1+M|y|) + \teta(y)}{\E\{e^{yW}\}}\, e^{M|y|}\, \Dks(p_W,p_{\tW})\,.
\end{align}
Note that $\teta(y)\in [-M,M]$ since $\ptx$ is supported on $[-M,M]$, and thus $|\teta(y)| \le M$. Also, $p_W$ is supported on $[-M,M]$ since it is absolutely continuous with respect to $p_X$ and $p_X$ is supported on $[-M,M]$. Therefore, $\E\{e^{yW}\} \ge e^{-M|y|}$. Using these bounds in Eq.~\eqref{eqn:eta-diff-bound}, we obtain
\begin{align}
\big|\teta(y)-\eta(y)\big| \le {M(3+2M|y|)}\, e^{2M|y|}\, \Dks(p_W,p_{\tW})\,.
\end{align}
The result follows by plugging in the bound given by Eq.~\eqref{eqn:bound-DSKW}. 
\end{proof}

\section{Lipschitz continuity of AMP}
\label{app:Lipschitz}
Let $x^t$ be the Bayes optimal AMP estimation at iteration $t$ as
given by Eqs.~\eqref{eq:AMP1},~\eqref{eq:AMP2}. We show that for each fixed
iteration number $t$, the mapping $y \to x^t(y)$ is locally Lipschitz
continuous.
\begin{lemma}\label{app:lem-lip}
For any $R,B>0$, $t\in\naturals$,  there exists $L=L(R,B;t)<\infty$ such
that  for any  $y, \ty\in\reals^m$ with $\|y\|, \|\ty\| \le R$, and
any matrix $A$ with $\|A\|_2\le B$ we have
\begin{eqnarray}
\|x^t(y)-x^t(\ty)\|\le L\, \|y-\ty\|\, .
\end{eqnarray}
\end{lemma}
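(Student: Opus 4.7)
The plan is to argue by induction on $t$, establishing at each step the local Lipschitz continuity (and boundedness on the compact set $\{(y,A):\|y\|\le R,\,\|A\|_2\le B\}$) of the whole triple $(x^t,r^{t-1},\ons^t)$, since the AMP recursion couples all three quantities. Concretely, the inductive statement is: there exist constants $L_t,M_t$ (depending on $R,B,t,W,p_X$) such that for all admissible $y,\ty,A$ one has
\[
\|x^t(y)-x^t(\ty)\|+\|r^{t-1}(y)-r^{t-1}(\ty)\|+\|\ons^t(y)-\ons^t(\ty)\|\le L_t\|y-\ty\|,
\]
and all three quantities are bounded in norm by $M_t$. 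The base case is immediate from the initialization: $x^1_i\equiv\E\{X\}$ is constant, and the natural convention $r^0\equiv 0$, $\ons^1\equiv 0$ makes the statement trivial with $L_1=0$.

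The analytic input used throughout is the regularity of the Bayes optimal denoiser. By Tweedie's formula, writing $s=s_{\gr(i)}(t)$, $\eta_{t,i}(v)=v+s^{-1}(\log f_{t,i})'(v)$, where $f_{t,i}$ is the density of $X+s^{-1/2}Z$ with $X\sim p_X$, $Z\sim\normal(0,1)$. Since $f_{t,i}$ is a Gaussian convolution of $p_X$, it is strictly positive and $C^\infty$ on $\reals$, hence so is $\eta_{t,i}$. The identity $\eta_{t,i}'(v)=s\cdot{\rm Var}(X\mid X+s^{-1/2}Z=v)$, combined with the trivial bound ${\rm Var}(X\mid \cdot)\le\E\{X^2\}<\infty$ from the converging-sequence hypothesis, yields a uniform bound $0\le\eta_{t,i}'(v)\le s\,\E\{X^2\}$; hence $\eta_{t,i}$ is globally Lipschitz with linear growth. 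Moreover, as a $C^\infty$ function, $\eta_{t,i}'$ is itself locally Lipschitz on any bounded interval, with a constant one can track.

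For the inductive step, suppose the hypothesis holds through iteration $t$. Since $Q^t$ is deterministic and block-constant (it depends only on $W$ and the precomputed profile $\phi(t)$), the linear operator $(Q^t\odot A)^*$ has operator norm bounded by some $C_t(B)$. First, $r^t(y)=y-Ax^t(y)+\ons^t(y)\odot r^{t-1}(y)$ is Lipschitz in $y$ as a sum of the identity, the Lipschitz map $Ax^t$, and the Hadamard product of two bounded Lipschitz maps; it is also uniformly bounded. Next, $x^{t+1}(y)=\eta_t\bigl(x^t(y)+(Q^t\odot A)^*r^t(y)\bigr)$ is Lipschitz because its argument is Lipschitz in $y$ and each $\eta_{t,i}$ is globally Lipschitz. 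Finally, observe that $\<\eta'_t\>_u$ averages values of $\eta'_{t,i}$ at the arguments $x^t_i+((Q^t\odot A)^*r^t)_i$, which by the previous two bullets remain in a bounded set and are Lipschitz in $y$; local Lipschitz continuity of each $\eta'_{t,i}$ then propagates, so $\<\eta'_t\>_u$, and hence $\ons^{t+1}(y)$, is Lipschitz in $y$. This closes the induction.

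The only non-mechanical point is the regularity of $\eta'_t$, needed in order to handle the Onsager term $\ons^{t+1}$: mere Lipschitzness of $\eta_t$ would not suffice. This is however furnished for free by Tweedie's formula together with the smoothness of the Gaussian convolution, regardless of whether $p_X$ itself is smooth. What remains is entirely mechanical bookkeeping: tracking the explicit constants $L_t$ and $M_t$ through the recursion using the operator-norm bounds on $A$ and $(Q^t\odot A)^*$, the growth bounds on $\eta_t$, and the local Lipschitz bounds on $\eta'_t$ obtained from its $C^\infty$ regularity.
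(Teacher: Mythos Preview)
Your approach is essentially the paper's: induct on $t$, carrying both a Lipschitz constant and a norm bound for the triple $(x^t,r^{t-1},\ons^t)$ through the recursion, and use regularity of $\eta_t$ and $\eta_t'$ to close the step. The paper organizes this slightly differently (it tracks $\xi_t=\max(\|x^t-\tx^t\|,\|r^t-\tr^t\|,\|y-\ty\|)$ and proves a separate claim bounding $\|r^t\|$), but the content is the same. Your explicit appeal to Tweedie's formula and Gaussian-convolution smoothness is a nice justification for the regularity of $\eta_t$ and $\eta_t'$; the paper simply asserts ``$\eta$ is smooth with bounded derivative'' and ``$\eta'$ is Lipschitz'' without argument.

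One concrete slip: the pointwise bound ${\rm Var}(X\mid Y=y)\le\E\{X^2\}$ you invoke is false in general. For instance, with $p_X=(1-\ve)\delta_0+\ve\,\delta_M$ and $\ve<1/4$, at the value of $y$ where the posterior puts mass $1/2$ on each atom the conditional variance equals $M^2/4>\ve M^2=\E\{X^2\}$. So your claim that $\eta_{t,i}$ is \emph{globally} Lipschitz via this route fails. This does not damage your proof, however: since you simultaneously track boundedness of all iterates, the argument of $\eta_t$ remains in a fixed compact set, and the \emph{local} Lipschitz continuity you already have from $C^\infty$ smoothness suffices. Just replace ``globally Lipschitz'' by ``Lipschitz on the relevant bounded interval'' and the induction closes exactly as you wrote.
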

Note that in the statement we assume $\|A\|_2$ to be finite. This
happens as long as the entries of $A$ are bounded and hence almost
surely within our setting. 

Also, we assume $\|y\|$, $\|\ty\|\le R$ for some fixed $R$.
In  other words, we prove that the algorithm is locally Lipschitz. 
We can obtain an algorithm that is globally Lipschitz 
by defining $x^t(y)$ via the AMP iteration for $\|y\|\le R$, and by an
arbitrary bounded Lipschitz extension for $\|y\|\ge R$.
Notice that $\|y\|\le B\|x\|+\|w\|$, and, by the law of large numbers, $\|x\|^2\le (\E\{X^2\}+\eps)n$, $\|w\|^2\le
(\sigma^2+\eps)m$ with probability converging to $1$. 
Hence, the globally Lipschitz modification of AMP achieves the same
performance as the original AMP, almost surely. (Note that $R$ can depend on $n$).

\begin{proof}[Proof (Lemma~\ref{app:lem-lip})]
Suppose that we have two measurement vectors $y$ and $\ty$.
Note that the state evolution is completely characterized in terms of 
prior $\px$ and noise variance $\sigma^2$, and can be precomputed
(independent of measurement vector).

Let $(x^t, r^{t})$ correspond to the AMP with measurement vector $y$ and
$(\tx^t, \tr^{t})$ correspond to the AMP with measurement vector $\ty$. (To clarify, note that $x^t \equiv x^t(y)$ and $\tx^t \equiv x^t(\ty)$).
Further define
\begin{eqnarray*}
\xi_t = \max(\|x^{t} - \tx^{t}\|, \|r^{t} - \tr^{t}\|, \|y - \ty\|)\,.
\end{eqnarray*}
We show that 
\begin{eqnarray}\label{eqn:xi}
\xi_t \le C_{t} (1+ \|y\|)\, \xi_{t-1}\,,
\end{eqnarray}
for a constant $C_t$. This establishes the claim since
\begin{eqnarray*}
\|x^{t} - \tx^{t}\| \le \xi_t \le C_{t} C_{t-1}\dotsc C_2\, (1+\|y\|)^{t-1} \xi_{1} = C_t C_{t-1} \dotsc C_2\, (1+\|y\|)^{t-1} \|y - \ty\|\,,
\end{eqnarray*}
where the last step holds since $x_i^1 = \tx_i^1 = \E\{X\}$ and $r^1 - \tr^1 = y - \ty$.

In order to prove Eq.~\eqref{eqn:xi}, we need to prove the following two claims.
\begin{claim}\label{claim:bound-r}
For any fixed iteration number $t$, there exists a constant $C_t$, such that 
\[
\|r^t\| \le C_t \max(\|x^1\|, \|y\|)\,.
\]
\end{claim}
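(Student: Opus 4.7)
The plan is to proceed by induction on $t$, simultaneously bounding $\|x^t\|$ and $\|r^t\|$ in terms of $G \equiv \max(\|x^1\|, \|y\|)$, showing that $\|x^t\|, \|r^t\| \le C_t\,G$ for constants $C_t$ depending only on $p_X, \sigma^2, W$, and $B$.

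Three uniform estimates underlie each step of the induction. First, the scalar Bayes denoiser $\eta_{t,i}(v) = \E\bigl[X \mid X + s_{\gr(i)}(t)^{-1/2} Z = v\bigr]$ is globally Lipschitz with a constant $L_t = L_t(p_X, W, \sigma^2)$: its derivative equals $s_{\gr(i)}(t) \cdot {\rm Var}\bigl(X \mid X + s_{\gr(i)}(t)^{-1/2} Z = v\bigr)$, which is uniformly bounded since the posterior variance is at most ${\rm Var}(X)$ and $s_{\gr(i)}(t) = \sum_b W_{b,\gr(i)} \phi_b(t)^{-1} \le 2\sigma^{-2}$ (using that $W$ is roughly row-stochastic and $\phi_b(t) \ge \sigma^2$ from \eqref{eq:Mapb}). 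Consequently $\|\eta_t(v)\| \le \|\eta_t(0)\| + L_t\|v\|$, with $|\eta_{t,i}(0)|$ bounded by a constant $M = M(p_X, \sigma^2, W)$. Second, by \eqref{eq:Q_def} the entries $Q^t_{ij}$ are uniformly bounded by some $C_{Q,t}$, so $\|Q^t \odot A\|_2 \le C_{Q,t} B$. Third, $\|\ons^t\|_\infty \le C_{\ons, t}$, since $\ons^t$ is a weighted sum of the bounded entries of $\tQ^{t-1}$ and the uniformly bounded averaged derivatives $\langle \eta'_{t-1}\rangle_u$.

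The base case $t = 1$ is immediate: $\|r^1\| = \|y - Ax^1\| \le \|y\| + B\|x^1\| \le (1+B)G$. For the inductive step, assuming $\|x^s\|, \|r^s\| \le C_s G$ for all $s \le t$, the three estimates combine as
\begin{eqnarray*}
\|x^{t+1}\| &\le& \|\eta_t(0)\| + L_t\bigl(\|x^t\| + C_{Q,t} B\|r^t\|\bigr) \le C'_{t+1} G, \\
\|r^{t+1}\| &\le& \|y\| + B\|x^{t+1}\| + C_{\ons, t+1}\|r^t\| \le C_{t+1} G,
\end{eqnarray*}
closing the induction.

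The main obstacle is controlling the constant term $\|\eta_t(0)\| \le M\sqrt n$ by a multiple of $\|x^1\| = \sqrt n\,|\E X|$. This is automatic whenever $\E X \neq 0$; in the symmetric case $\E X = 0$ one has $\eta_{t,i}(0) = 0$ by symmetry of the posterior tilted measure. In the residual case of an asymmetric prior with vanishing mean, the inductive invariant must be slightly relaxed to $\|r^t\| \le C_t G + C_t'$ with $C_t'$ permitted to depend on $n$; this looser form still suffices for the downstream application in Lemma~\ref{app:lem-lip}, whose Lipschitz constant $L(R,B;t)$ is already allowed to depend on the problem dimensions.
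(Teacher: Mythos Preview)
Your approach is essentially the same as the paper's: both run an induction bounding $\|x^t\|$ and $\|r^t\|$ simultaneously via the three ingredients you list (bounded $\|A\|_2$ and $\|Q^t\odot A\|_2$, bounded $\|\ons^t\|_\infty$, Lipschitz $\eta_t$), packaged in the paper as $\lambda_t=\max(\|x^{t+1}\|,\|r^t\|,\|y\|)$ with $\lambda_t\le C'_t\lambda_{t-1}$. You are in fact more careful than the paper on one point: the paper writes $\|\eta_t(v)\|\le C\|v\|$ ``using Lipschitz continuity,'' which tacitly requires $\eta_t(0)=0$; you correctly isolate the offset $\|\eta_t(0)\|$ and observe that it is harmless for the downstream Lipschitz statement (where constants may depend on $n$ through $R$).

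Two minor slips in your justifications, neither structural. First, the pointwise inequality $\mathrm{Var}(X\mid Y=v)\le\mathrm{Var}(X)$ is not true in general; only its average over $v$ is bounded by $\mathrm{Var}(X)$. The boundedness of $\eta_{t,i}'$ holds but needs a different argument (the paper simply asserts it). Second, ``roughly row-stochastic'' bounds $\sum_c W_{r,c}$, not the column sums $\sum_b W_{b,i}$ that appear in $s_{\gr(i)}(t)$; for the specific spatially coupled $W$ of Section~\ref{sec:Choices} the column sums are indeed bounded (by roughly $L_0+2$), which is what you actually need.
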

\begin{proof}[Proof (Claim~\ref{claim:bound-r})]{
Define $\lambda_t = \max(\|x^{t+1}\|, \|r^t\|, \|y\|)$. Then, 
\[
\|r^t\| \le \|y\| + \|A\|_2 \|x^t\| + \|\ons^{t}\|_{\infty} \|r^{t-1}\|.
\]
Note that $A$ has bounded operator by assumption. Also, the posterior mean $\eta$ is a smooth function with bounded derivative. Therefore, recalling the definition of $\ons^{t}$,
\[
\ons^{t} \equiv \frac{1}{\delta} \sum_{u\in \Cols} W_{\gr(i),u} \tQ^{t-1}_{\gr(i),u} \<\eta'_{t-1}\>_u \,,
\]
we have $\|\ons^t\|_{\infty} \le C_{1,t}$ for some constant $C_{1,t}$. Hence,
$\|r^t\| \le C_{2,t} \lambda_{t-1}$.
Moreover,
\begin{align*}
\|x^{t+1}\|  = \|\eta_t(x^t + (Q^t\odot A)^* r^t)\| \le C (\|x^t \| + \|Q^t\odot A\|_2 \|r^t\|)
\le C_{3,t} \max(\|x^t\|, \|r^t\|)\,,
\end{align*}
for some constant $C_{3,t}$. In the first inequality, we used the fact that $\eta$ is Lipschitz continuous. Therefore, $\lambda_t \le C'_t \lambda_{t-1}$, where $C'_t = \max(1, C_{2,t}, C_{3,t}, C_{2,t}\, C_{3,t})$, and
\[
\|r^t\| \le \lambda_t \le C'_t \cdots C'_{1} \lambda_0 \le C'_t \cdots C'_1 \max(\|x^1\|,\|y\|),
\]
with $x^1_i = \E\{X\}$, for $i \in [n]$.
}\end{proof}

\begin{claim}\label{claim:ons-diff-bound}
For any fixed iteration number $t$, there exists a constant $C_t$, such that 
\[
\|\ons^{t} \odot r^{t-1} - \tilde{\ons}^{t} \odot \tr^{t-1}\| \le C_t (1+\|y\|) \max(\|x^{t-1} - \tx^{t-1}\|,\|r^{t-1} - \tr^{t-1}\|)\,.
\]
\end{claim}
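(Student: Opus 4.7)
The plan is to decompose the Hadamard product difference additively and bound each piece using Lipschitz properties of the posterior mean denoiser. Specifically, write
\[
\ons^{t}\odot r^{t-1} - \tilde{\ons}^{t}\odot \tr^{t-1}
= \ons^{t}\odot (r^{t-1}-\tr^{t-1}) + (\ons^{t}-\tilde{\ons}^{t})\odot \tr^{t-1},
\]
so by the triangle inequality it suffices to bound $\|\ons^t\|_\infty$, $\|\ons^t-\tilde{\ons}^t\|_\infty$, and $\|\tr^{t-1}\|$ separately.

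For the first factor, recall $\ons^t_i = \delta^{-1}\sum_u W_{\gr(i),u}\tQ^{t-1}_{\gr(i),u}\,\<\eta'_{t-1}\>_u$. The precomputed matrix $\tQ^{t-1}$ is block-constant and independent of $y$; by Eq.~\eqref{eq:Q_def} and the uniform positive bounds on $\phi(t)$ from Lemmas~\ref{lemma:UB_SE}--\ref{lemma:LB_SE}, its entries are uniformly bounded. Moreover, for each $i$ the denoiser $\eta_{t-1,i}$ is the conditional expectation of $X\sim p_X$ in Gaussian noise (cf.\ Eq.~\eqref{eq:eta_def2}); standard properties of Gaussian channels give $0\le \eta'_{t-1,i}(\cdot)\le C$ and moreover $\eta'_{t-1,i}$ is globally Lipschitz with a constant depending only on $p_X$ and $s_{\gr(i)}(t-1)$ (which is bounded above and below). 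Combined with $W$ being roughly row-stochastic (Eq.~\eqref{eqn:almost_row_stochastic}), this yields $\|\ons^t\|_\infty\le C_t$, giving the bound on the first term $\|\ons^t\odot (r^{t-1}-\tr^{t-1})\|\le C_t \|r^{t-1}-\tr^{t-1}\|$.

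For the second factor, the same formula gives
\[
|\ons^t_i - \tilde{\ons}^t_i|
\le \frac{C}{\delta}\max_u |\<\eta'_{t-1}\>_u - \<\tilde{\eta}'_{t-1}\>_u|,
\]
and by the Lipschitz continuity of $\eta'_{t-1,i}$ just noted,
\[
|\<\eta'_{t-1}\>_u - \<\tilde{\eta}'_{t-1}\>_u|
\le \frac{C}{N}\sum_{i\in C(u)}|y^{t-1}_i-\ty^{t-1}_i|,
\qquad y^{t-1} \equiv x^{t-1}+(Q^{t-1}\odot A)^*r^{t-1}.
\]
Since $\|Q^{t-1}\odot A\|_2\le \|Q^{t-1}\|_\infty\|A\|_2 \le C_t B$ (using that $Q^{t-1}$ is block-constant with bounded entries, and the assumption $\|A\|_2\le B$), we obtain
\[
\|y^{t-1}-\ty^{t-1}\|\le \|x^{t-1}-\tx^{t-1}\|+C_t B\,\|r^{t-1}-\tr^{t-1}\|,
\]
and therefore $\|\ons^t-\tilde{\ons}^t\|_\infty\le C_t\max(\|x^{t-1}-\tx^{t-1}\|,\|r^{t-1}-\tr^{t-1}\|)$.

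Finally, Claim~\ref{claim:bound-r} applied to the instance with measurement vector $\ty$ gives $\|\tr^{t-1}\|\le C_{t-1}\max(\|x^1\|,\|\ty\|)$; using $\|\ty\|\le \|y\|+\|y-\ty\|$ and absorbing $\|x^1\|=\sqrt{n}\,|\E X|$ together with $\|y-\ty\|\le R$ into the constant, this yields $\|\tr^{t-1}\|\le C_t(1+\|y\|)$. Combining the two pieces produces the claimed bound. The only nontrivial technical point is the Lipschitz continuity (with a uniform constant) of $\eta'_{t-1,i}$; this follows from the Gaussian mollification inherent in the definition of $\eta_{t-1,i}$, which implies $\eta_{t-1,i}\in C^\infty$ with all derivatives bounded in terms of moments of $p_X$ and the bounded SNR parameter $s_{\gr(i)}(t-1)$, and is the main obstacle one must verify carefully.
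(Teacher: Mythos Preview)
Your proposal is correct and follows essentially the same approach as the paper: split the Hadamard-product difference via the telescoping identity, bound $\|\ons^t\|_\infty$ (resp.\ $\|\tilde{\ons}^t\|_\infty$) using boundedness of $\eta'$, bound $\|\ons^t-\tilde{\ons}^t\|_\infty$ using Lipschitz continuity of $\eta'$ together with the bounded operator norm of $Q^{t-1}\odot A$, and bound the residual norm via Claim~\ref{claim:bound-r}. The only cosmetic difference is that the paper splits as $(\ons^t-\tilde{\ons}^t)\odot r^{t-1}+\tilde{\ons}^t\odot(r^{t-1}-\tr^{t-1})$, so it invokes Claim~\ref{claim:bound-r} directly on $r^{t-1}$ rather than on $\tr^{t-1}$; your symmetric variant is equally valid.
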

\begin{proof}[Proof (Claim~\ref{claim:ons-diff-bound})]{
Using triangle inequality, we have
\begin{align}\label{eqn:tri-onsager}
\|\ons^{t} \odot r^{t-1} - \tilde{\ons}^{t} \odot \tr^{t-1}\|
\le \|(\ons^{t} - \tilde{\ons}^{t}) \odot r^{t-1}\|
+\|\tilde{\ons}^{t} \odot (r^{t-1} - \tr^{t-1})\|\,.
\end{align}
Since $\eta'$ is Lipschitz continuous, we have
\[
\|\ons^{t} - \tilde{\ons}^{t}\| \le C_{1,t} (\|x^{t-1} - \tx^{t-1}\| + \|r^{t-1} - \tr^{t-1}\|)\,,
\]
for some constant $C_{1,t}$. Also, as discussed in the proof of Claim~\ref{claim:bound-r}, the Onsager terms $\ons^t$ are
uniformly bounded. Applying these bounds to the right hand side of Eq.~\eqref{eqn:tri-onsager}, we obtain
\begin{align*}
\|\ons^{t} \odot r^{t-1} - \tilde{\ons}^{t} \odot \tr^{t-1}\| &\le C_{1,t}\, (\|x^{t-1} - \tx^{t-1}\| + \|r^{t-1} - \tr^{t-1}\|)\, \|r^{t-1}\| + C_{2,t}\, \|r^{t-1} - \tr^{t-1}\|\\
&\le C_t (1+\|y\|) \max(\|x^{t-1} - \tx^{t-1}\|,\|r^{t-1} - \tr^{t-1}\|)\,,
\end{align*}
for some constants $C_{1,t}, C_{2,t}, C_t$. The last inequality here follows from the bound given in Claim~\ref{claim:bound-r}.
}\end{proof}

Now, we are ready to prove Eq.~\eqref{eqn:xi}. We write
\begin{align}
\|x^{t} - \tx^{t}\| &= \|\eta_{t-1}(x^{t-1} + (Q^{t-1}\odot A)^* r^{t-1}) - \eta_{t-1}(\tx^{t-1} + (Q^{t-1}\odot A)^* \tr^{t-1})\|\nonumber\\
&\le C \left(\|x^{t-1} - \tx^{t-1}\| + \|Q^{t-1}\odot A\|_2 \|r^{t-1}-\tr^{t-1}\| \right) \nonumber \\
&\le C_{1,t} \max(\|x^{t-1} - \tx^{t-1}\|, \|r^{t-1}- \tr^{t-1}\|, \|y-\ty\|) = C_{1,t}\, \xi_{t-1}\,,\label{eq:x}
\end{align}
for some constant $C_{1,t}$. Furthermore,
\begin{align}
\|r^{t} - \tr^{t}\| &\le \|y-\ty\| + \|A\|_2 \|x^t - \tx^t\| + \|\ons^{t} \odot r^{t-1} - \tilde{\ons}^{t} \odot \tr^{t-1}\| \nonumber\\
&\le \|y-\ty\| + \|A\|_2 \,C_{1,t}\, \xi_{t-1} + C'_t (1+\|y\|) \max(\|x^{t-1} - \tx^{t-1}\|,\|r^{t-1} - \tr^{t-1}\|) \nonumber \\
&\le C_{2,t}\,(1+\|y\|)\, \xi_{t-1}\,,\label{eq:r}
\end{align}
for some constant $C_{2,t}$ and using Eq.~\eqref{eq:x} and Claim~\ref{claim:ons-diff-bound} in deriving the second inequality.
Combining Eqs.~\eqref{eq:x} and~\eqref{eq:r}, we obtain
\[
\xi_t \le \max(1, C_{1,t}, C_{2,t})\,(1+\|y\|)\, \xi_{t-1}\,.
\]
\end{proof}
%

\section{Proof of Lemma~\ref{lemma:ContinuumLimit}}
\label{app:ContinuumLimit}
We prove the first claim, Eq.~(\ref{eq:FirstSEApprox}). The second one
follows by a similar argument. The proof uses induction on $t$. It is
a simple exercise to show that the  induction basis ($t=1$) holds (the
calculation follows the same lines as the induction step). Assuming
the claim for $t$, we write,
for $i\in \{0,1,\dots,L-1\}$
\begin{eqnarray}
\begin{split}
|\psi_i(t+1) - \psi(\rho i;t+1) | &=
 \Big| \mmse \Big( \sum_{b \in \Rows_0} W_{b-i}\; [\sigma^2 + \frac{1}{\delta} \sum_{j \in \integers} W_{b-j} \psi_j(t)]^{-1}\Big)\\
 & \quad - \mmse \Big( \int_{-1}^{\ell+1} \Shape(z-\rho i)\; [\sigma^2
 + \frac{1}{\delta} \int_{\reals} \Shape(z- y) \psi(y;t)\de y]^{-1} \de z\Big)
\Big| \\
 &\le \Big| \mmse \Big( \sum_{b \in \Rows_0} W_{b-i}\; [\sigma^2 + \frac{1}{\delta} \sum_{j \in \integers} W_{b-j} \psi_j(t)]^{-1}\Big) \\
 & \quad -\mmse \Big( \sum_{b \in \Rows_0} W_{b-i}\; [\sigma^2 + \frac{1}{\delta} \sum_{j \in \integers} W_{b-j} \psi(\rho j;t)]^{-1}\Big) \Big| \\
&\quad + \Big| \mmse \Big( \sum_{b \in \Rows_0} \rho \Shape(\rho(b-i))\; [\sigma^2 + \frac{1}{\delta} \sum_{j \in \integers} \rho\Shape(\rho(b-j)) \psi(\rho j;t)]^{-1}\Big)\\
&\quad - \mmse \Big( \int_{-1}^{\ell+1} \Shape(z-\rho i)\; [\sigma^2 +
\frac{1}{\delta} \int_{\reals} \Shape(z- y) \psi(y;t) \de y]^{-1} \de z\Big)
\Big|.\label{eq:BigUBMMSE}
\end{split}
\end{eqnarray}
Now, we bound the two terms on the right hand side separately. Note
that the arguments of $\mmse(\,\cdot\,)$ in the above terms are at
most $2/\sigma^2$. Since $\mmse$ has a continuous derivative, there
exists a constant $C$ such that $|\frac{\de}{\de s}\; \mmse(s)| \le C$, for $s \in
[0, 2/\sigma^2]$. Then, considering the first term in the
upper bound (\ref{eq:BigUBMMSE}), we have
\begin{eqnarray}\label{eqn:first_term}
\begin{split}
\Big| \mmse \Big( \sum_{b \in \Rows_0} W_{b-i}\; &[\sigma^2 + \frac{1}{\delta} \sum_{j \in \integers} W_{b-j} \psi_j(t)]^{-1}\Big) 
 -\mmse \Big( \sum_{b \in \Rows_0} W_{b-i}\; [\sigma^2 + \frac{1}{\delta} \sum_{j \in \integers} W_{b-j} \psi(\rho j;t)]^{-1}\Big) \Big|\\
 &\le C \Big|  \sum_{b \in \Rows_0} W_{b-i}\; \Big([\sigma^2 + \frac{1}{\delta} \sum_{j \in \integers} W_{b-j} \psi_j(t)]^{-1}
 -   [\sigma^2 + \frac{1}{\delta} \sum_{j \in \integers} W_{b-j} \psi(\rho j;t)]^{-1} \Big) \Big|\\
 &\le \frac{C}{\sigma^4} \sum_{b \in \Rows_0} W_{b-i}\;  \frac{1}{\delta} \Big|\sum_{j =-\infty}^{L-1} W_{b-j}(\psi(\rho j;t) - \psi_j(t))\Big|\\
 &\le \frac{C}{\delta \sigma^4} \sum_{b \in \Rows_0} W_{b-i}\;  \sum_{j =-\infty}^{L-1} W_{b-j} |\psi(\rho j;t) - \psi_j(t)|\\
 & = \frac{C}{\delta \sigma^4} \sum_{j=0}^{L-1} \Big( \sum_{b \in
   \Rows_0} W_{b-i} W_{b - j} \Big)\; |\psi(\rho j;t) - \psi_j(t)|\\
&\le  \frac{C}{\delta \sigma^4}
\Big( \sum_{i\in\integers} W_{i}^2 \Big)\;
 \sum_{j=0}^{L-1} |\psi(\rho j;t) - \psi_j(t)|\\
 & \le \frac{C'\rho}{\delta \sigma^4} \sum_{j=0}^{L-1} |\psi(\rho j;t) - \psi_j(t)|.
\end{split}
\end{eqnarray}
Here we used $\sum_{i\in\integers}
W_{i}^2=\sum_{i\in\integers }\rho^2\Shape(\rho i)^2\le
C\sum_{|i|\le \rho^{-1}}\rho^2\le C\rho$ (where the first
inequality follows from the fact that $\Shape$ is bounded).

To bound the second term in Eq.~(\ref{eq:BigUBMMSE}), note that
\begin{eqnarray}\label{eqn:second_term}
\begin{split}
&\Big| \mmse \Big( \sum_{b \in \Rows_0} \rho \Shape(\rho(b-i))\; [\sigma^2 + \frac{1}{\delta} \sum_{j \in \integers} \rho\Shape(\rho(b-j)) \psi(\rho j;t)]^{-1}\Big)\\
&\quad \quad \quad \quad \quad- \mmse \Big( \int_{-1}^{\ell+1}
\Shape(z-\rho i)\; [\sigma^2 + \frac{1}{\delta} \int_{\reals}
\Shape(z- y) \psi(y;t) \de y]^{-1} \de z\Big)
\Big|\\
& \le  C \Big| \sum_{b \in \Rows_0} \rho \Shape(\rho(b-i))\; [\sigma^2 + \frac{1}{\delta} \sum_{j \in \integers} \rho\Shape(\rho(b-j)) \psi(\rho j;t)]^{-1}\\
&\quad \quad \quad \quad \quad  - \int_{-1}^{\ell+1} \Shape(z-\rho i
)\; [\sigma^2 + \frac{1}{\delta} \int_{\reals} \Shape(z- y) \psi(y;t)
\de y]^{-1} \de z
\Big|\\
&\le C \Big| \sum_{b \in \Rows_0} \rho \Shape(\rho(b-i))\; [\sigma^2 + \frac{1}{\delta} \sum_{j \in \integers} \rho\Shape(\rho(b-j)) \psi(\rho j;t)]^{-1}\\
&\quad \quad \quad \quad \quad  - \sum_{b \in \Rows_0} \rho
\Shape(\rho(b-i))\; [\sigma^2 + \frac{1}{\delta} \int_{\reals}
\Shape(\rho b- y) \psi(y;t) \de y]^{-1} \Big|\\
&+ C \Big| \sum_{b \in \Rows_0} \rho \Shape(\rho(b-i))\; [\sigma^2 +
\frac{1}{\delta} \int_{\reals} \Shape(\rho b- y) \psi(y;t) \de y]^{-1}
\de z\\
&\quad \quad \quad \quad \quad  - \int_{-1}^{\ell+1} \Shape(z-\rho
i)\; [\sigma^2 + \frac{1}{\delta} \int_{\reals} \Shape(z- y) \psi(y;t)
\de y]^{-1} \de z \Big|\\
&\le \frac{C}{\delta \sigma^4} \sum_{b \in \Rows_0} \rho
\Shape(\rho(b-i)) \Big|\sum_{j \in \integers} \rho F_1(\rho
b;\rho j)-\int_{\reals} F_1(\rho b;y)\de y\Big|\\
& + C \Big|\sum_{b \in \Rows_0} \rho F_2(\rho b) -
\int_{-1}^{\ell+1} F_2(z)\de z \Big|
\end{split}
\end{eqnarray}
where $F_1(x;y) = \Shape(x-y)\psi(y;t)$ and $F_2(z) =  \Shape(z-\rho
i)\; [\sigma^2 + \frac{1}{\delta} \int_{\reals} \Shape(z- y) \psi(y;t)
\de y]^{-1}$.
Since the functions $\Shape(\,\cdot\,)$ and $\psi(\,\cdot\,)$ have
continuous (and thus bounded) derivative on compact interval $[0,\ell]$, the
same is true for $F_1$ and $F_2$. 
Using the standard convergence of Riemann sums to Riemann integrals,
 right hand side of Eq.~\eqref{eqn:second_term} can be bounded by $C_3 \rho / \delta \sigma^4$, for some constant $C_3$. Let $\epsilon_i(t) = |\psi_i(t) - \psi(\rho i;t)|$. Combining Eqs.~\eqref{eqn:first_term} and~\eqref{eqn:second_term}, we get
\begin{eqnarray}
\epsilon_i(t+1) \le \frac{\rho}{\delta \sigma^4} \left(C' \sum_{j=0}^{L-1} \epsilon_j(t) + C_3 \right). 
\end{eqnarray}
Therefore,
\begin{eqnarray}
\frac{1}{L} \sum_{i=0}^{L-1} \epsilon_i(t+1) \le \frac{\ell}{\delta \sigma^4} \left( \frac{C'}{L} \sum_{j=0}^{L-1} \epsilon_j(t) \right) + \frac{C_3 \rho}{\delta \sigma^4}.
\end{eqnarray}
The claims follows from the induction hypothesis.
\section{Proof of Proposition~\ref{propo:V_properties}}
\label{app:V_properties}
 By Eq.~\eqref{eqn: info_dimension}, for any $\ve > 0$, there exists $\phi_0$, such that for $0 \le \phi \le \phi_0$, 
\begin{eqnarray}
\Info(\phi^{-1}) \le \frac{\uRenyi(p_X)+ \ve}{2} \log(\phi^{-1}).
\end{eqnarray}
Therefore,
\begin{eqnarray}
V(\phi) \le \frac{\delta \sigma^2}{2 \phi} + \frac{\delta - \uRenyi(p_X) - \ve}{2} \log \phi.
\end{eqnarray}

Now let $\ve = (\delta - \uRenyi(p_X))/2$ and $\sigma_2 = \sqrt{\phi_0/2}$. Hence, for $\sigma \in (0,\sigma_2]$, we get $\phi^* < 2\sigma^2 \le \phi_0$. Plugging in $\phi^*$ for $\phi$ in the above equation, we get
\begin{eqnarray}
\begin{split}
V(\phi^*) &\le \frac{\delta \sigma^2}{2 \phi^*} + \frac{\delta - \uRenyi(p_X)}{4} \log \phi^* \\
&< \frac{\delta}{2} +   \frac{\delta - \uRenyi(p_X)}{4}  \log(2\sigma^2)\,.
\end{split}
\end{eqnarray}
%

\section{Proof of Claim~\ref{claim:slope_phi}}
\label{app:slope_phi}
Recall that $\kappa < \Phi_M$ and $\phi(x)$ is nondecreasing. Let 
\begin{eqnarray*}
0 < \theta = \frac{\Phi_M - \kappa}{\Phi_M - \frac{\kappa}{2}} < 1.
\end{eqnarray*}
We show that $\phi(\theta \ell - 1) \ge \kappa/2 + \phi^*$. If this is not true, using the nondecreasing property of $\phi(x)$, we obtain
\begin{eqnarray}
\begin{split}
\int_{-1}^{\ell-1} |\phi(x) - \phi^*|\; \de x &= \int_{-1}^{\theta
  \ell-1} |\phi(x) - \phi^*|\; \de x + \int_{\theta \ell-1}^{\ell-1}
|\phi(x) - \phi^*|\; \de x\\
& < \frac{\kappa}{2} \theta \ell  +  \Phi_M (1-\theta) \ell\\
& = \kappa \ell,
\end{split}
\end{eqnarray}
contradicting our assumption. Therefore, $\phi(x) \ge \kappa/2 + \phi^*$, for $\theta \ell - 1 \le x \le \ell-1$. For given $K$, choose $\ell_0 = K/(1-\theta)$. Hence, for $\ell > \ell_0$, interval $[\theta\ell-1,\ell-1)$ has length at least $K$. The result follows.
\section{Proof of Proposition~\ref{propo:sigma_bound}}
\label{app:sigma_bound}
We first establish some properties of function $\varsigma^2(x)$.
\begin{remark}\label{rem:sigma_prop1}
The function $\varsigma^2(x)$ as defined in Eq.~\eqref{eqn:new_sigma}, is
non increasing in $x$. Also, $\varsigma^2(x) = \sigma^2 +
(1/\delta)\;\mmse(\L0/(2\sigma^2))$, for $x\le -1$ and $\varsigma^2(x) =
\sigma^2$, for $x \ge 1$.  For $\delta \L0 >3$, we have $ \sigma^2 \le \varsigma^2(x) < 2 \sigma^2$.
\end{remark}
\begin{remark} \label{rem:sig_lip}
The function $\varsigma^2(x)/\sigma^2$ is Lipschitz continuous. More
specifically, there exists a constant $C$, such that,
$|\varsigma^2(\alpha_1) - \varsigma^2(\alpha_2)| < C \sigma^2 |\alpha_2 -
\alpha_1|$, for any two values $\alpha_1, \alpha_2$. Further, if
$L_0\delta>3$ we can take $C<1$.
\end{remark}
The proof of Remarks~\ref{rem:sigma_prop1} and~\ref{rem:sig_lip} are immediate from Eq.~\eqref{eqn:new_sigma}.

To prove the proposition, we split the integral over the intervals $[-1,-1+a), [-1+a,x_0+a),[x_0+a,x_2),[x_2,\ell-1)$, and bound each one separately.
Firstly, note that
\begin{eqnarray} \label{eqn:sigma_term1}
\int_{x_2}^{\ell-1} \Big \{ \frac{\varsigma^2(x) - \sigma^2}{\phi_a(x)} -
\frac{\varsigma^2(x) - \sigma^2}{\phi(x)}\Big\} \; \de x = 0,
\end{eqnarray}
since $\phi_a(x)$ and $\phi(x)$ are identical  for $x\ge x_2$.

Secondly, let $\alpha = (x_2-x_0)/(x_2-x_0-a)$, and $\beta = (ax_2)/(x_2-x_0-a)$. Then,
\begin{eqnarray}\label{eqn:sigma_term2}
\begin{split}
&\int_{x_0+a}^{x_2} \Big \{ \frac{\varsigma^2(x) - \sigma^2}{\phi_a(x)} -
\frac{\varsigma^2(x) - \sigma^2}{\phi(x)}\Big\} \; \de x\\
&\quad = \int_{x_0}^{x_2} \frac{\varsigma^2(\frac{x+\beta}{\alpha}) -
  \sigma^2}{\phi(x)} \frac{\de x}{\alpha} - 
\int_{x_0+a}^{x_2} \frac{\varsigma^2(x) - \sigma^2}{\phi(x)} \de x\\
& \quad= \int_{x_0}^{x_2} \Big \{ \frac{1}{\alpha} \frac{\varsigma^2(
  \frac{x+\beta}{\alpha}) - \sigma^2}{\phi(x)} - \frac{\varsigma^2(x) -
  \sigma^2}{\phi(x)}\Big\} \; \de x
+ \int_{x_0}^{x_0+a}\frac{\varsigma^2(x) - \sigma^2}{\phi(x)}\; \de x\\
&\quad \stackrel{(a)}{\le} \frac{1}{\sigma^2} \int_{x_0}^{x_2} \Big|
\frac{1}{\alpha} \varsigma^2\big( \frac{x+\beta}{\alpha}\big)  -
\varsigma^2(x)\Big|\;\de x + \Big(1-\frac{1}{\alpha} \Big) \int_{x_0}^{x_2}
\frac{\sigma^2}{\phi(x)} \; \de x + \int_{x_0}^{x_0+a}
\frac{\sigma^2}{\phi(x)}\;\de x\\
&\quad \le \frac{1}{\sigma^2} \int_{x_0}^{x_2}
\left(1-\frac{1}{\alpha}\right)\; \varsigma^2\big(\frac{x+\beta}{\alpha}\big)
\;\de x
+\frac{1}{\sigma^2} \int_{x_0}^{x_2}
\Big|\varsigma^2\big(\frac{x+\beta}{\alpha}\big) - \varsigma^2(x)\Big| \;\de x 
+ \frac{K}{2} \left(1-\frac{1}{\alpha}\right) + a\\
&\quad \le \left(1-\frac{1}{\alpha}\right) K + \frac{1}{\sigma^2}
\int_{x_0}^{x_2} \Big|\varsigma^2\big(\frac{x+\beta}{\alpha}\big) -
\varsigma^2(x)\Big| \;\de x + \frac{K}{2}\left(1-\frac{1}{\alpha}\right) +a \\
&\quad \stackrel{(b)}{\le} \left(1-\frac{1}{\alpha}\right) K +
CK^2\left(1-\frac{1}{\alpha} \right) + C K \, a+ \frac{K}{2}\left(1-\frac{1}{\alpha}\right) +a\\
&\quad \le C(K) a,
\end{split}
\end{eqnarray}
where $(a)$ follows from the fact $\sigma^2 \le \phi(x)$ and Remark~\ref{rem:sigma_prop1}; $(b)$ follows from Remark~\ref{rem:sig_lip}.

Thirdly, recall that $\phi_a(x) = \phi(x-a)$, for $x \in [-1+a,x_0+a)$. Therefore,
\begin{eqnarray}\label{eqn:sigma_term3}
\begin{split}
&\int_{-1+a}^{x_0+a} \Big \{ \frac{\varsigma^2(x) - \sigma^2}{\phi_a(x)}
- \frac{\varsigma^2(x) - \sigma^2}{\phi(x)}\Big\} \; \de x\\
&\quad = \int_{-1}^{x_0} \frac{\varsigma^2(x+a) - \sigma^2}{\phi(x)}\;
\de x 
- \int_{-1+a}^{x_0+a} \frac{\varsigma^2(x) - \sigma^2}{\phi(x)}\; \de x\\
&\quad = \int_{-1}^{x_0} \frac{\varsigma^2(x+a) - \varsigma^2(x)}{\phi(x)}\;
\de x 
- \int_{x_0}^{x_0+a} \frac{\varsigma^2(x) - \sigma^2}{\phi(x)} \;\de x 
+ \int_{-1}^{-1+a} \frac{\varsigma^2(x) - \sigma^2}{\phi(x)} \;\de x \\
&\quad \le 0 + 0 + \int_{-1}^{-1+a}\frac{\sigma^2}{\phi(x)}\;\de x \\
&\quad \le a,
\end{split}
\end{eqnarray}
where the first inequality follows from Remark~\ref{rem:sigma_prop1}
and the second follows from $\phi(x)\ge \sigma^2$.

Finally, using the facts $\sigma^2 \le \varsigma^2(x) \le 2 \sigma^2$,
and $\sigma^2 \le \phi(x)$, we have
\begin{eqnarray}\label{eqn:sigma_term4}
\begin{split}
\int_{-1}^{-1+a} \Big \{ \frac{\varsigma^2(x) - \sigma^2}{\phi_a(x)} -
\frac{\varsigma^2(x) - \sigma^2}{\phi(x)}\Big\} \; \de x
\le a.
\end{split}
\end{eqnarray}

Combining Eqs.~\eqref{eqn:sigma_term1},~\eqref{eqn:sigma_term2},~\eqref{eqn:sigma_term3}, and~\eqref{eqn:sigma_term4} implies the desired result.

\section{Proof of Proposition~\ref{propo:Etilde}}
\label{app:Etilde}
\begin{proof}
Let $\tEnergy_{\Shape}(\phi_a) = \tEnergy_{\Shape,1}(\phi_a) + \tEnergy_{\Shape,2}(\phi_a) + \tEnergy_{\Shape,3}(\phi_a)$, where
\begin{eqnarray}\label{eqn:Ephia_split}
\begin{split}
\tEnergy_{\Shape,1}(\phi_a) &= \int_{x_0+a}^{\ell-1} \{\Info(\Shape
\ast \phi_a(y)^{-1}) - \Info(\phi_a(y-1)^{-1})\} \de y,\\
\tEnergy_{\Shape,2}(\phi_a) &= \int_{a}^{x_0+a} \{\Info(\Shape \ast
\phi_a(y)^{-1}) - \Info(\phi_a(y-1)^{-1})\} \de y,\\
\tEnergy_{\Shape,3}(\phi_a) &= \int_{0}^{a} \{\Info(\Shape \ast
\phi_a(y)^{-1}) - \Info(\phi_a(y-1)^{-1})\} \de y.
\end{split}
\end{eqnarray}
Also let $\tEnergy_{\Shape}(\phi) = \tEnergy_{\Shape,1}(\phi) +  \tEnergy_{\Shape,2,3}(\phi)$, where
\begin{eqnarray}\label{eqn:Ephi_split}
\begin{split}
\tEnergy_{\Shape,1}(\phi) &=  \int_{x_0+a}^{\ell-1} \{\Info(\Shape
\ast \phi(y)^{-1}) - \Info(\phi(y-1)^{-1})\} \de y,\\
\tEnergy_{\Shape,2,3}(\phi) &= \int_{0}^{x_0+a} \{\Info(\Shape \ast
\phi(y)^{-1}) - \Info(\phi(y-1)^{-1})\} \de y.
\end{split}
\end{eqnarray}

The following remark is used several times in the proof.
\begin{remark}\label{rem:ILiptchiz}
For any two values $0 \le \alpha_1 < \alpha_2$,  
\begin{eqnarray}
\Info(\alpha_2) - \Info(\alpha_1)  = \int_{\alpha_1}^{\alpha_2}
\frac{1}{2}\mmse(z) \de z \le \int_{\alpha_1}^{\alpha_2} \frac{1}{2z}
\de z
 = \frac{1}{2}\log\Big(\frac{\alpha_2}{\alpha_1}\Big) \le \frac{1}{2} \left(\frac{\alpha_2}{\alpha_1} - 1 \right).
\end{eqnarray}
\end{remark}
\smallskip

\noindent$\bullet$ Bounding $\tEnergy_{\Shape,1}(\phi_a) - \tEnergy_{\Shape,1}(\phi)$. \\
Notice that the functions $\phi(x)=\phi_a(x)$, for $x_2 \le x$. Also $\kappa/2 < \phi_a(x) \le  \phi(x) \le \Phi_M$, for $x_1< x < x_2$. Let $\alpha = (x_2 - x_1)/(x_2 - x_1 - a)$, and $\beta = (a x_2)/(x_2 - x_1 - a)$. Then, $\phi_a(x) = \phi(\alpha x - \beta)$ for $x \in [x_0+a, x_2)$. Hence,
\begin{align}
&\tEnergy_{\Shape,1}(\phi_a) - \tEnergy_{\Shape,1}(\phi) \nonumber\\
&= \int_{x_0+a}^{x_2+1} \Info(\Shape \ast \phi_a(y)^{-1}) -
\Info(\Shape \ast \phi(y)^{-1}) \; \de y 
+ \int_{x_0+a}^{x_2+1}  \Info( \phi(y-1)^{-1}) -
\Info(\phi_a(y-1)^{-1}) \; \de y \nonumber\\
& \le \frac{1}{2} \int_{x_0+a}^{x_2+1} \frac{1}{\Shape \ast
  \phi(y )^{-1}} (\Shape \ast \phi_a(y )^{-1} - \Shape \ast
\phi(y )^{-1}) \; \de y \nonumber\\
& \le \frac{\Phi_M}{2} \int_{x_0+a}^{x_2+1} \Big(\int_{x_0 + a -
  1}^{x_2} \Shape(y-z) \phi_a(z)^{-1} \; \de z - 
\int_{x_0 + a - 1}^{x_2} \Shape(y-z) \phi(z)^{-1}  \; \de z \Big) \de y \nonumber\\
& = \frac{\Phi_M}{2} \int_{x_0+a}^{x_2+1} \Big(\int_{x_0 + a }^{x_2}
\Shape(y-z) \phi(\alpha z - \beta)^{-1} \; \de z
 + \int_{x_0 + a - 1}^{x_0+a} \Shape(y-z) \phi(z-a)^{-1} \; \de z \nonumber\\
 &\quad \quad \quad \quad \quad \quad \quad - \int_{x_0 + a - 1}^{x_2}
 \Shape(y-z) \phi(z)^{-1} \; \de z \Big) \de y \nonumber\\
& \le \frac{\Phi_M}{2} \int_{x_0+a}^{x_2+1} \Big\{ \int_{x_0 }^{x_2}
\Big(\frac{1}{\alpha} \Shape(y-\frac{z+\beta}{\alpha}) -
\Shape(y-z)\Big) \phi(z)^{-1}\; \de z \nonumber\\
& \quad \quad \quad \quad \quad \quad \quad+ \int_{x_0-1}^{x_0} \Big(
\Shape(y-z-a) - \Shape(y-z)\Big) \phi(z)^{-1}\; \de z \nonumber\\
& \quad \quad \quad \quad \quad \quad \quad + \int_{x_0-1}^{x_0+a-1}
\Shape(y-z) \phi(z)^{-1}\; \de z \Big\} \de y \nonumber\\
& \le \frac{\Phi_M}{2} \int_{x_0+a}^{x_2+1} \Big\{ \int_{x_0 }^{x_2}
\Big(\Shape(y-\frac{z+\beta}{\alpha}) - \Shape(y-z)\Big)
\phi(z)^{-1}\; \de z \nonumber\\
& \quad \quad \quad \quad \quad \quad \quad+ \int_{x_0-1}^{x_0} \Big(
\Shape(y-z-a) - \Shape(y-z)\Big) \phi(z)^{-1}\; \de z \nonumber\\
& \quad \quad \quad \quad \quad \quad \quad + \int_{x_0-1}^{x_0+a-1}
\Shape(y-z) \phi(z)^{-1}\; \de z \Big\} \de y \nonumber\\
& \le C_1(1 - \frac{1}{\alpha} ) + C_2 \frac{\beta}{\alpha} + C_3 \;a \le C_4\; a. \label{eqn:term1_1}
\end{align}
Here $C_1,C_2,C_3,C_4$ are some constants that depend only on $K$ and
$\kappa$. The last step follows from the facts that 
$\Shape(\,\cdot\,)$ is a bounded Lipschitz function and $\phi(z)^{-1} \le 2/\kappa$ for $z \in [x_1,x_2]$. Also, note that in the first inequality, $\Info( \phi(y-1)^{-1}) - \Info(\phi_a(y-1)^{-1}) \le 0$, since $\phi(y-1)^{-1} \le \phi_a(y-1)^{-1}$, and $\Info(\,\cdot\,)$ is nondecreasing.

\smallskip

\noindent$\bullet$ Bounding $\tEnergy_{\Shape,2}(\phi_a) - \tEnergy_{\Shape,2,3}(\phi)$. \\
We have
\begin{eqnarray} \label{eqn:term2_1}
\begin{split}
\tEnergy_{\Shape,2}(\phi_a) &=  \int_{x_0+a-1}^{x_0+a} \{\Info(\Shape
\ast \phi_a(y)^{-1}) - \Info(\phi_a(y-1)^{-1})\} \de y\\
& \quad +  \int_{a}^{x_0+a-1} \{\Info(\Shape \ast \phi_a(y)^{-1}) -
\Info(\phi_a(y-1)^{-1})\} \de y.
\end{split}
\end{eqnarray}
We treat each term separately. For the first term,
\begin{align}
&\int_{x_0+a-1}^{x_0+a} \{\Info(\Shape \ast \phi_a(y)^{-1}) -
\Info(\phi_a(y-1)^{-1})\} \de y \nonumber\\
&= \int_{x_0+a-1}^{x_0+a} \Big\{\Info \left(\int_{x_0+a}^{x_0+a+1}
  \Shape(y-z) \phi_a(z)^{-1} \; \de z +
\int_{x_0+a-2}^{x_0+a} \Shape(y-z) \phi_a(z)^{-1}\; \de z\right) -
\Info(\phi_a(y-1)^{-1})\Big\} \de y \nonumber\\
&= \int_{x_0+a-1}^{x_0+a} \Info \left(\int_{x_0}^{x_0+\alpha}
  \Shape(y-\frac{z+\beta}{\alpha}) \phi(z)^{-1} \; \frac{\de z}{\alpha}
+ \int_{x_0-2}^{x_0} \Shape(y-a-z) \phi(z)^{-1}\; \de z \right)\de y \nonumber\\
& \quad -\int_{x_0-1}^{x_0} \Info (\phi(y-1)^{-1}) \de y \nonumber\\
& = \int_{x_0-1}^{x_0} \Info \left(\int_{x_0}^{x_0+\alpha} \Shape(y+a
  -\frac{z+\beta}{\alpha}) \phi(z)^{-1} \; \frac{\de z}{\alpha}
+ \int_{x_0-2}^{x_0} \Shape(y-z) \phi(z)^{-1}\; \de z \right) \de y \nonumber\\
& \quad -\int_{x_0-1}^{x_0} \Info(\phi(y-1)^{-1}) \de y \nonumber\\
& \le C_5 \; a + \int_{x_0-1}^{x_0} \Big\{\Info
\left(\int_{x_0-2}^{x_0+1} \Shape(y - z) \phi(z)^{-1} \; \de z \right)
- \Info(\phi(y-1)^{-1}) \Big\} \de y \nonumber\\
& =  C_5 \; a + \int_{x_0-1}^{x_0} \Big\{\Info (\Shape \ast
\phi(y)^{-1} ) - \Info(\phi(y-1)^{-1}) \Big\} \de y, \label{eqn:term2_2}
\end{align}
where the last inequality is an application of remark~\ref{rem:ILiptchiz}. More specifically,
\begin{eqnarray*}
\begin{split}
&\Info \left(\int_{x_0}^{x_0+\alpha} \Shape(y+a
  -\frac{z+\beta}{\alpha}) \phi(z)^{-1} \; \frac{\de z}{\alpha} +
\int_{x_0-2}^{x_0} \Shape(y - z) \phi(z)^{-1} \; \de z\right)\\
&\quad - \Info \left(\int_{x_0-2}^{x_0+1} \Shape(y - z) \phi(z)^{-1}
  \; \de z \right)\\
& \le \frac{\Phi_M}{2}\; \left( \int_{x_0}^{x_0+\alpha} \Shape(y+a
  -\frac{z+\beta}{\alpha}) \phi(z)^{-1} \; \frac{\de z}{\alpha} -
  \int_{x_0}^{x_0+1} \Shape(y - z) \phi(z)^{-1} \; \de z \right)\\
& \le \frac{\Phi_M}{2} \int_{x_0+1}^{x_0+\alpha} \Shape(y+a
-\frac{z+\beta}{\alpha}) \phi(z)^{-1} \; \de z \\
& \quad + \frac{\Phi_M}{2} \int_{x_0}^{x_0+1} \left(\Shape(y+a -
  \frac{z+\beta}{\alpha}) - \Shape(y-z) \right) \phi(z)^{-1} \de z\\
&\le C'_1(1 - \frac{1}{\alpha} ) + C'_2 \frac{\beta}{\alpha} + C'_3 \;a \le C_5\; a,  
\end{split}
\end{eqnarray*}
where $C'_1, C'_2,C'_3,C_5$ are constants that depend only on
$\kappa$. Here, the penultimate inequality follows from $\alpha
>1$, and the last one follows from the fact that $\Shape(\,\cdot\,)$ is a bounded Lipschitz function and that $\phi(z)^{-1} \le 2/\kappa$, for $z\in [x_1,x_2]$.

\indent To bound the second term on the right hand side of Eq.~\eqref{eqn:term2_2}, notice that $\phi_a(z) = \phi(z-a)$, for $z \in [-1+a,x_0+a)$, whereby
\begin{eqnarray}\label{eqn:term2_3}
\begin{split}
 \int_{a}^{x_0+a-1} \{\Info(\Shape \ast \phi_a(y)^{-1}) -
 \Info(\phi_a(y-1)^{-1})\} \de y 
  = \int_{0}^{x_0-1} \{\Info(\Shape \ast \phi(y)^{-1}) -
  \Info(\phi(y-1)^{-1})\} \de y.
\end{split}
\end{eqnarray}

\indent Now, using Eqs.~\eqref{eqn:Ephi_split}, \eqref{eqn:term2_1} and~\eqref{eqn:term2_3}, we obtain
\begin{eqnarray}\label{eqn:term2_4}
\begin{split}
\tEnergy_{\Shape,2}(\phi_a) - \tEnergy_{\Shape,2,3}(\phi) &\le C_5\; a  - \int_{x_0}^{x_0+a} \{\Info(\Shape \ast \phi(y)^{-1}) - \Info(\phi(y-1)^{-1})\} \de y\\
&\le C_5\;a +  \int_{x_0}^{x_0+a} \log\left(\frac{\phi(y-1)^{-1}}{\Shape \ast \phi(y)^{-1}}\right)\\
&\le C_5\;a +  a \log(\frac{\Phi_M}{\kappa})  = C_6\;a,
\end{split}
\end{eqnarray}
where $C_6$ is a constant that depends only on $\kappa$.
\smallskip

\noindent$\bullet$ Bounding $\tEnergy_{\Shape,3}(\phi_a).$\\
Notice that $\phi_a(y) \ge
\sigma^2$. Therefore, $\Info(\Shape \ast \phi_a(y)^{-1}) \le
\Info(\sigma^{-2})$, since $\Info(\,\cdot\, )$ is nondecreasing. Recall that $\phi_a(y) = \phi^* < 2\sigma^2$, for $y \in [-1,-1+a)$. Consequently,
\begin{eqnarray} \label{eqn:term3}
\tEnergy_{\Shape,3}(\phi_a) \le \int_{0}^{a} \{\Info(\sigma^{-2}) - \Info({\phi^*}^{-1})\} \de y \le \frac{a}{2} \log\Big(\frac{\phi^*}{\sigma^2}\Big)
 < \frac{a}{2} \log 2, 
\end{eqnarray} 
where the first inequality follows from Remark~\ref{rem:ILiptchiz}.

Finally, we are in position to prove the proposition. Using Eqs.~\eqref{eqn:term1_1},~\eqref{eqn:term2_4} and~\eqref{eqn:term3}, we get
\begin{eqnarray}
\tEnergy_{\Shape}(\phi_a) - \tEnergy_{\Shape}(\phi) \le C_4\;a + C_6\; a +  \frac{a}{2} \log 2
 = C(\kappa,K)\; a .
\end{eqnarray}

\end{proof}

\section{Proof of Proposition~\ref{propo:potential_Etilde}}
\label{app:potential_Etilde}

We have
\begin{eqnarray}\label{eqn:pert_V}
\begin{split}
\int_{-1}^{\ell-1} \big\{V(\phi_a(x)) - V(\phi(x)) \big\} \de x & = 
\int_{x_2}^{\ell-1} \big\{V(\phi_a(x)) - V(\phi(x)) \big\} \de x\\ 
&+ \Big( \int_{x_0+a}^{x_2} V(\phi_a(x)) \de x- \int_{x_0}^{x_2} V(\phi(x)) \de x \Big) \\
& +\Big( \int_{-1+a}^{x_0+a} V(\phi_a(x)) \de x - \int_{-1}^{x_0} V(\phi(x)) \de x  \Big) \\
& +\int_{-1}^{-1+a} V(\phi_a(x))\de x.  
\end{split}
\end{eqnarray}
Notice that the first and the third terms on the right hand side are zero. Also,
\begin{eqnarray}\label{eqn:2-4terms}
\begin{split}
 \int_{x_0+a}^{x_2} V(\phi_a(x)) \de x - \int_{x_0}^{x_2} V(\phi(x)) \de x
 &=  -\frac{a}{x_2-x_0} \int_{x_0}^{x_2} V(\phi(x))\de x,\\
 \int_{-1}^{-1+a} V(\phi_a(x)) \de x  &= a V(\phi^*).
\end{split}
\end{eqnarray}
Substituting Eq.~\eqref{eqn:2-4terms} in Eq.~\eqref{eqn:pert_V}, we get
\begin{eqnarray}\label{eqn:temp_rev}
\int_{-1}^{\ell-1} \big\{V(\phi_a(x)) - V(\phi(x)) \big\} \de x = \frac{a}{x_2-x_0} \int_{x_0}^{x_2} \big\{ V(\phi^*) - V(\phi(x)) \big \}\de x.
\end{eqnarray}

Now we upper bound the right hand side of Eq.~\eqref{eqn:temp_rev}.

By Proposition~\ref{propo:V_properties}, we have
\begin{eqnarray} \label{eqn:V(mu*)_1}
V(\phi^*) \le \frac{\delta}{2} + \frac{\delta - \uRenyi(p_X)}{4} \log(2\sigma^2),
\end{eqnarray}
for $\sigma \in (0,\sigma_2]$. Also, since $\phi(x) >\kappa/2$ for $x \in [x_0,x_2]$, we have $V(\phi(x)) \ge (\delta/2) \log \phi > (\delta/2) \log (\kappa/2)$. Therefore,
\begin{eqnarray}\label{eq:Vphistar-Vphi}
\begin{split}
\frac{1}{2}\int_{-1}^{\ell-1} \big\{V(\phi_a(x)) - V(\phi(x)) \big\} \de x  &= 
\frac{a}{2(x_2-x_0)} \int_{x_0}^{x_2} \big\{ V(\phi^*) - V(\phi(x)) \big \}\de x \\
&< \frac{a}{2} \Big[\frac{\delta}{2} + \frac{\delta - \uRenyi(p_X)}{4} \log(2\sigma^2)
 - \frac{\delta}{2} \log(\frac{\kappa}{2}) \Big].
\end{split}
\end{eqnarray}
It is now obvious that by choosing $\sigma_0 > 0$ small enough, we can ensure that for values $\sigma \in (0,\sigma_0]$, 
\begin{eqnarray}\label{eqn:V(mu*)_2}
\frac{a}{2} \Big[\frac{\delta}{2} + \frac{\delta - \uRenyi(p_X)}{4} \log(2\sigma^2)
 - \frac{\delta}{2} \log(\frac{\kappa}{2}) \Big] < -2C(\kappa,K) a. 
\end{eqnarray}
(Notice that the right hand side of Eq.~\eqref{eqn:V(mu*)_2} does not depend on $\sigma$).

%
\section{Proof of Claim~\ref{claim:slope_phi2}}
\label{app:slope_phi2}
Similar to the proof of Claim~\ref{claim:slope_phi}, the assumption $\int_{-1}^{\ell-1} |\phi(x) - \phi^*| \de x > C\sigma^2 \ell$ implies $\phi(\theta \ell - 1) > C\sigma^2(1-\alpha)$, where 
\begin{equation*}
0 < \theta = \frac{\Phi_M - C\sigma^2}{\Phi_M - {C\sigma^2}{(1-\alpha)}} < 1.
\end{equation*} 

Choose $\sigma$ small enough such that $\phi^* < \phi_1$. Let $\kappa = (\phi_1-\phi^*)(1-\theta)/2$. Applying Lemma~\ref{lem:main_continuum}, there exists $\ell_0$, and $\sigma_0$, such that, $\int_{-1}^{\ell-1} |\phi(x)-\phi^*|\;\de x \le \kappa \ell$, for $\ell > \ell_0$ and $\sigma \in (0,\sigma_0]$. We claim that $\phi(\mu \ell-1) < \phi_1$,
with
\begin{equation*}
\mu = 1 - \frac{\kappa}{\phi_1 - \phi^*} = \frac{1+\theta}{2}.
\end{equation*}
 
 \noindent Otherwise, by monotonicity of $\phi(x)$,
 \begin{eqnarray}
 \begin{split}
 (\phi_1 - \phi^*)  (1-\mu) \ell \le  \int_{\mu\ell-1}^{\ell-1} |\phi(x) - \phi^*|\;\de x < \int_{-1}^{\ell-1} |\phi(x) - \phi^*|\;\de x \le \kappa \ell.
 \end{split}
 \end{eqnarray}
 Plugging in for $\mu$ yields a contradiction.
 
 Therefore, $C\sigma^2(1-\alpha) < \phi(x) < \phi_1$, for $x \in [\theta\ell-1, \mu\ell-1]$, and $(\mu-\theta)\ell = (1-\theta)\ell/2$. Choosing $\ell > \max\{\ell_0, 2K/(1-\theta)\}$ gives the result.

\section{Proof of Proposition~\ref{pro:R_terms}}
\label{app:R_terms}
To prove Eq.~\eqref{eqn:R_terms1}, we write
\begin{eqnarray}
\begin{split}
\int_{-1}^{\ell-1} \{V_{\rob}(\phi_a(x)) - V_{\rob}(\phi(x))\}\;\de x &= -\int_{x_1}^{x_2} \int_{\phi_a(x)}^{\phi(x)} V'(s)\;\de s\; \de x\\
&\le -\int_{x_1}^{x_2} \int_{\phi_a(x)}^{\phi(x)} \frac{\delta}{2s^2} \left(s-\sigma^2 \right)\;\de s\; \de x\\
&= -\frac{\delta}{2}\int_{x_1}^{x_2} \Big\{\log\left(\frac{\phi(x)}{\phi_a(x)}\right) + \frac{\sigma^2}{\phi(x)}- \frac{\sigma^2}{\phi_a(x)} \Big\} \; \de x\\
&\le \frac{\delta}{2} K \log(1-a) + K \frac{\delta a}{2C(1-\alpha)(1-a)} ,
\end{split}
\end{eqnarray}
where the second inequality follows from the fact $C\sigma^2/2 < \phi(x)$, for $x \in [x_1,x_2]$.

Next, we pass to prove Eq.~\eqref{eqn:R_terms2}.
\begin{eqnarray}
\begin{split}
\int_{-1}^{\ell-1}(\varsigma^2(x) - \sigma^2)\left(\frac{1}{\phi_a(x)} - \frac{1}{\phi(x)}\right)\;\de x 
&= \int_{x_1}^{x_2} \frac{\varsigma^2(x) - \sigma^2}{\phi(x)} \left( \frac{1}{1-a} - 1\right)\\ 
&\le \frac{a}{1-a} \int_{x_1}^{x_2} \frac{\sigma^2}{\phi(x)}\de x \le K \frac{a}{C(1-\alpha)(1-a)},
\end{split}
\end{eqnarray}
where the first inequality follows from Remark~\ref{rem:sigma_prop1}.

Finally, we have
\begin{eqnarray} 
\begin{split}
\tEnergy_{\Shape,\rob}(\phi_a) - \tEnergy_{\Shape,\rob}(\phi)
& = \int_{0}^{\ell} \{\Info(\Shape \ast \phi_a(y)^{-1}) - \Info(\Shape \ast \phi(y)^{-1})\} \de y \\
& = \int_{0}^{\ell} \int_{\Shape \ast \phi(y)^{-1}}^{\Shape \ast \phi_a(y)^{-1}} \frac{1}{2}\mmse(s)\;\de s\;\de y\\
&\le \frac{\uMMSE(p_X) + \ve}{2} \int_{0}^{\ell}\int_{\Shape \ast \phi(y)^{-1}}^{\Shape \ast \phi_a(y)^{-1}}s^{-1}\de s\;\de y\\
&\le \frac{\uMMSE(p_X) + \ve}{2} \int_{0}^{\ell} \log\left( \frac{\Shape \ast \phi_a(y)^{-1}}{\Shape \ast \phi(y)^{-1}}\right)\;\de y\\
&\le -\frac{\uMMSE(p_X) + \ve}{2}(K+2) \log(1-a),
\end{split}
\end{eqnarray}
where the first inequality follows from Eq.~\eqref{eq:mmse_approx} and Claim~\ref{claim:slope_phi2}.

\bibliographystyle{amsalpha}

\bibliography{all-bibliography}

\end{document}